\documentclass[11pt]{article}

\oddsidemargin 0.0in
\textwidth 6.5in
\topmargin -0.5in
\textheight 9.0in

\usepackage{graphicx}

\usepackage{epsfig,
amssymb,amsmath,amsthm,float
}
\newtheorem{theorem}{Theorem}
\newtheorem{lemma}[theorem]{Lemma}
\newtheorem{corollary}[theorem]{Corollary}

\newtheorem{claim}{Claim}
\newtheorem{definition}[theorem]{Definition}

%%% new definitions -- reals, epsilons, blah blah.
\def\reals{\mathbb{R}}
\def\R{\reals}

\def\eps{\epsilon}
\def\diam{\mathrm{Diam}}

\def\st{~\mathrm{s.t.}~}
\def\supp{\mathrm{supp}}
\def\conv{\mathrm{conv. hull}}
\def\sym{\bigtriangleup}

\def\cl{\mathrm{cl}}
\def\width{\mathrm{width}}
\def\argmax{\mathrm{argmax}}

%%% new commands -- norms, operators, blah blah.
\newcommand{\ball}[2]{\mathbb{B}_n(#1,#2)}
\newcommand{\abs}[1]{\left|#1\right|}

\newcommand{\prob}[1]{{\sf Pr}\left(#1\right)}

\newcommand{\vol}[1]{\operatorname{vol}\left(#1\right)}
\newcommand{\set}[1]{\left\{#1\right\}} % Set (as in \set{1,2,3})
\newcommand{\dpr}[2]{\left\langle #1, #2 \right\rangle }

\def\vol{\mbox{\sf vol}}
\def\E{\mbox{\sf E}}

\def\R{\mbox{\boldmath R}}
\def\eps{\epsilon}

\begin{document}

\title{Thin Partitions: Isoperimetric Inequalities and Sampling Algorithms for some Nonconvex Families}

\author{Karthekeyan Chandrasekaran\thanks{Georgia Tech. Email: {\tt karthe@gatech.edu, dndadush@gatech.edu, vempala@cc.gatech.edu}}
\and Daniel Dadush\footnotemark[1] \and Santosh Vempala\footnotemark[1]
}

\date{}

\maketitle

\begin{abstract}
Star-shaped bodies are an important nonconvex generalization of convex bodies (e.g., linear programming with violations). Here we present an efficient algorithm for sampling a given star-shaped body. The complexity of the algorithm grows polynomially in the dimension and inverse polynomially in the fraction of the volume taken up by the kernel of the star-shaped body. The analysis is based on a new isoperimetric inequality. Our main technical contribution is a tool for proving such inequalities when the domain is not convex. As a consequence, we obtain a polynomial algorithm for computing the volume of such a set as well. In contrast, linear optimization over star-shaped sets is NP-hard.
\end{abstract}

\section{Introduction}
Convexity has been a cornerstone of fundamental polynomial-time algorithms for continuous as well as discrete problems \cite{GLS}. The basic problems of optimization, integration and sampling in $\reals^n$ can be solved efficiently (to arbitrary approximation) for
convex bodies given only by oracles. More precisely,
\begin{itemize}
\item Optimization. Given a convex function $f: \reals^n \rightarrow \reals$, a convex body $K$ specified by a membership oracle and a point in $K$, and $\eps > 0$, find a point $x^* \in K$ s.t. $f(x^*) \le \min_K f(x)$. This can be done using either the Ellipsoid algorithm \cite{YN, GLS}, Vaidya's algorithm \cite{Va} or the random walk approach \cite{BV, LV3}. For important special cases such as linear programming, there are several alternative approaches.
\item Integration.  Given an integrable logconcave function $f: \reals^n \rightarrow \R_+$ as an oracle, and
$\eps > 0$, find $A$ s.t. $(1-\eps)\int f < A < (1+\eps) \int f$. This is done using a Monte Carlo algorithm based on sampling logconcave densities \cite{DFK,AK,LV2,LV3}.
\item Sampling.  Any logconcave density can be sampled efficiently \cite{LV3}. The sampling algorithm is based on a suitable random walk.
\end{itemize}
For the above problems and related applications, both the algorithms and their analyses rely
heavily on the assumption of convexity or its natural extension, logconcavity. For example, for optimization, all the known algorithms use the fact that a local optimum is a global optimum.
%the Ellipsoid algorithm uses a sequence of ellipsoids and requires that their %centers are in the convex body, a property that requires convexity.
Similarly, a key step in the analysis of sampling algorithms is the derivation of isoperimetric inequalities, which are currently known for logconcave functions. Even the proofs of these inequalities (more on this presently) are based on techniques that fundamentally assume convexity. The main motivation of this paper is the following: {\em for what nonconvex bodies/distributions, can the above basic problems be solved efficiently?}

In this paper, we consider a well-studied generalization of convex bodies called {\em star-shaped} bodies. Star-shaped sets come up naturally in many fields, including computational geometry \cite{PS}, integral geometry, mixed integer programming, etc. \cite{COX}. A star-shaped set has at least one point such that every line through the point has a convex intersection with the set. Alternatively, star-shaped sets can be viewed as the union of convex sets, with all the convex sets having a nonempty intersection. The subset of points that can ``see" the full set is called the kernel of the star-shaped set.

A compelling example of a star-shaped set is the ``$k$-out-of-$m$-inequalities" set, i.e., the set of points that satisfy at least $k$ out of a given set of $m$ linear inequalities, with the assumption that there is a feasible solution to all $m$. In this case the kernel is the intersection of all $m$ inequalities. Another interesting special case is that of ``$k$-out-of-$m$-polytopes", where we have $m$ polytopes with a nonempty intersection and feasible points are required to lie in at least $k$ of the $m$ polytopes. These and other special cases have been studied and applied extensively in operations research \cite{RW94, M94, C05}. Not surprisingly, linear optimization over even these special cases is $NP$-hard \cite{LAN07}.

%%Mention "chance constraints" below or above".
This might suggest that the problems of sampling and integration are also intractable over star-shaped bodies. Indeed convex optimization is reducible to sampling. Our main result (Theorem \ref{SAMPLING1}) is that, to the contrary, {\em star-shaped bodies can be sampled efficiently}, with the complexity growing as a polynomial in $n, 1/\eps, \ln D$ and $1/\eta$, where $n$ is the dimension, $\eps$ is an error parameter denoting distance to the true uniform distribution, $D$ is the diameter of the body and $\eta$ is the fraction of the volume taken up by the kernel; we assume that we are given membership oracles for $S$ as well as for its kernel $K$ and a point $x_0$ so that the unit ball around $x_0$ is contained in $S$. (For the particular cases considered above, these oracles are readily available). The sampling algorithm leads to an efficient algorithm for computing the volume of such a set as well. We note here that linear optimization remains NP-hard even when the kernel takes up most of the volume.

A reader familiar with sampling algorithms for convex bodies will recall that such an analysis crucially uses isoperimetric inequalities. Here we prove isoperimetric inequalities for star-shaped sets (Theorems \ref{ISO1}, \ref{ISO2}). The key technical contribution of this paper is the proof of these inequalities and a new tool we develop for this purpose, which is also of independent interest. We refer to this tool as a {\em thin decomposition} of a set. The other crucial ingredients for efficient sampling (local conductance, coupling, etc...) extend naturally from the convex case to the star-shaped one. Therefore building on this new isoperimetry, we are able to show that the \emph{ball walk} provides an efficient sampler for star-shaped bodies.  

In the rest of this section, we give some context for thin partitions.

The common ingredient of most proofs of isoperimetric inequalities for convex bodies is the localization lemma, introduced by Lov\'{a}sz and Simonovits \cite{LS93}. The approach is based on proof by contradiction. If a certain target inequality is false in $\reals^n$, then there exists an essentially one-dimensional object over which it is still false. The proof is then completed by proving a one-dimensional inequality. This approach has been quite successful for convex bodies and logconcave functions and for proving many other inequalities in convex geometry. These, in turn, have played an essential role in the analysis of algorithms for convex bodies.

%Here we are motivated by two factors. First is the study of isoperimetry and %sampling for nonconvex bodies (and nonlogconcave functions). The second %motivation is the proof of conjectures in $\reals^n$ that are seemingly not reducible %to one-dimensional inequalities (the KLS hyperplane conjecture is a good %example).

However, this approach does not seem to work for nonconvex sets, since the resulting one-dimensional versions  could be nonconvex or nonlogconcave (e.g., for star-shaped bodies, convexity holds along lines that intersect the kernel but is not required along lines that do not intersect the kernel).
To overcome this, we use partitions of $\reals^n$ induced by hyperplanes where each part is ``long" in at most one direction. The overall proof strategy in applying the partition is proof by induction: we combine inequalities on all the parts to derive an inequality for the full set. The advantage of this (as opposed to proof by contradiction) is that a suitably strong inequality does not need to hold for {\em every} part; it suffices to hold for most parts.

\subsection{Preliminaries}
Let $S \subseteq \reals^n$ be a compact body. Define the kernel of $S$ as $K_S := \set{x: x \in S \st \forall y \in S ~ [x,y] \subseteq S}$. We
say $S$ is {\em star-shaped} if $K_S$ is nonempty and let $\eta(S) = \vol(K_S)/\vol(S)$. \\
We denote the $n$-dimensional ball of radius $r$ centered around a point $x$ as $\ball{x}{r}$. The ball walk with step size $\delta$ in a set $S$ is the following Markov process: At a point $x$ in $S$, we pick a uniform
random point in $\ball{x}{\delta}$ and move to the chosen point if it is in $S$ and otherwise stay put.
Let $\pi_S$ denote the uniform measure on $S$ and let $\sigma_m$ denote the measure after $m$ ball walk steps. For two probability distributions $\sigma, \tau$, the {\em total variation} distance is
\[
d_{tv}(\sigma, \tau)=sup_A(\sigma(A)-\tau(A))
\]
%The intersection of a ball of radius $r$ and a halfspace at distance $t$ from the center of the ball %is denoted by $Cap_r(t)$.
\subsection{Results}
We begin with two isoperimetric inequalities for star-shaped bodies, one parametrized using the diameter and the other using the second moment. 
\begin{theorem}\label{ISO1}
Let $S$ be a star-shaped body with diameter $D$ and $\eta(S) > 0$. Then for any measurable partition $(S_1, S_3, S_2)$
of $S$, we have that
\[
\vol(S_3) \geq \frac{\eta(S)}{4D} d(S_1,S_2)\min \set{\vol(S_1), \vol(S_2)}
\]
where $d(S_1,S_2)$ is the minimum distance between a point in $S_1$ and a point in $S_2$.
\end{theorem}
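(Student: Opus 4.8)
The plan is to reduce the $n$-dimensional statement to a one-dimensional inequality by means of a \emph{thin partition} of $S$ into cones emanating from a point of the kernel, and to control the pieces that this reduction leaves uncontrolled by applying the (known) isoperimetric inequality for convex bodies to the kernel itself.

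Assume without loss of generality that $\vol(S_1)\le\vol(S_2)$, write $d=d(S_1,S_2)$ and $K=K_S$, and note that $\vol(K)=\eta(S)\vol(S)>0$ forces $K$ to have nonempty interior (the kernel of any set is convex). Fix $p$ in the interior of $K$ with $\ball{p}{r}\subseteq K$ for some $r>0$. The structural fact we exploit is that $S$ is star-shaped with respect to $p$ in the strong sense: since $p\in K$, every $y\in S$ has $[p,y]\subseteq S$, so $S$ is exactly the union of the radial segments $\{\,p+tu:0\le t\le\rho(u)\,\}$ over unit directions $u$, where $\rho(u)=\max\{t\ge 0:p+tu\in S\}\le D$. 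Consequently $S$ meets every line through $p$ in a single segment, and it meets the ray in direction $u$ in an interval whose initial length-$r$ portion lies in $K$.

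Now choose a small angular parameter and partition the unit sphere centered at $p$ into caps $U_1,\dots,U_N$ of angular diameter below it; the corresponding cones $C_j=\{\,p+tu:u\in U_j,\ t\ge 0\,\}$ partition $\reals^n$, and each $C_j$ is \emph{thin}: it is long only in its single radial direction, with transverse width $O(D\cdot(\text{angular diameter}))$, which we may take arbitrarily small. In polar coordinates around $p$ one has $\vol(A)=\int\!\int_0^\infty \mathbf{1}[p+tu\in A]\,t^{n-1}\,dt\,du$, and the radial weight $t^{n-1}$ is log-concave on $(0,\infty)$. Hence, restricted to a single radial line $\ell_u$, the measure is a log-concave density on an interval of length $\rho(u)\le D$, and $S\cap\ell_u$ is an interval that the partition splits into the traces of $S_1,S_3,S_2$. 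Applying the one-dimensional log-concave isoperimetric inequality \cite{LS93} on each line — and using that a point of $S_1$ and a point of $S_2$ on the same ray are at Euclidean distance $\ge d$ — gives, for every direction $u$ whose ray carries positive mass of both $S_1$ and $S_2$,
\[
g_3(u)\ \ge\ \frac{d}{D}\,\min\{g_1(u),g_2(u)\},\qquad g_i(u):=\int_0^{\rho(u)}\mathbf{1}[p+tu\in S_i]\,t^{n-1}\,dt .
\]

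The remaining task — combining these line estimates into the global bound — is the crux, and it is where the factor $\eta(S)$ enters. Integrating the displayed inequality over the directions on which additionally $g_1(u)\le g_2(u)$ immediately gives $\vol(S_3)\ge\frac{d}{D}\int g_1(u)\,du$ over those directions. The remaining ``$S_1$-heavy'' directions — where $g_1(u)>g_2(u)$, or where the ray misses $S_2$ — must be handled separately: each such ray begins with a length-$r$ segment inside the convex kernel $K$, and since globally $\vol(S_1)\le\vol(S_2)$ a surplus of $S_1$-mass on these rays cannot occur without being compensated, which forces an imbalance between $K\cap S_1$ and $K\cap S_2$; running the convex isoperimetric inequality on $K$ (of diameter $\le D$) then bounds $\min\{\vol(K\cap S_1),\vol(K\cap S_2)\}$, hence the contribution of the $S_1$-heavy directions, in terms of $\vol(S_3)$, and tracking the two regimes together with the fact that the kernel carries only an $\eta(S)$-fraction of the volume yields the constant $\eta(S)/(4D)$. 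The hard part is making this charging precise: the one-dimensional inequality is vacuous on a cone that sees only one of $S_1,S_2$, so one must argue, in the spirit of ``it suffices for most parts,'' that the global volume balance and the behaviour of the partition on the kernel leave no room for such cones to carry much $S_1$-mass without $S_3$ being correspondingly large. Formalizing the thin-partition lemma and this combination step is the main obstacle; by comparison, the one-dimensional log-concave inequality and the radial convexity of $S$ from a kernel point are routine.
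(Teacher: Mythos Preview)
Your decomposition is different from the paper's, and the difference is exactly where the gap lies. The paper does \emph{not} partition $S$ radially from a kernel point. It builds an $\eps$-thin decomposition by repeated \emph{hyperplane bisection} (Theorem~\ref{thm:decomposition}), each cut chosen so as to equipartition the signed density $f=\vol(S_2)\,1_{S_1}-\vol(S_1)\,1_{S_2}$. Consequently every piece $P$ satisfies $\vol(S_1\cap P)\vol(S_2)=\vol(S_1)\vol(S_2\cap P)$, i.e.\ carries the \emph{global} $S_1{:}S_2$ ratio $\alpha$. Lemma~\ref{lem:near-convex-decomposition} then shows that the pieces meeting the deep kernel (total volume $\ge(1-\eps)\eta(S)\vol(S)$) are $\eps$-convex. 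The combination is now automatic: on each such piece either $S_1$ and $S_2$ are both tiny, forcing $S_3$ to be at least half the piece, or $S_1$ occupies at least $\tfrac{1}{2}\tfrac{\alpha}{\alpha+1}$ of the piece and the convex isoperimetric inequality of \cite{LS93} applied to the piece (not to $K$) gives $\pi_P(S_3)\gtrsim (d/D)\,\pi_P(S_1)$. Summing yields the theorem.

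Your radial cones do enjoy one property for free---every cone contains $p\in K_S$, so every thin cone is $\eps$-convex---but they lack the equidistribution property, and your proposed substitute (convex isoperimetry on $K$) does not close the gap. Concretely: take $S_1$ supported entirely in a single narrow solid angle from $p$, lying far from $K$. On that cone $g_2\equiv 0$ and the line estimate is vacuous; on every other cone $g_1\equiv 0$ and the line estimate is again vacuous; and $\min\{\vol(K\cap S_1),\vol(K\cap S_2)\}=0$, so the kernel inequality gives nothing either. The observation that ``each ray begins with a length-$r$ segment inside $K$'' does not help, because the radial weight $t^{n-1}$ puts essentially no mass near $p$; the kernel portion of a ray carries only a $(\rho_K(u)/\rho(u))^n$ fraction of the ray's mass, which need not be bounded below. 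So there is no mechanism in your sketch that transfers control from $K$ to the far end of the offending cone. In the paper's scheme this situation never arises: the bisection would keep refining that cone until every subpiece carries the global ratio $\alpha$, and then one applies convex isoperimetry on the $\eps$-convex pieces themselves. The equidistribution built into the decomposition is the missing idea, not a formality.
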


The above theorem is nearly the best possible as shown by a construction in Theorem \ref{thm:iso_upper_bound}. 

\begin{theorem}\label{ISO2}
Let $S$ be a star-shaped body with $\eta(S) > 0$ and $M_S = \E_S[\|X-\mu_S\|^2]$ where $\mu_S$ is the centroid of $S$.
Then for any measurable partition $(S_1, S_3, S_2)$ of $S$, we have that either
\[
\vol(S_3) \geq \frac{\eta(S)}{4} \vol(S)
\]
or
\[
\vol(S_3) \geq \frac{\eta(S)^\frac{3}{2}}{16\sqrt{M_S}} d(S_1,S_2)\min \set{\vol(S_1), \vol(S_2)}
\]
where $d(S_1,S_2)$ is the minimum distance between a point in $S_1$ and a point in $S_2$.
\end{theorem}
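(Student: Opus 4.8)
The plan is to \emph{derive Theorem~\ref{ISO2} from Theorem~\ref{ISO1}} by truncating $S$ to a Euclidean ball around its centroid, applying the diameter-based inequality to the (still star-shaped) truncation, and charging the mass we throw away to the second moment $M_S$. Fix a measurable partition $(S_1,S_3,S_2)$; by symmetry assume $\vol(S_1)\le\vol(S_2)$, and write $\mu=\mu_S$, $\eta=\eta(S)$, $d=d(S_1,S_2)$, $t=\vol(S_1)=\min\{\vol(S_1),\vol(S_2)\}$. We may assume the first alternative fails, i.e.\ $\vol(S_3)<\tfrac{\eta}{4}\vol(S)$, for otherwise there is nothing to prove. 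For a radius $R$ to be chosen, set $B=\ball{\mu}{R}$ and $S'=S\cap B$.

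First I would record three facts about $S'$. (i) $S'$ is star-shaped with $K_{S'}\supseteq K_S\cap B$: if $x\in K_S\cap B$ and $y\in S'$ then $[x,y]\subseteq S$ because $x\in K_S$, and $[x,y]\subseteq B$ by convexity, so $[x,y]\subseteq S'$; hence $\vol(K_{S'})\ge\vol(K_S)-\beta$ where $\beta:=\vol(S\setminus B)$. (ii) By Markov's inequality applied to $\|X-\mu\|^2$ under the uniform law on $S$, $\beta\le M_S\vol(S)/R^2$. (iii) $\diam(S')\le 2R$. Since $\vol(S')\le\vol(S)$, choosing $R$ so that $\beta\le\tfrac12\eta\vol(S)$ — i.e.\ $R\ge\sqrt{2M_S/\eta}$ — gives $\eta(S')\ge\tfrac12\eta$. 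Applying Theorem~\ref{ISO1} to $S'$ with the induced partition $A_i:=S_i\cap B$, and using $\vol(A_3)\le\vol(S_3)$, $d(A_1,A_2)\ge d$, $\diam(S')\le 2R$, yields
\[
\vol(S_3)\;\ge\;\frac{\eta(S')}{4\diam(S')}\,d(A_1,A_2)\min\{\vol(A_1),\vol(A_2)\}\;\ge\;\frac{\eta}{16R}\,d\,\min\{\vol(A_1),\vol(A_2)\}.
\]
If $t\ge\eta\vol(S)$, take $R=\sqrt{2M_S/\eta}$; then $\beta\le\tfrac12\eta\vol(S)\le\tfrac12 t$, so $\vol(A_i)\ge\vol(S_i)-\beta\ge\tfrac12\vol(S_i)\ge\tfrac12 t$, and the display gives $\vol(S_3)\ge\frac{\eta^{3/2}}{c\sqrt{M_S}}\,d\,t$ for an absolute constant $c$ — the second inequality of the theorem, up to the constant (which a more careful choice of $\beta$, or of the partition of $S'$, tightens).

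The hard part, and the main obstacle, is the complementary case $t<\eta\vol(S)$: now the ball of radius $\sqrt{2M_S/\eta}$ may discard essentially all of $S_1$, and enlarging $R$ to $\sqrt{2M_S\vol(S)/t}$ (the radius needed to retain half of $S_1$) inflates $\diam(S')$ so much that Theorem~\ref{ISO1} only yields $\vol(S_3)\gtrsim\eta\,t^{3/2}d/\sqrt{M_S\vol(S)}$, which is weaker than the target when $t\ll\vol(S)$. Here one must use the first alternative. I would split the kernel, $K_S=(K_S\cap S_1)\cup(K_S\cap S_3)\cup(K_S\cap S_2)$: if $\vol(K_S\cap S_3)\ge\tfrac{\eta}{4}\vol(S)$ then $\vol(S_3)\ge\tfrac{\eta}{4}\vol(S)$ and we are done; otherwise the convex set $K_S$, of volume $\eta\vol(S)>t\ge\vol(K_S\cap S_1)$, is concentrated in $S_2$, so for every $x\in S_1$ and every $y$ in a subset of $K_S\cap S_2$ of volume $\approx\eta\vol(S)$ the segment $[x,y]\subseteq S$ runs from $S_1$ to $S_2$ and therefore meets $S_3$. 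Integrating over this family of segments (the Fubini-type mechanism behind the thin-decomposition proof of Theorem~\ref{ISO1}), together with the elementary bound $M_S\ge\frac{\vol(S_1)\vol(S_2)}{\vol(S)\,(\vol(S_1)+\vol(S_2))}\,d^2$ — which caps $d$ and shows the refined target is only $O(\eta^2)\vol(S)$ in this regime — should produce the required lower bound on $\vol(S_3)$. Reconciling the two regimes and collapsing the constants to the stated $16$ is then routine; all the genuinely non-convex content sits in this last step.
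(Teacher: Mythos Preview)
Your route is genuinely different from the paper's. The paper does \emph{not} black-box Theorem~\ref{ISO1}; instead it reruns the thin-decomposition argument of Theorem~\ref{ISO1} but replaces the diameter isoperimetry for convex sets by the second-moment (KLS) isoperimetry $\pi_Q(S_3^Q)\ge \frac{d}{2\sqrt{M_Q}}\min\{\pi_Q(S_1^Q),\pi_Q(S_2^Q)\}$ on each near-convex piece $Q$. The only new ingredients beyond the proof of Theorem~\ref{ISO1} are (i) Lemma~\ref{lem:var_mix}, which says $M_S\ge\sum_{P}\frac{\vol(P)}{\vol(S)}M_P$ for any decomposition, and (ii) a Markov step to pass to the sub-collection $N^*\subseteq N^+$ of pieces with $M_P\le \frac{4}{(1-\eps)\eta}M_S$, which still carries $\ge\tfrac14(1-\eps)\eta\vol(S)$ of mass. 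This handles all sizes of $\vol(S_1)$ uniformly and lands exactly on the constant $16$.

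Your truncation idea is elegant in the regime $t\ge\eta\vol(S)$: with $R=\sqrt{2M_S/\eta}$ the star-shapedness of $S'=S\cap B$ and the bound $\eta(S')\ge\eta/2$ are correct, and Theorem~\ref{ISO1} yields the stated inequality up to a constant (you get $32\sqrt{2}$ rather than $16$; I do not see how to recover $16$ by tuning $\beta$, since you lose a factor $2$ each from $\diam(S')\le 2R$, from $\eta(S')\ge\eta/2$, and from $\vol(A_i)\ge t/2$).

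The real problem is the regime $t<\eta\vol(S)$. What you write there is a sketch, not an argument, and I do not see how to close it. Knowing that every segment from $x\in S_1$ to $y\in K_S\cap S_2$ lies in $S$ and carries a sub-segment of length $\ge d$ inside $S_3$ does not by itself bound $\vol(S_3)$: these segments overlap wildly, their lengths are unbounded (there is no diameter control in this theorem), and there is no Crofton-type formula that converts ``a $2n$-dimensional family of segments each meeting $S_3$ in length $\ge d$'' into a volume lower bound for $S_3$ without precisely the disjoint-needle structure that the thin decomposition provides. Your parenthetical ``the Fubini-type mechanism behind the thin-decomposition proof of Theorem~\ref{ISO1}'' is exactly the missing content; once you invoke it you are no longer black-boxing Theorem~\ref{ISO1}, you are reproving it. Your auxiliary bound $M_S\ge \frac{\vol(S_1)\vol(S_2)}{\vol(S)(\vol(S_1)+\vol(S_2))}d^2$ (which, incidentally, needs a proof --- the easy version $M_S\ge\frac{\vol(S_1)\vol(S_2)}{\vol(S)^2}d^2$ follows from $\E\|X-Y\|^2=2M_S$ for i.i.d.\ uniform $X,Y$, but your denominator is smaller) only tells you the \emph{target} is at most $O(\eta^2)\vol(S)$; it does not help you prove $\vol(S_3)$ is at least that. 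So as written, the small-$t$ case is a genuine gap.
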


Next, we turn to the complexity of sampling. We assume that we have an oracle for the star-shaped body $S$, a lower bound on $\eta$ and an $M$-warm start $\sigma_O$ for the random walk, i.e. an initial distribution on $S$ such that $\forall A\subseteq S$, $\sigma_0(A)\leq M\pi_S(A)$.

\begin{theorem}\label{SAMPLING1}
Let $S$ in $\reals^n$ be a star-shaped body with kernel $K_S$ and $\eta(S) > \eta$, and diameter $D$. Let $\pi_S$ be uniform distribution over $S$ and $\eps > 0$. Given a random point $x_0$ from a distribution $\sigma_0$ such that $\sigma_0$ is an $M$-warm start for $\pi_S$, then there exists an absolute constant $C$ such that, after
\[
m > \frac{Cn^2D^2M^2}{\eta^2\eps^2}\log\frac{2M}{\eps}
\]
steps of the ball walk with $\frac{\eps}{8M\sqrt{n}}$-steps, we have
$d_{TV}(\sigma_m, \pi_S) < \eps$.
\end{theorem}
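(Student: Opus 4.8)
\medskip
\noindent\textbf{Proof proposal.} The plan is the standard one for geometric random walks: lower bound the conductance of the ball walk and feed it into the Lov\'{a}sz--Simonovits mixing bound for an $M$-warm start, with Theorem~\ref{ISO1} playing the role that the convex isoperimetric inequality plays in the classical argument. Fix the step size $\delta=\eps/(8M\sqrt n)$, write $P_x$ for the one-step distribution from $x$, and set $\ell(x)=\vol(S\cap\ball{x}{\delta})/\vol(\ball{x}{\delta})$ for the local conductance. For $x\neq y$ in $S$ the transition density is $\frac{1}{\vol(\ball{x}{\delta})}\mathbf 1[|x-y|\le\delta]\mathbf 1[y\in S]$, which is symmetric in $x,y$, so the walk (in its lazy version, if needed for aperiodicity) is reversible with respect to $\pi_S$ and the conductance/mixing machinery applies.

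First I would show that the set of ``stuck'' points $B=\{x\in S:\ell(x)<3/4\}$ has $\pi_S(B)$ small --- of order $\eps/M$ at the chosen $\delta$. Averaging, $1-\E_{\pi_S}[\ell]$ equals the relative volume of the outer $\delta$-shell $\{y\notin S:d(y,S)\le\delta\}$, and here the kernel does the work done by the supporting hyperplane in the convex case: picking $z\in K_S$, star-shapedness with respect to $z$ gives $S\subseteq z+\lambda(S-z)$ for $\lambda\ge1$, and since the whole kernel sees the unit ball contained in $S$, the kernel itself contains a ball of radius $r$ with $\vol(\ball{}{r})\ge\eta\,\vol(\ball{}{1})$; one then checks that the outer $\delta$-shell of $S$ sits inside $z+(1+O(\delta/r))(S-z)$, so its relative volume is $O(n\delta/r)$, and Markov's inequality controls $\pi_S(B)$. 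Next comes the standard one-step coupling lemma: there are absolute constants $t,\kappa>0$ so that $d_{tv}(P_x,P_y)\le 1-\kappa$ whenever $x,y\notin B$ and $\|x-y\|\le t\delta/\sqrt n$; this uses only that two radius-$\delta$ balls with centers at distance $\le t\delta/\sqrt n$ overlap in a $(1-O(t))$ fraction of their volume, combined with $\ell(x),\ell(y)\ge 3/4$, so nonconvexity of $S$ is irrelevant to it.

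For the conductance, given measurable $A$ with $s\le\pi_S(A)\le\tfrac12$ where $s=\Theta(\eps/M)$, I would work inside $S\setminus B$: put $A_1=\{x\in A\setminus B:P_x(S\setminus A)<\kappa/4\}$ and $A_2=\{x\in(S\setminus A)\setminus B:P_x(A)<\kappa/4\}$. For $u\in A_1,v\in A_2$ one has $d_{tv}(P_u,P_v)\ge 1-P_u(S\setminus A)-P_v(A)>1-\kappa/2$, so the coupling lemma forces $\|u-v\|>t\delta/\sqrt n$, i.e.\ $d(A_1,A_2)\ge t\delta/\sqrt n$; Theorem~\ref{ISO1} applied to the partition $(A_1,\,S\setminus(A_1\cup A_2),\,A_2)$ then gives $\vol(S\setminus(A_1\cup A_2))\ge\frac{\eta t\delta}{4D\sqrt n}\min\{\vol(A_1),\vol(A_2)\}$. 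Reversibility gives $\int_A P_x(S\setminus A)\,d\pi_S\ge\frac{\kappa}{8}\bigl(\pi_S(A\setminus A_1)+\pi_S((S\setminus A)\setminus A_2)\bigr)$ and the parenthesis is at least $\pi_S(S\setminus(A_1\cup A_2))-\pi_S(B)$; combining these with the routine case split on whether $\vol(A_1)\le\tfrac12\vol(A)$ (resp.\ $\vol(A_2)\le\tfrac12\vol(S\setminus A)$), and cashing in $\pi_S(B)\le s/4$ --- the only place the choice of $\delta$ is used --- yields an $s$-conductance bound $\Phi_s=\Omega\!\bigl(\eta\delta/(D\sqrt n)\bigr)=\Omega\!\bigl(\eta\eps/(MDn)\bigr)$. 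Finally the Lov\'{a}sz--Simonovits theorem bounds $d_{tv}(\sigma_m,\pi_S)$ by $Ms+M(1-\Phi_s^2/2)^m$, which falls below $\eps$ once $m=\Omega\!\bigl(\Phi_s^{-2}\log(M/\eps)\bigr)=\Omega\!\bigl(n^2D^2M^2\eta^{-2}\eps^{-2}\log(M/\eps)\bigr)$, giving the stated bound.

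The main obstacle is the stuck-points estimate: in the convex case ``the ball walk does not get stuck near the boundary'' is immediate by dilating the body, but for star-shaped $S$ convexity fails precisely along the directions that avoid the kernel, so one must instead exploit that the kernel is seen by all of $S$ and has volume at least $\eta\,\vol(S)$ to bound the outer $\delta$-shell; making this shell estimate come out at the scale that matches the stated step size (and hence the final $\eps^{-2}$ dependence) is the delicate quantitative point. Everything else --- the one-step coupling, the conductance decomposition above, and the warm-start mixing bound --- transfers from the convex theory essentially verbatim, with Theorem~\ref{ISO1} substituted for the convex isoperimetric inequality at exactly one step.
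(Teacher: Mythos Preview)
Your overall architecture --- bound the points of small local conductance, one-step coupling for nearby good points, the $A_1/A_2/A_3$ split fed into Theorem~\ref{ISO1}, then the Lov\'asz--Simonovits $s$-conductance mixing bound --- is exactly the paper's, and the coupling, conductance and mixing steps go through as you say.

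The gap is in the stuck-points estimate, which you rightly flag as the delicate step. Your dilation argument actually requires $\ball{z}{r}\subseteq K_S$: writing a shell point $y=x+u$ (with $x\in S$, $|u|\le\delta$) as $\frac{r}{r+\delta}x+\frac{\delta}{r+\delta}\bigl(z+\tfrac{r}{\delta}u\bigr)$, you need $z+\tfrac{r}{\delta}u$ to lie in the \emph{kernel}, not merely in $S$, for this convex combination to land in $S$. But the assertion that $K_S$ contains such a ball does not follow from anything available --- $\vol(K_S)\ge\eta\,\vol(S)$ is a volume bound, not an inradius bound, and a convex set of prescribed volume can have tiny inradius. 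Even granting a unit ball in $K_S$, the shell estimate gives only $\E_{\pi_S}[1-\ell]\le(1+\delta)^n-1\approx n\delta$, which at $\delta=\eps/(8M\sqrt n)$ is of order $\sqrt n\,\eps/M$, a factor $\sqrt n$ too large for $\pi_S(B)\le s/4$; you would be forced to shrink $\delta$ by $\sqrt n$ and end with an $n^3D^2$ bound rather than $n^2D^2$. The paper handles this differently: under the normalization $\ball{0}{1}\subseteq S$ with $0\in K_S$, a single kernel point sees all of $\partial S$, so a cone decomposition from the origin gives $\vol_{n-1}(\partial S)\le n\,\vol(S)$; combined with a direct estimate on the $2n$-measure of crossing pairs $\{(x,y):x\in S,\,y\notin S,\,|x-y|\le\delta\}$ (Lemma~\ref{lem:acrossSurfaceMeasure} and Corollary~\ref{corr:avgLocalCond}) this yields the sharper $\E[1-\ell]\le\tfrac12\sqrt n\,\delta$, exactly what the stated step size needs. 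Swap your shell bound for this surface-area argument and the rest of your sketch matches the paper up to cosmetics (they apply Theorem~\ref{ISO1} inside $S_r=\{\ell\ge 3/4\}$, which they verify is again star-shaped, rather than in $S$ itself).
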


\begin{theorem}\label{SAMPLING2}
Let $S$ in $\reals^n$ be a star-shaped body with kernel $K_S$ and $\eta (S) >\eta$. Suppose we are given
membership oracles for $K_S$ and $S$ and a point $x_0$ with $\ball{x_0}{1} \subseteq K_S \subseteq S \subseteq
\ball{0}{D}$. Then, for any $\eps > 0$, a nearly random point $x$ from $S$ can be produced using amortized $O^*(n^3/\eta^4\eps^2)$ %\footnote{O^*(\cdot) suppresses logarithmic terms in $n,D,\eps,\eta$.}
oracle calls with the guarantee that the distribution $\sigma$ of $x$ satisfies $d_{TV}(\sigma,\pi_S) < \eps$.
\end{theorem}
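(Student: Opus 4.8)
The plan is to follow the conductance-based template used for sampling convex bodies by the ball walk (Lov\'{a}sz--Simonovits \cite{LS93}, Kannan--Lov\'{a}sz--Simonovits, Lov\'{a}sz--Vempala), substituting Theorem~\ref{ISO2} for the convex isoperimetric inequality and supplying the two ingredients that are genuinely new in the star-shaped setting: a local-conductance bound for the ball walk on a nonconvex star-shaped body, and a rounding reduction that keeps every intermediate body star-shaped with kernel fraction $\Omega(\eta)$. I would carry this out in three stages: (1) reduce to a well-rounded body and produce an $O(1)$-warm start; (2) bound the mixing time of the ball walk on a well-rounded star-shaped body; (3) amortize the one-time rounding cost over many samples.

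\emph{Stage 1 (rounding).} Since $\ball{x_0}{1}\subseteq K_S$, the kernel point $x_0$ is in hand, and for every $r\ge 1$ the truncation $S\cap\ball{x_0}{r}$ is again star-shaped with $x_0$ in its kernel (if $z\in K_S\cap\ball{x_0}{r}$ and $y\in S\cap\ball{x_0}{r}$, then $[z,y]\subseteq S$ and $[z,y]\subseteq\ball{x_0}{r}$). Using the membership oracle for $K_S$ and classical convex-body rounding, first bring $K_S$ to well-rounded position; then, as in the KLS scheme, walk through the bodies $S\cap\ball{x_0}{2^{i/n}}$, using the near-uniform sample from one body as an $O(1)$-warm start for the next (consecutive bodies have volume ratio $\le 2$), interleaving the ball-growing with covariance re-estimations so that the current body stays near-isotropic. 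The non-classical points to check are (i) that star-shapedness and (ii) a lower bound $\Omega(\eta)$ on the kernel fraction survive truncation and the volume-preserving affine maps; (ii) needs care, since truncation can lower the kernel fraction, and is handled by growing the ball past the (now bounded) extent of $K_S$ before it can do harm and by using the $K_S$-oracle to certify the ratio at each phase. This stage outputs a linear image $S'$ of $S$ that is well-rounded (so $M_{S'}=O^*(n)$ up to an exponentially small part), has $\eta(S')=\Theta(\eta)$, and comes with an $O(1)$-warm start; it uses $O^*(\mathrm{poly}(n)/\eta^{O(1)})$ oracle calls and is run once.

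\emph{Stage 2 (mixing for well-rounded $S'$).} The core is a conductance bound. Theorem~\ref{ISO2}, applied to $S'$ with $M_{S'}=O^*(n)$, gives an isoperimetric coefficient of order $\eta^{3/2}/\sqrt n$ (its first alternative only helps). The new ingredient is a local-conductance lower bound: for step size $\delta=\Theta(\eps\eta/\sqrt n)$, the one-step ball-walk measure $P_x$ (uniform on $\ball{x}{\delta}\cap S'$) satisfies $P_x(S')\ge$ const for every $x$ outside an exceptional set of $\pi_{S'}$-measure $O(\eps)$. For convex bodies this is the half-ball argument; for a star-shaped body I would use instead that $K_{S'}$ is convex and occupies an $\Omega(\eta)$ fraction, so after well-rounding all but an $O(\eps)$-fraction of points $x\in S'$ ``see'' a convex subset of $K_{S'}$ of measure $\Omega(\eta)$ at bounded distance, and the cone from $x$ over such a subset fills a constant fraction of $\ball{x}{\delta}$. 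Feeding the isoperimetric coefficient, this local-conductance bound, and the one-step overlap estimate $\|P_x-P_y\|_{tv}\le 1-\Omega(1)$ for $\|x-y\|=O(\delta/\sqrt n)$ (which needs only that $x$ and $y$ see a common chunk of $K_{S'}$) into the $s$-conductance argument of Lov\'{a}sz--Simonovits with $s=\Theta(\eps)$ produces a conductance of order $\eps\,\eta^{O(1)}/n^{3/2}$, hence $d_{TV}(\sigma_m,\pi_{S'})<\eps$ after $m=O^*(n^3/\eta^4\eps^2)$ steps from the $O(1)$-warm start --- essentially Theorem~\ref{SAMPLING1} with $D^2$ replaced by the post-rounding value and the isoperimetry of Theorem~\ref{ISO2}; tracking the $\eta$-exponents through the local-conductance estimate and the step size yields the power $\eta^{-4}$.

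\emph{Stage 3 (amortization) and the main obstacle.} Stage~1 is paid once; afterwards each further near-independent uniform sample from $S'$ --- and, pulling back the computed linear map, from $S$ --- costs the warm-start mixing time of Stage~2, retaining the previous output as the warm start for the next run, so over a batch of $\Omega(\mathrm{poly}(n))$ samples the per-sample cost is $O^*(n^3/\eta^4\eps^2)$ oracle calls, with $\ln D$ absorbed into $O^*$. I expect Stage~2 to be the main obstacle: establishing the local-conductance bound for a nonconvex body and making it mesh with the \emph{two-alternative} form of Theorem~\ref{ISO2} inside the conductance proof, which forces one to keep both the exceptional low-local-conductance set and the local ``$\eta$-loss'' polynomially bounded in $\eta$ so that the $s$-conductance slack $s\sim\eps$ suffices, and to verify the one-step coupling estimate across a possibly nonconvex body. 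Stages~1 (modulo the $\Omega(\eta)$-preservation check) and 3 are essentially the classical machinery.
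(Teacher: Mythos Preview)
Your high-level skeleton---rounding, then $s$-conductance from Theorem~\ref{ISO2} combined with local conductance and one-step coupling, then Lemma~\ref{lemma:mix}---is the paper's skeleton. But the paper disposes of your two ``main obstacles'' by much simpler means, and both of your proposed workarounds have genuine gaps.

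\emph{Rounding and the bound on $M_S$.} The paper never attempts to make $S$ itself isotropic. It only rounds the \emph{convex} body $K_S$ using off-the-shelf convex machinery, and then invokes Lemma~\ref{lem:isotropy} (which rests on the restricted-logconcavity Lemma~\ref{lem:kernel-concave}): once $K_S$ is isotropic, every directional second moment of $S$ is $O(1/\eta^2)$, hence $M_S=O(n/\eta^2)$. This single lemma replaces your entire truncation/re-estimation scheme. Your scheme, by contrast, needs each intermediate body $S\cap\ball{x_0}{r}$ to have kernel fraction $\Omega(\eta)$; you flag this but do not prove it, and it can genuinely fail for intermediate $r$, since spikes of $S$ may enter the ball long before most of $K_S$ does. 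The warm start in the paper is simply a uniformly random point of $K_S$, which is a $(1/\eta)$-warm start for $\pi_S$; no $O(1)$-warm start is needed.

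\emph{Local conductance and coupling.} Your ``cone from $x$ over a chunk of $K_{S'}$'' argument does not yield a constant lower bound on $\ell(x)$: the fraction of $\ball{x}{\delta}$ covered by that cone is the solid angle subtended by $K_{S'}$ at $x$, and for $x$ at distance comparable to the diameter this solid-angle fraction is of order $(\rho/d)^{\,n-1}$ (with $\rho$ the inradius of $K_{S'}$), i.e.\ exponentially small in $n$, not $\Omega(1)$. The paper instead observes that the standard convex surface-area argument goes through verbatim for star-shaped bodies: writing $\vol(S)$ as a union of cones from a kernel point gives $\vol(S)\ge\tfrac{1}{n}\vol_{n-1}(\partial S)$, whence the average local conductance is at least $1-r\sqrt{n}/2$ (Corollary~\ref{corr:avgLocalCond}); Markov's inequality then produces the set $S_r=\{x:\ell(x)\ge 3/4\}$ with $\vol(S_r)\ge(1-2r\sqrt{n})\vol(S)$, and $S_r$ is again star-shaped with kernel fraction at least $(1-2r\sqrt{n})\eta$ (Lemma~\ref{lem:goodCondBody}). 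The coupling step (Lemma~\ref{lem:coupling}) requires only $\ell(u),\ell(v)\ge 3/4$, not any ``common kernel visibility.'' Plugging these into the $s$-conductance computation yields Lemma~\ref{lem:sCondMeanSqDist}, and after substituting $M_S=O(n/\eta^2)$ one gets $\Phi_s\ge c\,\eta^{5/2}s/n^{3/2}$; the mixing bound then follows exactly as in the proof of Theorem~\ref{SAMPLING1} with $M=1/\eta$.
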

We note that up to the polynomial in $\eta$, this matches the best-known bounds for sampling convex bodies. Due to page restrictions, many of the proofs appear in an appendix.
\section{Thin Decompositions via Bisection}
\begin{definition}\label{def:thin}
Let $S \subseteq \reals^n$. We define $S$ to be a compact body if $S$ is compact, has non-empty interior, and satisfies $\cl(S^\circ) =
S$, where $\cl(S^\circ)$ denotes the closure of the interior of $S$.

Let $S \subseteq \reals^n$ be a compact body. A {\em decomposition} of $S$ is a finite collection ${\cal P} =
\set{P_1,\ldots,P_k}$ of compact bodies such that
\begin{enumerate}
\item $S = \cup_{i=1}^k P_i$
\item $P_i \cap P_j = \partial P_i \cap \partial P_j$, $1 \leq i < j \leq k$
\end{enumerate}
Furthermore, we define a decomposition ${\cal P}$ to be {\em $\eps$-thin} if each $P \in {\cal P}$ is contained
in a cylinder of radius at most $\eps$.
\end{definition}

For completeness, we state without proof the following simple lemma.
\begin{lemma}
\label{lem:decomposition}
Let $S \subseteq \reals^n$ be a compact body.
\begin{enumerate}
\item Let $N$ be a decomposition of $S$, and let $S' \subseteq S$ be a compact body. Then \\ $N' = \set{\cl((P \cap
S)^\circ): P \in N, P \cap S^\circ \neq \emptyset}$ is a valid decomposition of $S'$.
\item Let $N$ be a decomposition of $S$, and let $N'$ be a decomposition of an element $P \in N$. Then $N \cup N'
\setminus P$ is a valid decomposition of $S$.
\end{enumerate}
\end{lemma}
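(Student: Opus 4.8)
(The intended reading of part~1 surely has $S'$ in place of $S$ inside the formula, i.e.\ $N' = \set{\cl((P\cap S')^\circ) : P \in N,\ P \cap (S')^\circ \neq \emptyset}$; I work with that version.) The plan for part~1 is to verify, in order, that every listed set is a compact body, that these sets cover $S'$, and that any two of them meet only along their boundaries, leaning throughout on the defining property $\cl(P^\circ)=P$ of a compact body. The one elementary fact I would isolate first and reuse repeatedly is: for any open $U$ contained in a set $P$, one has $\cl(U)^\circ \subseteq P^\circ$ --- indeed a point of $\cl(U)^\circ$ has a whole neighbourhood inside $\cl(U)\subseteq \cl(P)=P$, hence lies in $P^\circ$. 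Granting this, the ``compact body'' check is mechanical: $C_P := \cl((P\cap S')^\circ)$ is closed and bounded; the hypothesis $P\cap (S')^\circ\neq\emptyset$ together with $\cl(P^\circ)=P$ gives that $P^\circ\cap(S')^\circ$ is nonempty and open, so $(P\cap S')^\circ$ is nonempty and $C_P$ has nonempty interior; and $U\subseteq \cl(U)^\circ$ for open $U$ forces $\cl(C_P^\circ)=C_P$.

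For the covering property the main point --- which I expect to be the crux of the whole lemma --- is that $\bigcup_{P\in N}\partial P$ is nowhere dense: each $\partial P = P\setminus P^\circ$ has empty interior precisely because $\cl(P^\circ)=P$, and a finite union of closed sets with empty interior again has empty interior. Hence $\bigcup_{P\in N}P^\circ$ is dense in $S$. Given $x\in S'=\cl((S')^\circ)$, I would take $y_k\to x$ with $y_k\in (S')^\circ$, perturb each $y_k$ slightly so that it additionally lies in some $P_{i(k)}^\circ$, and pigeonhole on the finitely many indices to extract a subsequence inside a single $P_i^\circ\cap(S')^\circ\subseteq (P_i\cap S')^\circ$; then $x\in C_{P_i}$ and $P_i$ is one of the indices defining $N'$. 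The reverse inclusion $\bigcup_{P}C_P\subseteq S'$ is immediate from $C_P\subseteq \cl(P\cap S')=P\cap S'$.

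For the ``meet only on boundaries'' property, take distinct $P,Q\in N$ with $C_P,C_Q\in N'$ and any $x\in C_P\cap C_Q$. Since $C_P\subseteq P$ and $C_Q\subseteq Q$, we get $x\in P\cap Q=\partial P\cap\partial Q$, so in particular $x\in\partial P$. By the isolated fact, $C_P^\circ\subseteq P^\circ$, which is disjoint from $\partial P$, so $x\notin C_P^\circ$, i.e.\ $x\in\partial C_P$; symmetrically $x\in\partial C_Q$. Thus $C_P\cap C_Q\subseteq\partial C_P\cap\partial C_Q$, and the reverse inclusion holds since each $C_P$ is closed.

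Part~2 is then routine, and I would dispose of the three kinds of pairs in $N\cup N'\setminus\{P\}$ separately. Every element is a compact body, inherited either from $N$ or from $N'$, and $S=\bigcup_{Q\in N\setminus\{P\}}Q\ \cup\ \bigcup_{R\in N'}R$ because $\bigcup N'=P$. Pairs lying both in $N\setminus\{P\}$ or both in $N'$ are handled by the respective hypotheses. For a mixed pair $Q\in N\setminus\{P\}$ and $R\in N'$: from $R\subseteq P$ we get $R^\circ\subseteq P^\circ$, hence $Q\cap R\subseteq Q\cap P=\partial Q\cap\partial P\subseteq\partial Q$; and any point of $Q\cap R$ lies in $\partial P$, so (being in the closed set $R$ but not in $R^\circ\subseteq P^\circ$) it lies in $R\setminus R^\circ=\partial R$. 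Therefore $Q\cap R\subseteq\partial Q\cap\partial R$, with the reverse inclusion again trivial, completing the argument.
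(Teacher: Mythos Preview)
The paper explicitly states this lemma without proof (``For completeness, we state without proof the following simple lemma''), so there is no argument in the paper to compare against; your write-up supplies exactly what the authors omitted. Your proof is correct. You rightly flag the typo in part~1 (the formula should read $P\cap S'$ and $(S')^\circ$), and your argument hinges on the two facts that make Definition~\ref{def:thin} workable: that $\cl(U)^\circ\subseteq P^\circ$ whenever $U\subseteq P$ is open and $P$ is a compact body, and that the finite union $\bigcup_{P\in N}\partial P$ is closed with empty interior, so that $\bigcup_{P\in N}P^\circ$ is dense in $S$. With those in hand, the three verifications for part~1 (each $C_P$ is a compact body, the $C_P$ cover $S'$, distinct $C_P$ meet only in their boundaries) and the mixed-pair check in part~2 go through exactly as you describe. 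The pigeonhole step in the covering argument is the one place where finiteness of the decomposition is genuinely used, and you use it correctly.
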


The following simple lemma from \cite{LS93} will be used repeatedly.
\begin{lemma}\label{lem:bisection}
Let $f:\reals^n \rightarrow \reals$ be integrable, $n \ge 2$. Then for any point $z \in \reals^n$, and any $2$-dimensional linear
subspace $S$ of $\reals^n$, there exists a hyperplane $H = \set{x : a^Tx = a^Tz}$, with $a \in S$, inducing halfspaces $H^+,
H^-$, such that it equipartitions $f$, i.e.,
\[
\int_{H^+} f(x) \, dx = \int_{H^-} f(x)\, dx.
\]
\end{lemma}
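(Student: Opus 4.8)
The plan is to run an intermediate-value argument over the one-parameter family of hyperplanes through $z$ whose normal lies in the given $2$-dimensional subspace $S$. First I would parametrize the unit circle of $S$ by $\theta \mapsto a(\theta)$, $\theta \in [0,2\pi]$ (say $a(\theta) = \cos\theta\, u + \sin\theta\, v$ for an orthonormal basis $u,v$ of $S$), set $H(\theta) = \set{x : a(\theta)^Tx = a(\theta)^Tz}$ with closed halfspaces $H^+(\theta) = \set{x : a(\theta)^Tx \ge a(\theta)^Tz}$ and $H^-(\theta) = \set{x : a(\theta)^Tx \le a(\theta)^Tz}$, and define
\[
g(\theta) = \int_{H^+(\theta)} f(x)\,dx - \int_{H^-(\theta)} f(x)\,dx .
\]
Since the boundary $H(\theta)$ is Lebesgue-null we have $\int_{H^+(\theta)} f + \int_{H^-(\theta)} f = \int_{\reals^n} f$, so a zero of $g$ is precisely a hyperplane of the required form that equipartitions $f$; thus it suffices to exhibit a zero of $g$.

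Next I would establish two properties of $g$. (i) Antipodal oddness: since $a(\theta+\pi) = -a(\theta)$ we have $H^+(\theta+\pi) = H^-(\theta)$ and $H^-(\theta+\pi) = H^+(\theta)$ up to the common null boundary hyperplane, hence $g(\theta+\pi) = -g(\theta)$; in particular $g(\pi) = -g(0)$. (ii) Continuity: as $\theta \to \theta_0$, for every $x \notin H(\theta_0)$ the quantity $a(\theta)^T(x-z)$ keeps the sign of $a(\theta_0)^T(x-z)$, so $\mathbf{1}_{H^+(\theta)}(x) \to \mathbf{1}_{H^+(\theta_0)}(x)$; the exceptional set $H(\theta_0)$ is a genuine hyperplane, hence null, and $\abs{\mathbf{1}_{H^+(\theta)}f} \le \abs{f} \in L^1$, so dominated convergence gives $\int_{H^+(\theta)} f \to \int_{H^+(\theta_0)} f$, and similarly for $H^-$; thus $g$ is continuous on $[0,2\pi]$. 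Given (i) and (ii), either $g(0) = 0$ and we are done, or $g(0)$ and $g(\pi)$ have opposite signs and the intermediate value theorem produces $\theta^\ast \in (0,\pi)$ with $g(\theta^\ast) = 0$; the hyperplane $H(\theta^\ast)$ satisfies the conclusion.

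The one step with actual content is the continuity in (ii): this is where integrability of $f$ is genuinely used (to invoke dominated convergence with the $L^1$ majorant $\abs{f}$), and where one must observe that the degenerate alternative $z \in H(\theta_0)$ is harmless, since the boundary hyperplane is always null regardless of whether it passes through $z$. The remaining ingredients are purely formal — the existence of the $2$-dimensional subspace uses $n \ge 2$, and the antipodal symmetry is immediate from the definition of the halfspaces. (This is essentially the argument of \cite{LS93}.)
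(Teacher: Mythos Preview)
Your argument is correct and is exactly the standard intermediate-value proof from \cite{LS93}; the paper itself does not prove this lemma but simply quotes it from that reference. One small wording quibble: $z \in H(\theta_0)$ always holds by construction, so there is no ``degenerate alternative'' to worry about --- the relevant point, which you state correctly, is just that each $H(\theta_0)$ is a genuine hyperplane (since $a(\theta)$ is a unit vector) and hence Lebesgue-null.
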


\begin{theorem}\label{thm:decomposition}
For any integrable function $f:\reals^n \rightarrow \reals$ with $\supp(f) \subseteq S$, $S$ a compact body, and $\int f dx= 0$,
and any $\eps > 0$, there exists an $\eps$-thin decomposition ${\cal P}$ of $S$ such that each part $P \in {\cal P}$ is
obtained by successive half space cuts from $S$ and satisfies $\int_P f dx = 0$.
\end{theorem}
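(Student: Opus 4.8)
The plan is to produce $\mathcal P$ as the leaf set of a finite binary tree of successive halfspace cuts, each cut obtained from Lemma~\ref{lem:bisection}. Starting from $S$, at a current piece $P$ (a compact body with $\int_P f=0$): if $P$ is already $\eps$-thin, stop and declare $P$ a leaf; otherwise apply Lemma~\ref{lem:bisection} to $f\mathbf{1}_P$ with a point $z\in\reals^n$ and a $2$-plane $W$ chosen as below, obtaining a hyperplane $H\ni z$ with normal in $W$ equipartitioning $\int_P f$, and recurse on $\cl((P\cap H^+)^\circ)$ and $\cl((P\cap H^-)^\circ)$. By repeated application of Lemma~\ref{lem:decomposition}(2) the set of current leaves is always a valid decomposition of $S$; and since the cut gives $\int_{P\cap H^+}f=\int_{P\cap H^-}f=\tfrac12\int_P f$ while $\int_S f=0$, induction on depth shows $\int f=0$ on every node, hence on every leaf. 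So the whole theorem reduces to showing that, for a suitable rule for choosing $(z,W)$, this recursion \emph{terminates}.

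For the choice I would use the minimum-volume enclosing (L\"owner) ellipsoid $E=E(P)$ of $\conv(P)$: let $c$ be its center, $v_1,\dots,v_n$ its axis directions, $a_1\ge\dots\ge a_n>0$ its semi-axes, and always take $z:=c$ and $W:=\mathrm{span}(v_1,v_2)$. Two elementary observations connect this with thinness. (i) If $a_2\le\eps$ then $\conv(P)\subseteq E\subseteq\set{x:\norm{\Pi_{v_1^\perp}(x-c)}\le a_2}$, a cylinder of radius $a_2\le\eps$ about the line through $c$ with direction $v_1$; hence $P$ is $\eps$-thin. Equivalently, if $P$ is not $\eps$-thin then $a_1\ge a_2>\eps$. (ii) By John's theorem $c+\tfrac1n(E-c)\subseteq\conv(P)$, so if $P$ is not $\eps$-thin then $\conv(P)$ contains the $2$-dimensional disk $D(P):=\set{y\in c+W:\norm{y-c}\le\eps/n}$ of radius $\eps/n$, centered at the cut point $z=c$, lying in the $2$-flat $c+W$. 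Also $\diam(E)\le n\,\diam(P)\le n\,\diam(S)$, so all the ellipsoids that arise are uniformly bounded; and since $c\in\mathrm{int}(\conv(P))$, every hyperplane through $c$ has points of $P$ strictly on both sides, which makes the two children genuine compact bodies and keeps the recursion well-posed.

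To prove termination I would argue by contradiction, using compactness. If the tree is infinite then, being binary, it has an infinite path; write it $P_0\supsetneq P_1\supsetneq\cdots$, where each $P_k$ is not $\eps$-thin, $P_{k+1}\subseteq H_k^+$, and $H_k=\set{x:\dpr{a_k}{x}=\dpr{a_k}{c_k}}$ with unit normal $a_k\in W_k:=\mathrm{span}(v_1(E(P_k)),v_2(E(P_k)))$. Put $C_\infty:=\bigcap_k\conv(P_k)$, a closed convex set. By (ii), $D(P_k)\subseteq\conv(P_k)$, and $D(P_k)$ is determined by the bounded point $c_k$ and the $2$-plane $W_k$; passing to a subsequence we may assume $c_k\to c_\infty$, $W_k\to W_\infty$ in the Grassmannian, and $a_k\to a_\infty$ (a unit vector in $W_\infty$), so $D(P_k)\to D_\infty$, the disk of radius $\eps/n$ about $c_\infty$ in $c_\infty+W_\infty$. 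Since $\conv(P_j)\subseteq\conv(P_k)$ for $j\ge k$ and the $\conv(P_k)$ are closed, $D_\infty\subseteq\conv(P_k)$ for every $k$, i.e. $D_\infty\subseteq C_\infty$. But $C_\infty\subseteq\conv(P_{k+1})\subseteq H_k^+$ for all $k$, and passing to the limit gives $C_\infty\subseteq\set{x:\dpr{a_\infty}{x}\ge\dpr{a_\infty}{c_\infty}}$; since $a_\infty\in W_\infty$, the point $c_\infty-\tfrac{\eps}{2n}a_\infty$ lies in $D_\infty\subseteq C_\infty$ yet violates this inequality — a contradiction. Hence the tree is finite, and its leaves are the required $\eps$-thin decomposition with $\int f=0$ on each part.

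The main obstacle is exactly this termination, and it is the reason I would route the argument through the L\"owner ellipsoid and a limiting argument rather than a clean potential-function descent. Lemma~\ref{lem:bisection} lets me choose the $2$-plane $W$ but \emph{not} the direction of the cut inside it, so I cannot simply force the cut to be near-perpendicular to the longest axis of $P$; and the natural scalar potential $\vol(E(P))$, although it does shrink by a fixed factor under any cut through the center of $E(P)$ (the ellipsoid-method estimate), fails to be bounded below on non-$\eps$-thin bodies (consider a very flat but wide ``pancake''). The compactness argument sidesteps both issues by using only that a non-$\eps$-thin piece is uniformly ``$2$-dimensionally fat'' around its ellipsoid center, a feature that survives passage to the limit and is incompatible with all descendants lying on one side of the limiting cut. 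The remaining points — that a cut through an interior point of $\conv(P)$ produces two pieces with nonempty interior and hence an honest $2$-part decomposition of $P$ in the sense of Definition~\ref{def:thin}, and that passing from $P\cap H^\pm$ to $\cl((P\cap H^\pm)^\circ)$ changes neither $\int f$ nor containment in an $\eps$-cylinder — are routine.
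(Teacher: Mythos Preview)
Your argument is correct, and it takes a genuinely different route from the paper's. The paper proceeds by an $(n-1)$-stage induction: at stage $i$ every piece is $\delta$-thin in $i$ orthogonal directions, and to gain one more thin direction the paper projects $\conv(P)$ onto a $2$-plane orthogonal to the already-thin directions and repeatedly bisects through the \emph{centroid} of that $2$-dimensional projection; Gr\"unbaum's theorem gives a geometric decrease (factor $1-1/e$) of the projected area, so termination comes from an explicit potential. You instead detect thinness directly via the second semi-axis of the L\"owner ellipsoid of $\conv(P)$, always cut through the ellipsoid center with normal in $\mathrm{span}(v_1,v_2)$, and prove termination by a compactness/limit argument using John's theorem to guarantee a fixed-radius $2$-disk inside every non-thin piece. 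The trade-off is clear: the paper's proof is constructive and yields an explicit (if crude) bound on the number of parts in terms of $D,\eps,n$; your proof is shorter and conceptually cleaner (no layered dimension induction, no Gr\"unbaum), but the K\H{o}nig-lemma/compactness step gives no quantitative control on the size of $\mathcal P$. Both approaches exploit exactly the same freedom in Lemma~\ref{lem:bisection} (choice of the point $z$ and the $2$-plane $W$, but not the direction within $W$); you simply spend that freedom differently.
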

\begin{proof}
Pick $D$ such that $S \subseteq \ball{0}{D}$. Since $S$ is compact we know that $D < \infty$. We start with the initial
decomposition ${\cal P}_0 = \set{S}$ of $S$. From this decomposition, we will inductively build decompositions ${\cal
P}_1,\ldots,{\cal P}_{n-1}$ with the following properties. For each $i$, $0 \leq i \leq n-1$, we have that for all $P
\in {\cal P}_i$:
\begin{enumerate}
\item $P$ is obtained from $S$ via successive half space cuts.
\item $\int_P f dx = 0$
\item $\exists S \subseteq \reals^n$, an $i$-dimensional linear subspace of $\reals^n$ such that the orthogonal projection
of $P$ into $S$ is contained inside of cuboid of side length at most $\delta = \frac{2\eps}{\sqrt{n}}$.
\end{enumerate}
Assuming the above properties, one can easily see that each part in ${\cal P}_{n-1}$ is contained inside a cylinder of
radius $\sqrt{n}\frac{\delta}{2} = \eps$, and hence ${\cal P}_{n-1}$ is an $\eps$-thin decomposition of $S$ compatible
with $f$ as needed. Hence, we need only show how to perform the induction step.

Take $P \in {\cal P}_i$, $0 \leq i \leq n-2$. By assumption, there exists an $i$-dimensional linear subspace $T$
such that $\pi_T(P)$, the orthogonal projection of $P$ into $T$, is contained inside a cuboid of size length at most
$\delta$. Since $i \leq n-2$, we may pick a $2$ dimensional subspace $S$ orthogonal to $T$.

Let $Q = \conv(P)$ and let $\Pi_S$ denote the orthogonal projection map from $\reals^n$ onto $S$. Since $P \subseteq S
\subseteq \ball{0}{D}$ and $\ball{0}{D}$ is convex, we know that $Q \subseteq \ball{0}{D}$. Therefore $\Pi_S(Q) \subseteq \ball{0}{D} \cap S
\Rightarrow \vol_2(Q_S) \leq \pi D^2$. Let $N = \set{ Q }$. We perform the following iterative cutting procedure on $N$.
Take an element $E \in N$. If $\vol_2(\Pi_S(E)) < \delta^2/2$ stop. Otherwise, letting $\mu$ denote the centroid
of $\Pi_S(E)$, we have by lemma $\ref{lem:bisection}$ that there exists $H = \set{x : a^t x = a^t \mu}$, where $a \in
S$, such that $\int_{E \cap H^-} f dx = \int_{E \cap H^+} f dx = 0$. Let $E_1 = E \cap H^-$, $E_2 = E \cap H^+$. Now set
$N \leftarrow N \cup \set{E_1, E_2} \setminus E$. Since we are cutting through the centroid of $\Pi_S(E)$, and
$\Pi_S(E)$ is convex, by Grunbaum's theorem we know that $\vol_2(\Pi_S(E_1)), \vol_2(\Pi_S(E_2)) \leq
(1-\frac{1}{e})\vol_2(\Pi_S(E))$. Therefore, after a number of iterations depending only on $D$, we will have that every
element $E \in N$ has $\vol_2(\Pi_S(E)) < \frac{\delta^2}{2}$.

\begin{claim}Let $E \in N$. There exists $v \in S$, $\|v\|=1$, such that $\width_v(E) = \sup_{x \in E} v^Tx -
\inf_{x \in E} v^Tx \leq \delta$.
\end{claim}
\begin{proof}
Assume not, then note that the diameter of $\Pi_S(E)$ is at least $\delta$. Let
$[u,v]$ be a diameter inducing chord in $\Pi_S(E)$. Let $w,z$ be points on opposite sides of $[u,v]$ such that their
distances from the line $l(u,v)$ are maximum. Then the sum of the distances from $w,z$ to $l(u,v)$ is at least $\delta$
and therefore the area of the quadrilateral induced by these four points is at least $\delta^2/2$, a contradiction.
Hence there exists a direction $v$ such that $\width_v(E) \leq \delta$.
\end{proof}
Note then that the orthogonal projection of $E$ into
the subspace spanned by $v$ and $T$ is contained inside a cuboid of size length at most $\delta$ as needed.

Hence $N$ is now a decomposition of $Q = \conv(P)$, such that each element of $E \in N$ has $i+1$ orthogonal
$\delta$-thin directions. To transform $N$ into a decomposition of $P$, we let $N' = \set{\cl((E \cap P)^\circ): E \in
N, E \cap P^\circ \neq \emptyset}$. By adding $N'$ to ${\cal P}_{i+1}$, we complete the induction step as needed to
prove the theorem.
\end{proof}

\section{Application to Nonconvex Isoperimetry}

The benefit of Theorem \ref{thm:decomposition} is that it will allow us to derive isoperimetric inequalities for
high-dimensional sets without requiring convexity along every line. We show an application to star-shaped bodies.
To gain some intuition, it is useful to understand what the obstructions to isoperimetry in the star-shaped setting
are, as well as to understand why star-shaped bodies have good isoperimetry at all. The following Theorem illustrates what a ``canonical'' bottleneck looks like in the star-shaped setting. 

\begin{figure}[H]
\label{fig:iso-upper-bound}
\begin{center}
\psfig{file=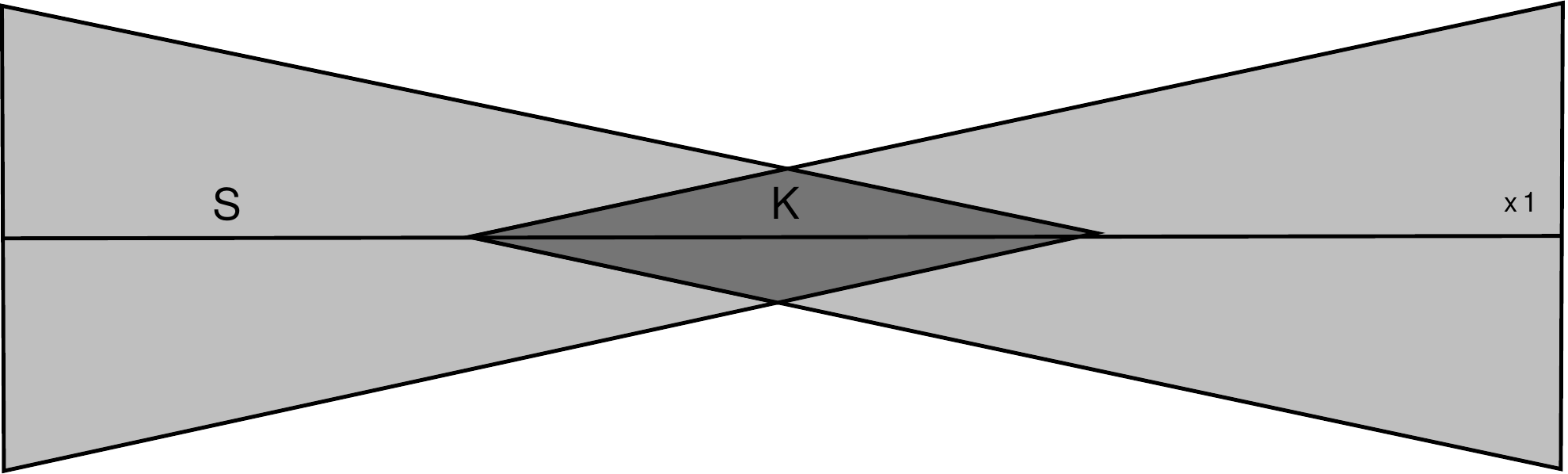,width=2in,height=0.6in}
\end{center}
\caption{Star-Shaped Gluing of 2 Truncated Cones}
\end{figure}

\begin{theorem}
\label{thm:iso_upper_bound}
Let $H_n = \set{x: x \in \reals^n, x_1 = 0}$ and let $(H_n^-,H_n^+)$ denote the halfspaces induced by $H_n$. There exists an
absolute constant $C > 0$, such that for all $\eta > 0$, there exists a sequence of symmetric star-shaped bodies $S_n
\in \reals^n$ centered at $0$ such that for all $n \geq N_\eta$, we have that $\eta(S_n) = \Omega(\eta)$ and 
\[
\vol_{n-1}(H_n \cap S_n) \leq C \left(\frac{\eta \ln(\frac{1}{\eta})}{(1-\eta)\diam(S_n)}\right) \min \set{\vol_n(H^- \cap S_n), \vol_n(H^+ \cap S_n)}
\]
\end{theorem}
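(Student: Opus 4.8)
The plan is to construct $S_n$ explicitly as the star-shaped gluing of two truncated cones along the hyperplane $H_n$, as suggested by the figure, and then to compute all the relevant volumes directly. First I would set up the building block: fix a small "neck" radius $r$ and a base radius $R$ (with $R$ of order $1$ and $r$ chosen as a function of $\eta$, roughly $r \sim \eta^{1/(n-1)}$ so that the neck cross-section has the right relative $(n-1)$-volume), and let each half $S_n \cap H_n^{\pm}$ be a frustum: the convex hull of the small $(n-1)$-ball $B_{n-1}(0,r) \subseteq H_n$ at $x_1 = 0$ and the large $(n-1)$-ball $B_{n-1}(0,R)$ at $x_1 = \pm L$, where $L = \diam(S_n)/2$ up to constants. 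The union of the two frustums is symmetric about $0$; one checks it is star-shaped with kernel containing (a scaled copy of) the double cone through the neck — this is where the radius profile being \emph{increasing} as one moves away from $H_n$ matters, since a point near the neck can "see" everything. Set $\eta := r^{n-1}$-ish so that $\vol(K_{S_n})/\vol(S_n) = \Omega(\eta)$.

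Next I would carry out the three volume estimates. (i) $\vol_{n-1}(H_n \cap S_n) = \vol_{n-1}(B_{n-1}(0,r)) = c_{n-1} r^{n-1}$. (ii) Each half has $\vol_n(H^{\pm} \cap S_n) = \int_0^{L} c_{n-1}\rho(t)^{n-1}\,dt$ where $\rho(t) = r + (R-r)t/L$ interpolates linearly; this integral is dominated by the large end and evaluates (via $\int_0^L \rho(t)^{n-1}dt = \frac{L}{n}\cdot\frac{R^n - r^n}{R-r}$) to something of order $c_{n-1} L R^{n-1}/n$. (iii) Since the body is symmetric, $\min\{\vol_n(H^-\cap S_n),\vol_n(H^+\cap S_n)\}$ equals that same quantity. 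Dividing, the ratio is
\[
\frac{\vol_{n-1}(H_n\cap S_n)}{\min\{\vol_n(H^\pm\cap S_n)\}} \;\asymp\; \frac{r^{n-1}}{L R^{n-1}/n} \;=\; \frac{n}{L}\left(\frac{r}{R}\right)^{n-1}.
\]
With $\diam(S_n) \asymp L$ and $(r/R)^{n-1}$ tuned to $\Theta(\eta)$, the bound $\frac{n}{L}\cdot\Theta(\eta) = \Theta\!\left(\frac{n\eta}{\diam(S_n)}\right)$ already matches the target up to the $\ln(1/\eta)/(1-\eta)$ factor, which — up to the value of $C$ — only helps (it is $\Omega(1)$ for $\eta$ bounded away from $1$, and the logarithm is asymptotically larger than $1$). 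To actually use a $\log$ rather than a linear radius profile and pin the constant, I would instead take $\rho(t)$ growing so that the cross-sectional volume grows geometrically — e.g. $\rho(t)^{n-1} = r^{n-1} e^{t}$ — which makes $\int_0^L c_{n-1}r^{n-1}e^t dt = c_{n-1} r^{n-1}(e^L - 1)$, and choosing $L = \ln(1/\eta)$-ish forces $e^L \asymp 1/\eta$; this is precisely what produces the $\eta\ln(1/\eta)$ numerator after normalization, and I would verify star-shapedness and $\eta(S_n)=\Omega(\eta)$ for this profile as well.

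The main obstacle I anticipate is not the volume arithmetic but verifying two structural facts cleanly: first, that the glued body is genuinely star-shaped with $\eta(S_n) = \Omega(\eta)$ — i.e. exhibiting an explicit kernel of the right volume, which requires checking that every point of the claimed kernel set sees every point of $S_n$, a convexity-along-lines condition that depends delicately on the monotonicity and concavity/convexity of the radius profile $\rho$; and second, confirming $H_n$ is the \emph{worst} cut only in the weak sense needed — we only need this particular cut to witness the inequality, so I do not need it to be the true isoperimetric minimizer, which simplifies matters. A secondary technical point is the threshold $N_\eta$: since the constructions use $(r/R)^{n-1} = \Theta(\eta)$, one needs $n$ large enough (depending on $\eta$) that $r < R$ and the frustum is nondegenerate, and that the asymptotic estimates of $\int \rho^{n-1}$ hold with absolute constants; I would state $N_\eta$ as whatever makes these hold.
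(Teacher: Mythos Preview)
Your construction (two truncated cones glued at the neck on $H_n$) is exactly the paper's. The problem is a miscalculation of the half-volume that propagates through the rest of your sketch.

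You write $\int_0^L \rho(t)^{n-1}\,dt = \tfrac{L}{n}\cdot\tfrac{R^n-r^n}{R-r}$ and then assert this is of order $LR^{n-1}/n$. That asymptotic is only valid when $r/R$ is bounded away from $1$. But your own tuning $(r/R)^{n-1}=\Theta(\eta)$ with $\eta$ fixed forces $r/R = \eta^{1/(n-1)} \to 1$, i.e.\ $R-r \approx R\ln(1/\eta)/(n-1)$. In that regime $\tfrac{R^n-r^n}{R-r}=\sum_{k=0}^{n-1}R^k r^{n-1-k}$ has $n$ terms each $\approx R^{n-1}$, so it is $\asymp nR^{n-1}$, and more precisely $\tfrac{R^n-r^n}{R-r}\approx \tfrac{R^{n-1}(1-\eta)(n-1)}{\ln(1/\eta)}$. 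Hence the half-volume is $\asymp c_{n-1}LR^{n-1}(1-\eta)/\ln(1/\eta)$, \emph{not} $c_{n-1}LR^{n-1}/n$, and the ratio you want is
\[
\frac{\vol_{n-1}(H_n\cap S_n)}{\vol_n(H_n^{\pm}\cap S_n)}\;\asymp\;\frac{r^{n-1}}{LR^{n-1}(1-\eta)/\ln(1/\eta)}\;=\;\frac{\eta\ln(1/\eta)}{(1-\eta)L},
\]
which is exactly the target once $\diam(S_n)\asymp L$. Your spurious factor of $n$ is an artifact of the wrong asymptotic; and note that had the ratio really been $\Theta(n\eta/\diam)$, your claim that the extra $\ln(1/\eta)/(1-\eta)$ ``only helps'' would be false: for fixed $\eta$ and $n\to\infty$ you would need $n\le C\ln(1/\eta)/(1-\eta)$, which fails. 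So the linear frustum already works, and your exponential-profile detour is unnecessary --- indeed the paper shows that, with the correct slope $\sim 1/n$, the cross-sectional density of the linear cones \emph{converges} to precisely the piecewise-exponential density $\tfrac{\eta}{2(1-\eta)}e^{\pm x}$ on $[-\ln(1/\eta),\ln(1/\eta)]$ that you were trying to build by hand.

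Two further points you glossed over. First, the kernel is not ``a scaled double cone through the neck'': it is the intersection of the two full (untruncated) cones, i.e.\ on each side the continuation of the \emph{opposite} cone; the paper writes this down explicitly and checks that the limiting kernel fraction is exactly $\eta$. Your vague description would not yield $\eta(S_n)=\Omega(\eta)$ without this computation. Second, with $R$ of order $1$ the diameter of $S_n$ may be realized transversally (across the base disk) rather than along the $x_1$-axis; the paper fixes this at the end by scaling orthogonally to $x_1$ by a factor like $1/n^2$, which leaves the cross-sectional distribution (and hence the ratio) unchanged while forcing $\diam(S_n)\to 2\ln(1/\eta)$. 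You should include the same step.
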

\begin{proof}
Our strategy here will be to reduce the above statement to one about $1$ dimensional distributions on the real line. For
each $\eta$, we will construct a candidate sequence $S_n$ of star-shaped bodies which are rotationally symmetric about
the $x_1$ axis. Then by analyzing the cross-sectional distributions of $S_n$ and $K_{S_n}$ along the $x_1$-axis, we will
explicitly construct a $1$ dimensional asymptotic densities $f_\eta, f^K_\eta$ to which the cross-sectional
distributions of $S_n$ and $K_{S_n}$ respectively converge. We will then establish the required isoperimetry and
kernel volume constraints for the sequence $S_n$ and $K_{S_n}$ by direct computation on $f_\eta, f^K_\eta$. 

The geometry of our constructions is simple. As shown in Figure 1 previously, we will take two $n$-dimensional rotational cones with
variance $1$ along the $x_1$ axis, truncate them at their ends removing exactly an $\eta$ fraction of their volume, and
glue them together at the truncation sites. Choose $l_n$ such that $(1-\frac{l_n}{\sqrt{n(n+2)}})^n = \eta$. Since
$l_n \rightarrow \ln(\frac{1}{\eta})$ we may choose $N_\eta$ such that for $n \geq N_\eta$ $2l_n \leq \sqrt{n(n+2)}$.
Now let
\[
S_n = \set{x: \sqrt{\sum_{i=2}^n x_i^2} \leq 1 - \frac{l_n + x}{\sqrt{n(n+2}}, x_1 \in [-l_n, 0]} \bigcup
      \set{x: \sqrt{\sum_{i=2}^n x_i^2} \leq 1 - \frac{l_n - x}{\sqrt{n(n+2}}, x_1 \in [0, l_n]}
\]
From here one can easily verify that the kernel of $S_n$ is
\[
K_{S_n} = \set{x: \sqrt{\sum_{i=2}^n x_i^2} \leq 1 - \frac{l_n - x}{\sqrt{n(n+2}}, x_1 \in [-l_n, 0]} \bigcup
      \set{x: \sqrt{\sum_{i=2}^n x_i^2} \leq 1 - \frac{l_n + x}{\sqrt{n(n+2}}, x_1 \in [0, l_n]}
\]
Next, a simple computation reveals that the cross-sectional distribution of $S_n$ is
\[
f_n(x) = \begin{cases} 
\frac{1}{2(1-\eta)} \frac{\sqrt{n(n+2)}}{n}(1-\frac{l_n + x}{\sqrt{n(n+2)}})^{n-1}:& \quad x \in [-l_n, 0] \\
\frac{1}{2(1-\eta)} \frac{\sqrt{n(n+2)}}{n}(1-\frac{l_n - x}{\sqrt{n(n+2)}})^{n-1}:& \quad x \in [0, l_n] \\
0:& \quad \text{ otherwise}
\end{cases}
\]
Another computation, shows us that the cross-sectional density of $K_{S_n}$ relative to $S_n$ (we normalize by the
volume of $S_n$) is
\[
f^K_n(x) = \begin{cases} 
\frac{1}{2(1-\eta)} \frac{\sqrt{n(n+2)}}{n}(1-\frac{l_n - x}{\sqrt{n(n+2)}})^{n-1}:& \quad x \in [-l_n, 0] \\
\frac{1}{2(1-\eta)} \frac{\sqrt{n(n+2)}}{n}(1-\frac{l_n + x}{\sqrt{n(n+2)}})^{n-1}:& \quad x \in [0, l_n] \\
0:& \quad \text{ otherwise}
\end{cases}
\]
From here, one can easily verify that the sequence $f_n$ converges pointwise to the density function
$f_\eta:\reals \rightarrow \reals^+$ where
\[
f_\eta(x) = \begin{cases} \frac{\eta}{2(1-\eta)} e^{-x} :& \quad x \in [-\ln(\frac{1}{\eta}),0] \\
			  \frac{\eta}{2(1-\eta)} e^x :& \quad x \in [0,\ln(\frac{1}{\eta})] \\
			  0:& \quad \text{otherwise} \end{cases}
\]
Similarly the sequence $f^K_n$ converges pointwise to $f^K_\eta$ where
\[
f^K_\eta(x) = \begin{cases} \frac{\eta}{2(1-\eta)}  e^x :& \quad x \in [-\ln(\frac{1}{\eta}),0] \\
			  \frac{\eta}{2(1-\eta)} e^{-x} :& \quad x \in [0,\ln(\frac{1}{\eta})] \\
			  0:& \quad \text{otherwise} \end{cases}
\]

Notice that $f^K_\eta \leq f_\eta$ and that $f^K_\eta$ is log-concave. We get that
\[
\int_{\reals} f^K_\eta(x) dx = 2\frac{\eta}{2(1-\eta)} \int_0^{\ln(\frac{1}{\eta})} e^{-x} dx = 2\frac{\eta}{2(1-\eta)}
(1-\eta) = \eta
\]
The above computation shows that the volume fraction of the asymptotic kernel is indeed $\eta$ as required. Clearly the
length of the support of $f_\eta$ is $2\ln(\frac{1}{\eta})$. So now we see that the isoperimetric coefficient for
$f_\eta$ is bounded by
\[
\frac{f_\eta(0)}{\min \set{\int_{-\infty}^0 f_\eta(x)dx, \int_{0}^\infty f_\eta(x) dx}} =
\frac{\frac{\eta}{2(1-\eta)}}{\frac{1}{2}} = \frac{\eta}{1-\eta} = 2\frac{\eta\ln(\frac{1}{\eta})}{(1-\eta)|\supp(f_\eta)|}
\]
where $|\supp(f_\eta)|$ denote the length of the support of $f_\eta$.  Clearly the isoperimetry computed above
corresponds asymptotically to that of the partitions $H_n^- \cap S_n,H_n^+ \cap S_n$. To conclude the argument we need
only justify $\diam(S_n) \rightarrow |\supp(f_\eta)|$. As is this is not the case, but this can easily be achieved by
scaling $S_n$ orthogonally to the $x_1$ axis by a factor of $\frac{1}{n^2}$. By doing this, we are collapsing the
sequence $S_n$ onto the $x_1$ axis, without changing the cross sectional distribution along the axis, and hence
asymptotically $\diam(S_n)$ will converge to the to $|\supp(f)_\eta|$ as needed.
\end{proof}

From the above theorem and illustration, we see that the isoperimetry of star-shaped bodies can be strictly worse than in the convex setting where the isoperimetric coefficient is always $\Omega\left(\frac{1}{\diam(S)}\right)$. In particular, from Figure 1, we observe how contrary to the convex setting we can get a V-shaped break in logconcavity of the cross-sectional volume distribution of a star-shaped body. On the other hand,
as we will see later via Lemma \ref{lem:kernel-concave}, the severity of these breaks is strictly controlled by the kernel of $S$. For reference, in Lemma \ref{lem:kernel-concave} we show that the cross-sectional distributions of a star-shaped body satisfy a form of restricted logconcavity with respect to the kernel. The rest of this section will be devoted to proving isoperimetric inequalities for star-shaped sets. In particular, in Theorem \ref{ISO1} we show isoperimetry for star-shaped bodies in terms of the diameter and $\eta$ which in light of Theorem \ref{thm:iso_upper_bound} is optimal within a factor of $O\left(\frac{\ln(\frac{1}{\eta})}{1-\eta}\right)$. 

The next lemma forms the technical core of the isoperimetry proofs for star-shaped sets. Informally, we prove that
for any thin enough hyperplane cut decomposition of a star-shaped set $S$, the parts of the decomposition that intersect
the kernel of $S$ are ``almost'' convex. This will in essence allow us to apply the isoperimetric inequalities developed
for convex sets to the ``almost'' convex pieces from which we will extract the isoperimetric properties of $S$.

\begin{lemma} \label{lem:near-convex-decomposition}
Let $S$ be a star-shaped body with $\eta(S) > 0$, and let $(S_1,S_3,S_2)$ denote a measurable partition of $S$ where
$\vol(S_1),\vol(S_2) > 0$. Then for every $\eps > 0$, there exists a decomposition ${\cal P}$ of $S$ such that
\begin{enumerate}
\item $\forall P \in {\cal P}$, $\vol(S_1 \cap P)\vol(S_2) - \vol(S_1)\vol(S_2 \cap P) = 0$.
\item $\exists N \subseteq {\cal P}$ such that
\begin{enumerate}
\item $\sum_{P \in N} \frac{\vol(P)}{\vol(S)} \geq (1-\eps)\eta(S)$
\item $\forall P \in N$, $P$ is $\eps$-convex, i.e., there exists $P' \subseteq \reals^n$ a convex
body, such that $\vol(P \sym P') \leq \eps \min \set{ \vol(P), \vol(P') }$.
\end{enumerate}
\end{enumerate}
\end{lemma}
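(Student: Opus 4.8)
The plan is to take $\cal P$ to be a sufficiently thin hyperplane‑cut decomposition produced by Theorem~\ref{thm:decomposition}, applied to the function $f(x) := \vol(S_2)\,\mathbf{1}_{S_1}(x) - \vol(S_1)\,\mathbf{1}_{S_2}(x)$. Since $\supp(f)\subseteq S$ and $\int f\,dx = \vol(S_2)\vol(S_1)-\vol(S_1)\vol(S_2)=0$, for every $\delta>0$ (to be fixed at the end) Theorem~\ref{thm:decomposition} yields a $\delta$‑thin decomposition $\cal P$ in which each $P$ has the form $P=\cl((S\cap C_P)^\circ)$ for a polyhedron $C_P$ cut out by the splitting hyperplanes, with the $C_P$ having pairwise disjoint interiors and $P\subseteq C_P$, and with $\int_P f\,dx=0$. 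This last identity is exactly $\vol(S_1\cap P)\vol(S_2)-\vol(S_1)\vol(S_2\cap P)=0$, so property~1 holds for every choice of $\delta$, and it remains to choose $\delta$ so that property~2 holds.

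\textbf{Structure of the kernel‑meeting parts.} Call $P$ \emph{kernel‑meeting} if $\vol(K_S\cap P)>0$, and set $\tilde P := \cl((K_S\cap C_P)^\circ)$; since $K_S$ and $C_P$ are convex, $\tilde P$ is a convex body and $\vol(\tilde P)=\vol(K_S\cap P)$. First I would record that $P$ is \emph{star‑shaped with respect to $\tilde P$}: for $z\in\tilde P$ and $y\in P$ one has $[z,y]\subseteq P$. Indeed, for $z\in K_S$ and $y\in S^\circ$ the usual argument (thicken $y$ to a small ball, note $z$ sees all of it, take the cone) gives $(z,y)\subseteq S^\circ$; intersecting with the convex open set $C_P^\circ$ gives $(z,y)\subseteq (S\cap C_P)^\circ$, hence $[z,y]\subseteq P$, and the general case follows by approximating $z,y$ from the interior. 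Equivalently $P$ equals the join $\bigcup_{z\in\tilde P,\ y\in P}[z,y]$. Fix coordinates so the cylinder containing $P$ has axis $e_1$, and let $[\alpha^*,\beta^*]$ be the $x_1$‑range of $\tilde P$.

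\textbf{The crux: a convex model for each kernel‑meeting part.} For each kernel‑meeting $P$ the intended convex witness $P'$ is either $\conv(P)$ or $\tilde P$. Two facts drive this. (i) \emph{Thinness plus local concavity:} on any stretch of $x_1$‑levels where the axial cross‑sectional volume of $S$ is (essentially) $1/(n-1)$‑concave, a slice of transverse width $\le 2\delta$ differs from its convex hull by symmetric‑difference volume $O(\delta)$ times its own volume, so if $P$ stays away from the breakpoints of $S$ then $\vol(\conv(P)\setminus P)\le\eps\vol(P)$ once $\delta$ is small, and $P'=\conv(P)$ works. (ii) \emph{The kernel controls the breaks:} wherever the axial cross‑sectional volume of $S$ fails to be $1/(n-1)$‑concave, restricted logconcavity (Lemma~\ref{lem:kernel-concave}; compare Figure~1, where $f_\eta(0)=f^K_\eta(0)$) forces the cross‑sectional volume of $K_S$ there to be within a factor $1+O(\delta)$, so over such a stretch $\vol(P\setminus\tilde P)\le\eps\vol(\tilde P)$, and $P'=\tilde P$ works. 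Two structural points feed in: first, $C_P$ being convex, its transverse width is a concave function of $x_1$, so any ``pinch'' of $P$ at an interior level must come from a pinch of $S$, not of $C_P$; second, if $S\cap C_P$ pinches at an interior level $t_0$, then a point of $K_S$ on one side of $t_0$ sees all of $S\cap C_P$ on the other side, which forbids $\tilde P$ from having positive cross‑section on both sides of $t_0$ and bounds how much of $P$ lies past $\tilde P$. Let $B\subseteq\cal P$ be the kernel‑meeting parts for which neither $\conv(P)$ nor $\tilde P$ is an $\eps$‑witness. The remaining step — and in my view the main obstacle — is to show $\sum_{P\in B}\vol(K_S\cap P)\le\eps\,\vol(K_S)$: one charges the non‑convexity $\vol(\conv(P)\setminus P)$ of each $P\in B$ to a subregion of $\conv(S)\setminus S$ pressed against a break of $S$; these charges are disjoint because the $C_P$ have disjoint interiors, and by (ii) the kernel volume attached to each break loses a definite factor relative to the charged non‑convexity, so the total is an $\eps$‑fraction of $\vol(K_S)$ once $\delta$ is small enough.

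\textbf{Conclusion.} Set $N := \{P : \vol(K_S\cap P)>0\}\setminus B$; every $P\in N$ is $\eps$‑convex by construction. Since $K_S\subseteq S=\bigcup_P P$ and the parts overlap only along their boundaries, $\sum_{P}\vol(K_S\cap P)=\vol(K_S)=\eta(S)\vol(S)$ (the non‑kernel‑meeting parts contributing $0$); removing the $B$‑terms and using $\vol(K_S\cap P)\le\vol(P)$,
\[
\sum_{P\in N}\vol(P)\ \ge\ \sum_{P\in N}\vol(K_S\cap P)\ \ge\ (1-\eps)\,\vol(K_S)\ =\ (1-\eps)\,\eta(S)\,\vol(S),
\]
which is property~2. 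Finally, fix $\delta$ small enough (depending on $\eps$, $\eta(S)$, $n$, and, through Theorem~\ref{thm:decomposition} together with compactness, on $S$) that all the $O(\delta)$ estimates above hold.
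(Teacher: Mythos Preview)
Your setup coincides with the paper's: applying Theorem~\ref{thm:decomposition} to $f=\vol(S_2)\mathbf 1_{S_1}-\vol(S_1)\mathbf 1_{S_2}$ gives property~1 for any thinness parameter. The gap is entirely in the $\eps$-convexity step, where the dichotomy (i)/(ii) and the charging argument for $B$ are all unproven. For (i), $1/(n-1)$-concavity of the axial cross-sectional volume of $S$ does not control $\vol(\conv(P)\setminus P)$: it bounds the \emph{sizes} of slices, not how they nest, and the slices of $P$ are intersections $S(t)\cap C_P(t)$, whose volumes need not inherit concavity from either factor. For (ii), even granting that at a break the kernel slice nearly equals the $S$-slice, you invoke this only ``over such a stretch'', while a single kernel-meeting $P$ may cross the break and then run far past it; then neither $\conv(P)$ nor $\tilde P$ is an $\eps$-witness, and nothing you have written bounds $\vol(P\setminus\tilde P)$ by $\vol(\tilde P)$. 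The charging outline for $B$ contains no actual inequality: you have not exhibited a map from $P\in B$ to a region of $\conv(S)\setminus S$ with the claimed disjointness, nor tied its volume to $\vol(K_S\cap P)$. As you yourself flag, this is the main obstacle, and the sketch does not close it.

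The paper bypasses all of this with one device you do not use: the \emph{deep kernel} $K_S^{\eps_0}=\{x:\ball{x}{\eps_0}\subseteq K_S\}$, with $\eps_0$ fixed so that $\vol(K_S^{\eps_0})\ge(1-\eps)\vol(K_S)$. Take $N$ to be the parts meeting $K_S^{\eps_0}$; since they cover $K_S^{\eps_0}$, this alone gives 2(a). Now choose the thinness $\eps_1<\eps_0/2$. Every $P\in N$ then contains a point $c$ with $\ball{c}{\eps_0}\subseteq K_S$, hence a ball of radius $\delta:=\eps_0/2$ centred on the axis of the enclosing cylinder $C$ lies inside $K_S$. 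A direct visibility/similar-triangles argument from this ball shows that the subcylinder $C'\subseteq C$ obtained by shrinking the axial extent on each side by the factor $(\delta-\eps_1)/(\delta+\eps_1)$ is entirely contained in $S$; consequently $P':=P\cap C'=C'\cap\bigcap_i H_i$ is convex, because $C'\subseteq S$ lets one drop $S$ from the description $P=S\cap\bigcap_i H_i$. Finally, since the slice of $P$ at the ball's level lies in $K_P$, Brunn--Minkowski lower-bounds the cross-sectional volume at the edge of $C'$ and hence $\vol(P\setminus C')\le\eps\vol(P')$ once $\eps_1$ is small enough. There is no bad set and no case split: every $P\in N$ is $\eps$-convex with the single witness $P'=P\cap C'$, which is neither $\conv(P)$ nor $\tilde P$.
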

First we state and prove a few technical lemmas needed to prove this.
\begin{lemma} \label{lem:tvd}
Let $K_1,K_2 \subseteq \reals^n$ be compact bodies and let $\pi_1, \pi_2$ denote the uniform measures
on $K_1, K_2$ respectively. Then
\[
\vol(K_1 \sym K_2) \leq \eps \min \set{\vol(K_1),\vol(K_2)} \Rightarrow d_{tv}(\pi_1,\pi_2) \leq \eps
\]
\end{lemma}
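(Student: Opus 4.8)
The plan is to reduce the claim to an explicit formula for the total variation distance between two uniform densities. Write $f_i = \mathbf{1}_{K_i}/\vol(K_i)$ for the density of $\pi_i$ (well-defined since a compact body has positive volume), and recall the standard identity $d_{tv}(\pi_1,\pi_2) = \sup_A\bigl(\pi_1(A)-\pi_2(A)\bigr) = \tfrac12\int_{\reals^n}|f_1-f_2|\,dx$, with the supremum attained at $A = \set{x : f_1(x) > f_2(x)}$. Without loss of generality assume $\vol(K_1) \le \vol(K_2)$, so the hypothesis reads $\vol(K_1 \sym K_2) \le \eps\,\vol(K_1)$.

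Next I would split $\int|f_1-f_2|$ over the three regions $K_1\cap K_2$, $K_1\setminus K_2$, $K_2\setminus K_1$ (outside $K_1\cup K_2$ both densities vanish). On $K_1\cap K_2$ the integrand equals $1/\vol(K_1) - 1/\vol(K_2)$ (nonnegative since $\vol(K_1)\le\vol(K_2)$), on $K_1\setminus K_2$ it is $1/\vol(K_1)$, and on $K_2\setminus K_1$ it is $1/\vol(K_2)$. Summing these contributions and using $\vol(K_1\cap K_2)+\vol(K_1\setminus K_2)=\vol(K_1)$ together with $\vol(K_1\cap K_2)+\vol(K_2\setminus K_1)=\vol(K_2)$, the common terms telescope and one is left with $\int|f_1-f_2| = 2\,\vol(K_2\setminus K_1)/\vol(K_2)$; that is, $d_{tv}(\pi_1,\pi_2) = \vol(K_2\setminus K_1)/\vol(K_2)$.

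Finally, since $K_2\setminus K_1 \subseteq K_1\sym K_2$ we have $\vol(K_2\setminus K_1) \le \vol(K_1\sym K_2) \le \eps\,\vol(K_1) \le \eps\,\vol(K_2)$, and dividing by $\vol(K_2)$ gives $d_{tv}(\pi_1,\pi_2)\le\eps$, as claimed.

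There is no substantial obstacle here. The only points requiring care are invoking the $L^1$ form of total variation correctly for the paper's one-sided definition, and normalizing by the \emph{larger} volume $\vol(K_2)$ so that the last division goes in the favorable direction. If one prefers to avoid the $L^1$ identity, the same bound follows by a direct argument: for an arbitrary measurable $A$, decompose $A$ over the three regions above; the quantity $\pi_1(A)-\pi_2(A)$ is maximized by placing all of $A$'s mass inside $K_1\setminus K_2$ and none inside $K_2\setminus K_1$ (or the reverse, depending on the sign), which yields $\pi_1(A)-\pi_2(A) \le \max\set{\vol(K_1\setminus K_2)/\vol(K_1),\ \vol(K_2\setminus K_1)/\vol(K_2)} \le \eps$.
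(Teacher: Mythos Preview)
Your proof is correct and follows essentially the same approach as the paper: assume $\vol(K_1)\le\vol(K_2)$, use the $L^1$ formula $d_{tv}=\tfrac12\int|f_1-f_2|$, and split over the three regions $K_1\setminus K_2$, $K_2\setminus K_1$, $K_1\cap K_2$. The only difference is cosmetic: where the paper bounds the three contributions separately to get $\tfrac12(\eps+\eps)$, you telescope them to the exact identity $d_{tv}=\vol(K_2\setminus K_1)/\vol(K_2)$ before bounding, which is slightly cleaner but not a different argument.
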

\begin{proof}
By symmetry, we may assume that $\vol(K_1) \leq \vol(K_2)$. Let $f_1,f_2:\reals^n \rightarrow \reals^+$ denote the associated
densities with respect to $\pi_1$, $\pi_2$. The subsequent computation proves the result:
\begin{align*}
d_{tv}(\pi_1,\pi_2) &= \frac{1}{2} \int_{\reals^n} |f_1(x) - f_2(x)| dx \\
             &= \frac{1}{2}\left(\frac{\vol(K_1 \setminus K_2)}{\vol(K_1)} + \frac{\vol(K_2 \setminus K_1)}{\vol(K_2)}
                + \vol(K_1 \cap K_2)(\frac{1}{\vol(K_1)}-\frac{1}{\vol(K_2)}) \right) \\
             &\leq \frac{1}{2}\left(\frac{\vol(K_1 \setminus K_2) + \vol(K_2 \setminus K_1)}{\vol(K_1)} +
		   \vol(K_1 \cap K_2)(\frac{\vol(K_2) - \vol(K_1)}{\vol(K_1)\vol(K_2)})\right) \\
	     &\leq \frac{1}{2}\left(\frac{\vol(K_1) \eps}{\vol(K_1)} + \vol(K_1)\frac{\eps \vol(K_1)}{\vol(K_1)^2}\right) \\
             &= \frac{1}{2}(\eps + \eps) = \eps \\
\end{align*}
\end{proof}

\begin{lemma} \label{lem:convex-kernel}
For $S \subseteq \reals^n$, $K_S$ is a convex set. Furthermore, if $S$ is compact then $K_S$ is compact.
\end{lemma}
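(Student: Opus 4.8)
The plan is to argue directly from the definition $K_S = \set{x \in S : \forall y \in S,\ [x,y] \subseteq S}$, i.e.\ $K_S$ is exactly the set of points that ``see'' all of $S$. For convexity I would fix $x_1, x_2 \in K_S$ and $z = \lambda x_1 + (1-\lambda) x_2$ with $\lambda \in [0,1]$, and show $z \in K_S$. First, $z \in [x_1,x_2] \subseteq S$ because $x_1 \in K_S$ and $x_2 \in S$. The substantive point is that for an arbitrary $y \in S$ the whole segment $[z,y]$ lies in $S$; I would actually prove the stronger statement that the triangle $T := \mathrm{conv}\{x_1,x_2,y\}$ is contained in $S$, which suffices since $[z,y] \subseteq T$.

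To see $T \subseteq S$, write a point $p \in T$ as $p = \alpha x_1 + \beta x_2 + \gamma y$ with $\alpha,\beta,\gamma \ge 0$ summing to $1$. If $\alpha = 1$ then $p = x_1 \in S$; otherwise put $r = \tfrac{\beta}{1-\alpha} x_2 + \tfrac{\gamma}{1-\alpha} y$, a convex combination of $x_2$ and $y$, so $r \in [x_2,y] \subseteq S$ since $x_2 \in K_S$ and $y \in S$. Then $p = \alpha x_1 + (1-\alpha) r \in [x_1,r] \subseteq S$ because $x_1 \in K_S$ and $r \in S$. Hence $T \subseteq S$, so $[z,y] \subseteq S$; as $y \in S$ was arbitrary, $z \in K_S$, proving $K_S$ is convex (the case $K_S = \emptyset$ being vacuously fine).

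For compactness when $S$ is compact, boundedness is immediate from $K_S \subseteq S$, so it remains to check $K_S$ is closed. I would take $x_k \in K_S$ with $x_k \to x$; since $S$ is closed, $x \in S$. Fix any $y \in S$ and any $w = t x + (1-t) y \in [x,y]$ with $t \in [0,1]$, and set $w_k = t x_k + (1-t) y$. Each $w_k \in [x_k,y] \subseteq S$ because $x_k \in K_S$, and $w_k \to w$, so $w \in S$ by closedness of $S$. Thus $[x,y] \subseteq S$ for every $y \in S$, i.e.\ $x \in K_S$; hence $K_S$ is closed and bounded, so compact.

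The whole argument is elementary; the only step needing a moment's care is the triangle-covering claim, where one must sweep rays from the correct vertex ($x_1$) to the opposite edge $[x_2,y]$ so that the far endpoint of each ray already lies in $S$ via the kernel property of the \emph{other} vertex ($x_2$). I do not anticipate any genuine obstacle here.
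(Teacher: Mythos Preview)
Your argument is correct and essentially matches the paper's. Both proofs rest on the same geometric fact, that for $x_1,x_2\in K_S$ and $y\in S$ the full triangle $\mathrm{conv}\{x_1,x_2,y\}$ lies in $S$; you establish this directly via barycentric coordinates, while the paper phrases the same step as a proof by contradiction (assuming some $q\in[z,y]\setminus S$ and drawing the line from $x_1$ through $q$ to hit $[x_2,y]$). The compactness parts are identical in substance: a sequence $x_k\to x$ in $K_S$, an arbitrary $y\in S$, and closedness of $S$ to pass to the limit on each point of the segment.
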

\begin{proof}
If $K_S = \emptyset$ we are done. Therefore assume $K_S \neq \emptyset$, and pick $x,y \in K_S$.  Now
take $z \in [x,y]$. We need to show that $\forall p \in S$, $[z,p] \subseteq S$. Assume not, then there exists $p \in
S$, $q \notin S$, such that $q \in [z,p]$. Since $x,y \in S$ we have that $[x,p],[y,p] \subseteq S$. Furthermore we see that
$q$ is in the interior of the triangle defined by $x,y,p$. Let $l(x,q)$ denote the line through $x,q$. Since $q$ is in
the interior of $\conv \set{x,y,p}$ we must have that $l(x,q)$ intersects the segment $[y,p]$ in some point $r$. But
now note that $r \in S$, since $r \in [y,p]$, and by construction $[x,r] \not\subseteq S$, a contradiction. This proves
the statement.

For the furthermore, we assume that $S$ is compact. To show that $K_S$ is compact, we need only show that
$K_S$ is closed. If $x$ is a limit point of $K_S$, we have a sequence $\set{x_i}_{i=1}^\infty \subseteq
K_S$ converging to $x$. Now to take any point $p \in S$. We see that $[x_i,p] \subseteq S$ for all $i \geq 1$, and
we note that the sequence of line segments $[x_i,p]$ converge to $[x,p]$ as $i \rightarrow \infty$. By compactness of
$S$, we have that the limit segment $[x,p]$ is indeed contained in $S$. Since this holds for all $p$, we see that $x \in
K_S$ are needed.
\end{proof}

\begin{proof} [Proof of Lemma \ref{lem:near-convex-decomposition} (Near Convex Decomposition)]
First we will show that we can find subset $K^r_S \subseteq K_S$ that takes up most and the kernel and that lies deep
inside it, i.e. that $K^r_S + \ball{0}{r} \subseteq K_S$. Formally, let $K^r_S = \set{x: \ball{x}{r} \subseteq K_S}$
where $r > 0$. Let $K_S^\circ$ denote the interior of $K_S$. We note that $K_S^\circ = \cup_{i=1}^\infty
K_S^{\frac{1}{i}}$.  By the continuity of measure, there exists a positive integer $j$, such that for $\eps_0 =
\frac{1}{j}, \vol(K_S^{\eps_0}) \geq (1-\eps)\vol(K_S^\circ)$. Since $K_S$ is convex, we know that $\vol(K_S^\circ) =
\vol(K_S)$ and hence
\[
\frac{\vol(K_S^{\eps_0})}{\vol(S)} \geq \frac{(1-\eps)\vol(K_S)}{\vol(S)} = (1-\eps)\eta(S)
\]

Let $f:\reals^n \rightarrow \reals$ be $f(x)= \vol(S_2) 1_{S_1}(x) - \vol(S_1) 1_{S_2}(x)$ where $1_{S_1},1_{S_2}$ are the
indicator functions for $S_1$ and $S_2$ respectively. We note that $\int_S f = \vol(S_2)\vol(S_1) - \vol(S_1)\vol(S_2) =
0$. By Theorem $\ref{thm:decomposition}$, for every $\eps_1 > 0$, there exists an $\eps_1$-thin decomposition ${\cal
P}_{\eps_1}$ of $S$ such that each part $P\in \cal P$ is obtained by successive half space cuts from $S$ and $\int_Pfdx
= 0$. We note that the condition $\int_Pfdx=0$ immediately implies condition $(1)$ for ${\cal P}_{\eps_1}$. For the time
being we will assume that $\eps_1 < \frac{1}{2} \eps_0$ and determine its exact value later.

Let $N = \set{P: P \in {\cal P}_{\eps_1}, P \cap K_S^{\eps_0} \neq \emptyset}$. Since ${\cal P}_{\eps_1}$ is a
decomposition of $S$, we note that $\cup_{P \in N} P \supseteq K_S^{\eps_0}$ and hence
\[
\frac{\sum_{P \in N} \vol(P)}{\vol(S)} = \frac{\vol(\cup_{P \in N} P)}{\vol(S)} \geq
\frac{\vol(K_S^{\eps_0})}{\vol(S)} \geq (1-\eps)\eta(S)
\]

We will now show that for an appropriately chosen $\eps_1$ every $P \in N$ is $\eps$-convex. Our strategy is as
follows. We analyze a minimal cylinder $C$ of radius $\eps_1$ containing $P$, which exists by our assumption on ${\cal
P}_{\eps_1}$. We will use the fact that $P$ contains a point deep inside the kernel to show that a subcylinder $C'$ of
$C$ is fully contained inside $S$. Lastly we will show that $P' = C' \cap P$ is a convex body whose volume is at least a
$(1-\eps)$ fraction of the volume of $P$. 

Take $P \in N$. Let $C$ be the cylinder of radius at most $\eps_1$ and let $L \subseteq C$ denote the axis of $C$.
Without loss of generality, we may assume that $L$ is a subset of the $x_1$ axis, i.e.
\[
C = \set{x: x \in \reals^n, a \leq x_1 \leq b, \sum_{i=2}^n x_i^2 \leq \eps_1^2} \text{.}
\]

\begin{figure}[t]
\centering
\includegraphics[width=2in]{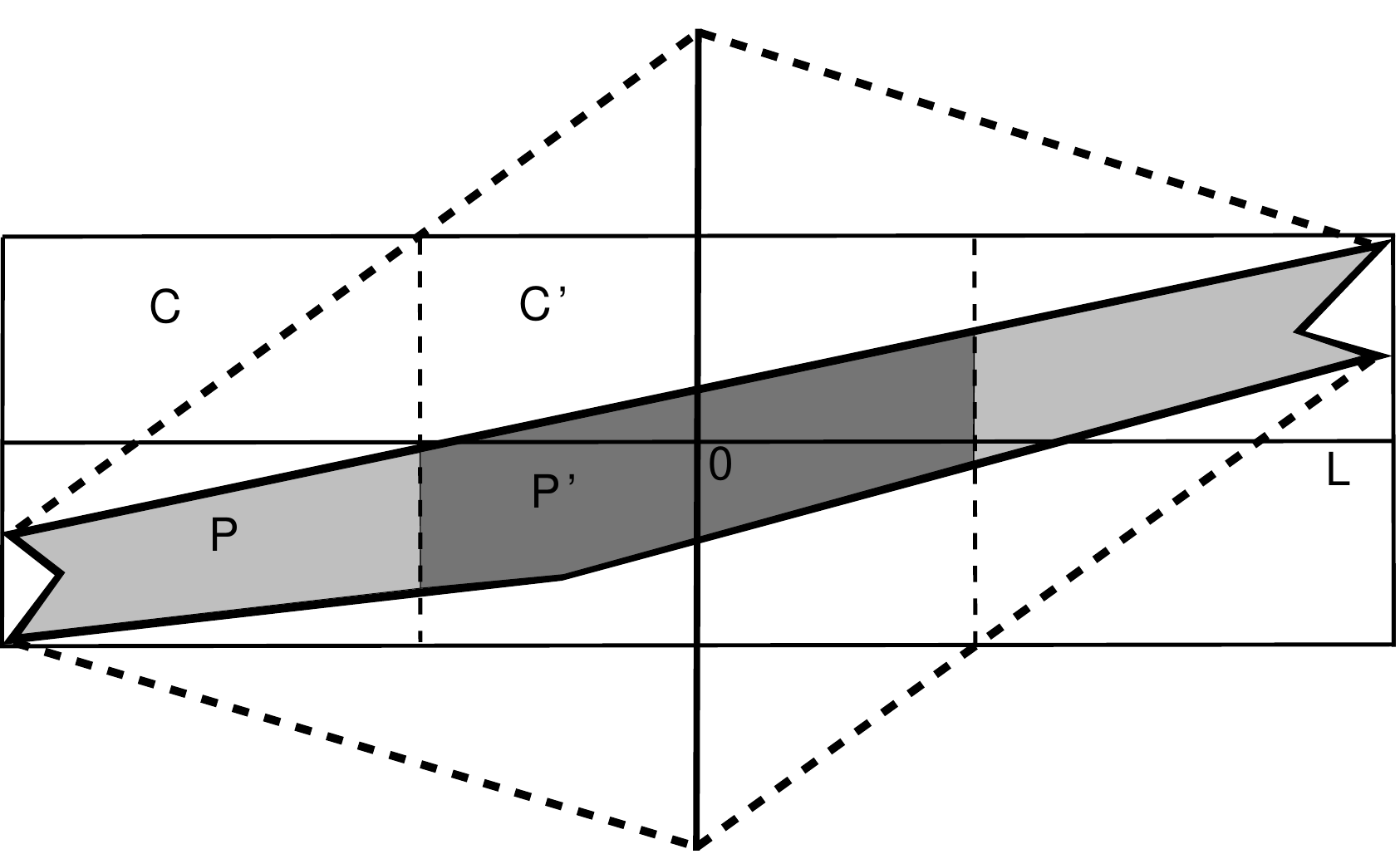}
\caption{P is $\eps$-convex}
\end{figure}

%\begin{figure}[t]
%\begin{center}
%\psfig{file=near-convex.eps}
%\end{center}
%\caption{P is $\eps$-convex}
%\end{figure}

By assumption, we have that $P \cap K_S^{\eps_0} \neq \emptyset$, so pick $c \in P \cap K_S^{\eps_0}$. Since $\eps_1 <
\frac{1}{2} \eps_0$, there exists $d \in L$ such that $\|c-d\| \leq \eps_1 < \frac{1}{2} \eps_0$. Hence
$\ball{c}{\eps_0} \subseteq K_S \Rightarrow \ball{d}{\frac{\eps_0}{2}} \subseteq K_S$. Let $\delta = \frac{1}{2}\eps_0$.
Without loss of generality, we may assume that $d = 0$. Furthermore, by choosing $a,b$ minimal subject to containing
$P$, we may assume there exist points $v,w \in P$ such that $v_1 = a$ and $w_1 = b$. By possibly rotating $C$, we may
assume that $v = (a,r,0,\ldots,0)$ where $0 \leq r \leq \eps_1$. By assumption on $d$, we know that $t =
(0,-\delta,0,\ldots,0) \in K_S$.  Therefore the line segment $[v,t] \subseteq S$. By a simple computation, we see that
$[v,t]$ intersects the $x_1$ axis at $v' = (\frac{\delta}{r + \delta}a,0,\ldots,0)$. Since $0 \in K_S$, we also have
that $[v',0] \in S \Rightarrow v^* = (\frac{\delta}{\eps_1 + \delta}a,0,\ldots,0) \in S$. By symmetric reasoning with
respect to $w$, we have that $w^* = (\frac{\delta}{\eps_1 + \delta}b,0,\ldots,0) \in S$.\\ Now, consider the subcylinder
\[
C' = \set{x: \frac{\delta-\eps_1}{\delta+\eps_1}a \leq x_1 \leq \frac{\delta-\eps_1}{\delta+\eps_1} b, \sum_{i=2}^n
x_i^2 \leq \eps_1^2}
\]
\begin{claim}
$C'\subseteq S$.
\end{claim}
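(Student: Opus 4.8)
The plan is to show that every point of $C'$ lies on a segment joining a point deep in the kernel to one of the two points $v^*$ or $w^*$ already produced, and hence lies in $S$. First I would introduce the flat $(n-1)$-disk $B := \set{y \in \reals^n : y_1 = 0,\ \sum_{i=2}^n y_i^2 \le \delta^2}$; since $d = 0$ and $\ball{0}{\delta} \subseteq K_S$, we have $B \subseteq K_S$. The key claim to establish is the inclusion
\[
C' \ \subseteq\ \conv\!\big(B \cup \set{v^*}\big)\ \cup\ \conv\!\big(B \cup \set{w^*}\big).
\]
Granting this, the proof finishes immediately: the convex hull of a convex set and a point is the union of the segments from that point to the set, so $\conv(B \cup \set{v^*}) = \bigcup_{y \in B}[y, v^*]$, and since every $y \in B \subseteq K_S$ and $v^* \in S$, the defining property of the kernel gives $[y, v^*] \subseteq S$; likewise for $w^*$. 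Hence both hulls, and therefore $C'$, lie in $S$.

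To prove the inclusion I would take an arbitrary $x = (x_1, x') \in C'$, so $\norm{x'} \le \eps_1$ and $\tfrac{\delta-\eps_1}{\delta+\eps_1}a \le x_1 \le \tfrac{\delta-\eps_1}{\delta+\eps_1}b$; note $a \le 0 \le b$ since the axis point $d = 0$ lies in the segment $L$, and $\delta > \eps_1$ by the standing assumption $\eps_1 < \tfrac12\eps_0 = \delta$. Suppose $x_1 \le 0$ (the case $x_1 \ge 0$ is symmetric, with $v^*$ and $a$ replaced by $w^*$ and $b$); if $a = 0$ then $x_1 = 0$ and $x \in B \subseteq S$, so assume $a < 0$. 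Put $\lambda := \tfrac{(\delta+\eps_1)x_1}{\delta a}$. Because $x_1 \le 0$ and $a < 0$ we get $\lambda \ge 0$, and because $x_1 \ge \tfrac{\delta-\eps_1}{\delta+\eps_1}a$ we get $\lambda \le \tfrac{\delta-\eps_1}{\delta} < 1$. A direct check shows $(\lambda v^*)_1 = x_1$ and $(\lambda v^*)_i = 0$ for $i \ge 2$, so $x = (1-\lambda)y + \lambda v^*$ with $y := (0,\, x'/(1-\lambda))$; finally $y \in B$ because $\norm{x'} \le \eps_1 \le (1-\lambda)\delta$, the last inequality being exactly $\lambda \le \tfrac{\delta-\eps_1}{\delta}$. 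Thus $x \in [y,v^*] \subseteq S$, which proves the inclusion.

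I do not expect a real obstacle here: geometrically the argument just says that the union of the two ``ice-cream cones'' with common spherical base $B$ of radius $\delta$ (sitting deep inside the kernel) and apexes $v^*, w^*$ swallows the thinner but longer full cylinder $C'$. The only care needed is the sign bookkeeping and the degenerate endpoints $a=0$ or $b=0$; the ratio $\tfrac{\delta-\eps_1}{\delta+\eps_1}$ in the definition of $C'$ is calibrated so that the cross-sectional radius of $\conv(B \cup \set{v^*})$ equals exactly $\eps_1$ at $x_1 = \tfrac{\delta-\eps_1}{\delta+\eps_1}a$, so the estimate is tight and there is no slack to exploit. The substantive work of the lemma lies in the earlier steps that produced $v^*, w^* \in S$ and fixed the length of $C'$; this claim is the short geometric payoff.
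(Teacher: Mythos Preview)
Your proof is correct and follows essentially the same approach as the paper: you exhibit each $x\in C'$ as a point on a segment from a point of the disk $B=\{y:y_1=0,\ \|y'\|\le\delta\}\subseteq K_S$ to one of the axis points $v^*,w^*\in S$, which is exactly what the paper does (it picks $x$ on the $x_1\ge 0$ side, draws the line through $x$ and $w^*$, and checks that its intersection with $\{x_1=0\}$ lands in $\ball{0}{\delta}\subseteq K_S$). Your packaging via $C'\subseteq\conv(B\cup\{v^*\})\cup\conv(B\cup\{w^*\})$ and the explicit convex combination with $\lambda=\tfrac{(\delta+\eps_1)x_1}{\delta a}$ is a clean coordinate-free version of the paper's computation, but the geometric content is identical.
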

\begin{proof}
Take $x \in C'$. By symmetry we may assume that $x = (e,f,0,\ldots,0)$ where $0 \leq e \leq
\frac{\delta-\eps_1}{\delta+\eps_2}b$ and $0 \leq f \leq \eps_1$. Now examine the line $l(x,w^*)$. A simple computation
reveals that $l(x,w^*)$ intersects the $x_2$ axis at the point $x^* = (0,\frac{\delta b}{\delta b - (\delta +
\eps_1)e}f,0,\ldots)$. Now we note that
\[
\frac{\delta b}{\delta b - (\delta + \eps_1)e}f \leq \frac{\delta b}{\delta b - (\delta + \eps_1)e}\eps_1 \leq
\frac{\delta b}{\delta b - (\delta + \eps_1)\frac{\delta - \eps_1}{\delta + \eps_1}b}\eps_1 =
\frac{\delta b}{\eps_1 b}\eps_1 = \delta
\]
Therefore by assumption on $d$ we know that $x^* \in K_S$. Since $x \in [x^*,w^*]$, we have that $x \in S$ as
needed.
\end{proof}
Now define $P' := P \cap C'$.
\begin{claim}
$P'$ is convex.
\end{claim}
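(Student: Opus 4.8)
The plan is to use two facts already established: the preceding claim that $C' \subseteq S$, and the fact (from Theorem~\ref{thm:decomposition}) that $P$ is obtained from $S$ by successive halfspace cuts, so that $P = \cl\bigl((S \cap \Gamma)^\circ\bigr)$ where $\Gamma = H_1 \cap \dots \cap H_k$ is the intersection of the halfspaces used to carve out $P$. The cylinder $C'$ is convex, and $\Gamma$ is convex as an intersection of halfspaces. The key observation is that intersecting with $C'$ absorbs the (possibly nonconvex) factor $S$: since $C' \subseteq S$,
\[
P' \;=\; P \cap C' \;=\; \cl\bigl((S \cap \Gamma)^\circ\bigr) \cap C' \;=\; \Gamma \cap C',
\]
a finite intersection of convex sets, hence convex.

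Concretely I would phrase this as a segment argument, both because it is transparent and because it bypasses the $\cl((\cdot)^\circ)$ bookkeeping in the definition of the parts. Take $x,y \in P' = P \cap C'$. Since $C'$ is convex, $[x,y] \subseteq C'$, and by the preceding claim $C' \subseteq S$, so $[x,y] \subseteq S$. Since $P \subseteq \Gamma$ we have $x,y \in \Gamma$, and $\Gamma$ is convex, so $[x,y] \subseteq \Gamma$. Hence $[x,y]$ lies in $S \cap \Gamma \cap C' = \Gamma \cap C'$; combining with $[x,y]\subseteq C'$ and recalling that $C'$ is a full-dimensional cylinder sitting inside $S$ (indeed, after shrinking it by an arbitrarily small amount if needed, inside $S^\circ$, which is exactly what the construction of $C'$ gives), we get $[x,y]\subseteq \cl\bigl((S\cap\Gamma)^\circ\bigr)\cap C' = P'$. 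Thus $P'$ is convex.

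I do not expect any real obstacle here. All the geometric work has already gone into proving the inclusion $C' \subseteq S$ in the preceding claim; once that is in hand, convexity of $P'$ follows from the single observation that $C'$ ``swallows'' the nonconvex set $S$ while every other set in sight (the cylinder $C'$ and the defining halfspaces of $P$) is convex. The only thing requiring a sentence of care is the mild topological wrapper in the definition of a decomposition's parts, and that is handled by the remark that $C'$ can be taken inside the interior of $S$, so that passing to closures of interiors does not disturb the identity $P \cap C' = \Gamma \cap C'$.
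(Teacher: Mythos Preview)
Your proposal is correct and follows essentially the same route as the paper: write $P$ as $S$ intersected with finitely many halfspaces, use the preceding claim $C' \subseteq S$ to absorb the nonconvex factor $S$, and conclude that $P' = P \cap C'$ is a finite intersection of convex sets. If anything, you are more careful than the paper, which simply writes $P = S \cap \bigcap_i H_i$ without mentioning the $\cl((\cdot)^\circ)$ wrapper; your remark that the wrapper is harmless here because $C'$ sits (up to an arbitrarily small shrink) in $S^\circ$ is the right way to dispose of that technicality.
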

\begin{proof}
To see this note that $P$ is obtained from $S$ via halfspace cuts, i.e.
$P = S \bigcap_{i=1}^m H_i$ where each $H_i$ denotes a halfspace. Now we see that
\[
P' = P \cap C' = S \cap C' \bigcap_{i=1}^m H_i = C' \bigcap_{i=1}^m H_i
\]
since $C' \subseteq S$. Since the intersection of convex sets is convex, we have that $P'$ is convex as needed.
\end{proof}
Now note that $P \sym P' = P \setminus P' = P \setminus C'$. We will now show that for an appropriate choice of
$\eps_1$, depending only on $\delta$ and $\eps$, we have that $P \sym P' \leq \eps \vol(P')$ which will prove that
$P$ is indeed $\eps$-convex. In fact, letting $P_+ = P \cap \set{x: x_1 \geq 0}, P'_+ = P \cap \set{x: x_1 \geq 0}$,
we will prove that
\[
\vol(P_+ \sym P'_+) \leq \eps \vol(P'_+)
\]
By symmetry, the same inequality will follow for the $x_1 \leq 0$ side, and by summing up the two inequalities the
result follows.

Let $S(t) = \set{x: x \in P_+, x_1 = t}$, and $s(t) = \vol_{n-1}(S(t))$. Now let $b' =
\frac{\delta-\eps_1}{\delta+\eps_1}b$ and let $t^* = \argmax_{b' \leq t \leq b} s(t)$.
We have that
\[
\vol_n(P_+ \setminus C') = \int_{b'}^b s(t) dt \leq \int_{b'}^b s(t^*) dt
	= (b-b')s(t^*) = \left(\frac{2\eps_1}{\delta+\eps_1}\right) b s(t^*)
\]
Now by construction the section $S(0) \subseteq K_S$. I claim that $S(0) \subseteq K_{P_+}$. Take $x \in S(0)$ and
$y \in P_+$. Since $x \in K_S$, we have that $[x,y] \subseteq S$. Now $P_+ = S \bigcap_{i=1}^m H_i$. Clearly $x,y \in
P \Rightarrow x, y \in H_i$, for $1 \leq i \leq m$. Furthermore, since each $H_i$ is convex, we have that $[x,y]
\subseteq H_i$. Therefore $[x,y] \subseteq P_+$ as needed. Choose $\alpha \in [0,1]$ such that $(1-\alpha)0 + \alpha t^* =
b'$. Since $S(0) \subseteq K_{P_+}$, we see that
\[
(1-\alpha)S(0) + \alpha S(t^*) \subseteq S(b')
\]
Therefore by the Brunn-Minkowski inequality, we have that
\begin{align*}
s(b') &\geq
\left((1-\alpha)\vol_{n-1}(S(0))^{\frac{1}{n-1}} + \alpha\vol_{n-1}(S(t^*))^{\frac{1}{n-1}}\right)^{n-1} \\
               &\geq \alpha^{n-1} \vol_{n-1}(S(t^*)) \geq \left(\frac{\delta-\eps_1}{\delta+\eps_1}\right)^{n-1} s(t^*)
\end{align*}

Since $P'_+$ is convex we note that $\conv \set{0, S(b')} \subseteq P'_+$ and hence
\[
\vol_n(P'_+) \geq \vol_n\left(\conv \set{0, S(b')}\right) = \frac{1}{n} b's(b') \geq
\frac{1}{n}\left(\frac{\delta-\eps_1}{\delta+\eps_1}\right)^n b s(t^*)
\]

Now by choosing $\eps_1$ small enough such that
\[
\left(\frac{2\eps_1}{\delta+\eps_1}\right) \leq \eps \frac{1}{n}\left(\frac{\delta-\eps_1}{\delta+\eps_1}\right)^n
\]
we get that $\vol(P_+ \sym P'_+) \leq \eps \vol_n(P'_+)$ as needed.

\end{proof}

Using the above lemma, we now prove Theorem \ref{ISO1}.
\begin{proof}[Proof of Theorem \ref{ISO1} (Diameter isoperimetry)]
Let $(S_1,S_3,S_2)$ be a measurable partition of $S$. Without loss of generality we may assume that $\vol(S_1) \leq
\vol(S_2) \Rightarrow \alpha = \frac{\vol(S_1)}{\vol(S_2)} \leq 1$. Note that
\[
\vol(S) \geq \vol(S_1) + \vol(S_2) = \frac{\alpha+1}{\alpha} \vol(S_1) \Rightarrow
\vol(S_1) \leq \frac{\alpha}{\alpha+1} \vol(S)
\]
Let ${\cal P}_{\eps}$ be the decomposition of $S$ with respect to $(S_1,S_3,S_2)$ as defined in Lemma
$\ref{lem:near-convex-decomposition}$ with parameter $\eps$.  Let $N$ denote the set of $\eps$-convex needles. Let $N_+
= \set{P: P \in N, \vol(S_1 \cap P) \geq \frac{1}{2}\frac{\alpha}{\alpha+1} \vol(P)}$ and $N_- = N \setminus N_+$. Since
$N = N_+ \cup N_-$ and by assumption on $N$
\[
\sum_{P \in N} \frac{\vol(P)}{\vol(S)} \geq (1-\eps) \eta(S)
\]
we must have that either
\[
(a) \sum_{P \in N_-} \frac{\vol(P)}{\vol(S)} \geq \frac{1}{2} (1-\eps) \eta(S) \quad \mbox{ or } \quad
(b) \sum_{P \in N_+} \frac{\vol(P)}{\vol(S)} \geq \frac{1}{2} (1-\eps) \eta(S).
\]

Assume first that $(a)$ is true. We will show that $S_1$ and $S_2$ take up a small fraction of most partition parts and
that consequently $S_3$ must take up a large fraction of $S$. Take $P \in N^-$, and let $S_1^P = S_1 \cap P, S_2^P = S_2
\cap P, S_3^P = S_3 \cap P$.  By assumption on ${\cal P}_{\eps}$ we know that $\vol(S_1^P) = \alpha \vol(S_2^P)$.
Therefore we have that
\[
\vol(S_1^P) + \vol(S_2^P) = (\frac{\alpha + 1}{\alpha})\vol(S_1^P) \leq \frac{1}{2}\vol(P)
\]
by assumption on $N^-$. Since $\vol(S_1^P) + \vol(S_2^P) + \vol(S_3^P) = \vol(P)$, we must have that
$\vol(S_3^P) \geq \frac{1}{2}\vol(P)$. Therefore, we have that
\[
\frac{\vol(S_3)}{\vol(S)} \geq \sum_{P \in N^-} \frac{\vol(S_3^P)}{\vol(S)}
\geq \sum_{P \in N^-} \frac{1}{2} \frac{\vol(P)}{\vol(S)} \geq \frac{1}{4} (1-\eps) \eta(S)
\]
Since $\frac{d(S_1,S_2)}{D} \leq 1$, this proves the theorem for case $(a)$.
%\begin{align*}
%\vol(S_3) \geq \frac{1}{4} (1-\eps) \eta(S) \vol(S)
%&\geq \frac{1}{4} (1-\eps) \eta(S) \min \set{\vol(S_1), \vol(S_2)} \\
%          &\geq (1-\eps) \frac{\eta(S)}{4D} d(S_1,S_2) \min \set{\vol(S_1), \vol(S_2)}
%\end{align*}

Now assume that $(a)$ is not true. Then we must have that $(b)$ is true to satisfy our assumption on $N$.  Now take $P
\in N^+$. Our strategy here will be to derive isoperimetry for $P$ using the fact that $P$ is $\eps$-convex. By
approximating the measure of $P$ by that of its convex approximation, we will derive an isoperimetric inequality for $P$
with an additive error depending on $\eps$. Since in this case $S_1$ and $S_2$ take up a lower bounded fraction of $P$,
we will be able to transform the additive error into multiplicative error by making $\eps$ sufficiently small. The
statement will follow as a result.

So let $Q$ be a convex body such that $\vol(P \sym Q) \leq \eps \min \set{\vol(P), \vol(Q)}$. We may assume that $Q
\subseteq \conv(P)$, since otherwise $Q \cap \conv(P)$ is a convex body and strictly closer to $P$. Next, since
$\diam(P) = \diam(\conv(P))$ we have that $\diam(Q) \leq \diam(P) \leq \diam(S) = D$.

Let $\pi_P, \pi_Q$ denote the uniform measures on $P,Q$ respectively.
Let $S_1^Q = S_1^P \cap Q, S_2^Q = S_2^P \cap Q$ and $S_3^Q = Q \setminus (S_1^Q \cup S_2^Q)$. Since
$(S_1^P,S_3^P,S_2^P)$ partition $P$, we note that $S_3^Q = (S_3^P \cap Q) \cup (Q \setminus P)$. Then we have that
$d(S_1^Q,S_2^Q) \geq d(S_1^P,S_2^P) \geq d(S_1,S_2)$. By lemma $\ref{lem:tvd}$ we know that $d_{tv}(\pi_Q,\pi_P) \leq
\eps$ and so we get that
\begin{align*}
\pi_Q(S_1^Q) &= \pi_Q(S_1^P) \geq \pi_P(S_1^P) - \eps \geq \pi_P(S_1^P) - 3\eps \\
\pi_Q(S_2^Q) &= \pi_Q(S_2^P) \geq \pi_P(S_2^P) - \eps \geq \pi_P(S_2^P) - 3\eps \\
\pi_Q(S_3^Q) &\leq \pi_P(S_3^Q) + \eps = \pi_P(S_3^Q \cap P) + \eps \\
             &\leq \pi_Q(S_3^Q \cap P) + 2\eps = \pi_Q(S_3^P) + 2\eps \leq \pi_P(S_3^P) + 3\eps \\
\end{align*}

Since $Q$ is convex, using the isoperimetric inequality proved in \cite{LS93} we have that
\[
\pi_Q(S_3^Q) \geq \frac{d(S_1^Q,S_2^Q)}{\diam(Q)} \min \set{\pi_Q(S_1^Q), \pi_Q(S_2^Q)}
	     \geq \frac{d(S_1,S_2)}{D} \min \set{\pi_Q(S_1^Q), \pi_Q(S_2^Q)}
\]
Now bringing the above inequalities together, we get that
\begin{align*}
& \pi_P(S_3^P) + 3\eps \geq \frac{d(S_1,S_2)}{D} \min \set{\pi_P(S_1^P)-3\eps, \pi_P(S_2^P)-3\eps}  \Rightarrow \\
& \pi_P(S_3^P) \geq \frac{d(S_1,S_2)}{D} \min \set{\pi_P(S_1^P)-3\eps, \pi_P(S_2^P)-3\eps} -3 \eps \Rightarrow \\
& \pi_P(S_3^P) \geq \frac{d(S_1,S_2)}{D}\left(\pi_P(S_1^P)-3\eps\right) -3 \eps \\
\end{align*}
since $\pi_P(S_1^P) \leq \pi_P(S_2^P)$. Now choose $\eps \leq \min \set{ \frac{\eps_0}{12}\frac{d(S_1,S_2)}{D}
\frac{\alpha}{\alpha+1}, \frac{\eps_0}{12} \frac{\alpha}{\alpha+1}}$ where $\eps_0 > 0$. Since $P \in N^+$ we have that
$\vol(S_1^P) \geq \frac{1}{2}\frac{\alpha}{\alpha+1}\vol(P) \Rightarrow \pi_P(S_1^P) \geq \frac{1}{2}
\frac{\alpha}{\alpha+1}$. Hence $3\eps \leq \frac{\eps_0}{4} \frac{\alpha}{\alpha+1} \leq \frac{\eps_0}{2}
\pi_P(S_1^P)$. A simple computation now gives us that
\[
\pi_P(S_3^P) \geq (1-\eps_0)\frac{1}{2} \frac{d(S_1,S_2)}{D}\frac{\alpha}{\alpha+1} \\
\]
Now $\vol(S_3^P) = \pi_P(S_3^P) \vol(P)$, so we see that
\begin{align*}
\vol(S_3) &\geq \sum_{P \in N^+} \vol(S_3^P) \\
          &\geq (1-\eps_0) \frac{1}{2}\frac{d(S_1,S_2)}{D}\frac{\alpha}{\alpha+1} \sum_{P \in N^+} \vol(P) \\
          &\geq (1-\eps_0)\frac{d(S_1,S_2)}{D} \frac{\alpha}{\alpha+1} (1-\eps) \frac{\eta(S)}{2} \vol(S) \\
	  &\geq (1-\eps_0)(1-\eps) \frac{\eta(S)}{4D} d(S_1,S_2) \vol(S_1) \\
	  &= (1-\eps_0)(1-\eps) \frac{\eta(S)}{4D} d(S_1,S_2) \min \set{\vol(S_1), \vol(S_2)} \\
\end{align*}

Finally, letting $\eps_0 \rightarrow 0$ yields the result.
\end{proof}

We prove Theorem \ref{ISO2} following a similar proof strategy as Theorem \ref{ISO1}. We need the following lemma about second moments.
\begin{lemma}\label{lem:var_mix}
Let $f_1,\ldots,f_m:\reals^n \rightarrow \reals^+$ be densities with associated random variables $X_1,\ldots,X_m$ and
centroids $\mu_1,\ldots,\mu_m$ respectively. Let $g = \sum_{i=1}^m p_if_i$ be a mixture of the $f_i$s with
associated random variable $Y$ and centroid $\mu$. Then we have that
\[
\E[\|Y-\mu\|^2] = \sum_{i=1}^m p_i \E[\|X_i-\mu_i\|^2] + \sum_{1 \leq i < j \leq m} p_i p_j\|\mu_i-\mu_j\|^2
\]
\end{lemma}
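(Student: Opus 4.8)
The plan is to reduce the identity to two elementary ingredients: a ``parallel axis'' (bias--variance) decomposition applied to each component density $f_i$, and a purely algebraic identity relating $\sum_i p_i \sqnorm{\mu_i - \mu}$ to the pairwise sum $\sum_{i<j} p_i p_j \sqnorm{\mu_i - \mu_j}$. Throughout I use that $g=\sum_i p_i f_i$ is an honest mixture, so $p_i \ge 0$ and $\sum_i p_i = 1$.

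First I would record that the mixture centroid is the convex combination of the component centroids: by linearity of the integral, $\mu = \int x\, g(x)\,dx = \sum_i p_i \int x f_i(x)\,dx = \sum_i p_i \mu_i$. Again by linearity, $\E[\sqnorm{Y-\mu}] = \int \sqnorm{x-\mu}\, g(x)\,dx = \sum_i p_i \int \sqnorm{x-\mu}\, f_i(x)\,dx = \sum_i p_i \E[\sqnorm{X_i - \mu}]$. For each term, write $x - \mu = (x-\mu_i) + (\mu_i - \mu)$ and expand the square; the cross term $2\dpr{x-\mu_i}{\mu_i-\mu}$ integrates to zero against $f_i$ since $\int (x-\mu_i) f_i(x)\,dx = 0$ by definition of $\mu_i$. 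Hence $\E[\sqnorm{X_i-\mu}] = \E[\sqnorm{X_i-\mu_i}] + \sqnorm{\mu_i-\mu}$, and summing with weights $p_i$ gives
\[
\E[\sqnorm{Y-\mu}] = \sum_{i=1}^m p_i \E[\sqnorm{X_i-\mu_i}] + \sum_{i=1}^m p_i \sqnorm{\mu_i - \mu}.
\]

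It then remains to show $\sum_i p_i \sqnorm{\mu_i-\mu} = \sum_{i<j} p_i p_j \sqnorm{\mu_i - \mu_j}$, which I would verify by expanding both sides in terms of $\sqnorm{\mu_i}$ and inner products $\dpr{\mu_i}{\mu_j}$: using $\sum_i p_i = 1$ and $\mu = \sum_i p_i\mu_i$, the left side collapses to $\sum_i p_i \sqnorm{\mu_i} - \sqnorm{\mu}$, while the right side, rewritten symmetrically as $\frac12 \sum_{i,j} p_i p_j \sqnorm{\mu_i - \mu_j}$, expands to exactly the same quantity. Substituting this into the displayed equation finishes the proof. There is no real obstacle here — every step is a one-line computation — the only place to be slightly careful is the final algebraic identity, where one must track the normalization $\sum_i p_i = 1$ and the two representations of $\mu$ so that the cross terms cancel.
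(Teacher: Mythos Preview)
Your proof is correct and follows essentially the same approach as the paper: both compute $\mu=\sum_i p_i\mu_i$, reduce $\E[\sqnorm{Y-\mu}]$ to $\sum_i p_i\E[\sqnorm{X_i-\mu_i}] + \big(\sum_i p_i\sqnorm{\mu_i}-\sqnorm{\mu}\big)$, and then verify the algebraic identity $\sum_i p_i\sqnorm{\mu_i}-\sqnorm{\mu}=\sum_{i<j}p_ip_j\sqnorm{\mu_i-\mu_j}$. Your presentation via the parallel-axis (bias--variance) step is a bit more structured than the paper's single chain of equalities, but the underlying computation is identical.
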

\begin{proof}
Since $g$ is a mixture, we see that
\[
\mu = \int_{\reals^n} x g(x) dx = \sum_{i=1}^m p_i \int_{\reals^n} x f_i(x) dx = \sum_{i=1}^m p_i \mu_i
\]
Next, note that $\|Y-\mu\|^2 = \dpr{Y-\mu}{Y-\mu}$. The following computation yields the result:
\begin{align*}
\E[\|Y-\mu\|^2] &= \int_{\reals^n} \dpr{x-\mu}{x-\mu}g(x)dx \\
                &= \int_{\reals^n} \dpr{x}{x}g(x)dx - 2\dpr{\mu}{\mu} + \dpr{\mu}{\mu}
                = \sum_{i=1}^m p_i \int_{\reals^n} \dpr{x}{x}f_i(x)dx - \dpr{\mu}{\mu} \\
                &= \sum_{i=1}^m p_i \int_{\reals^n} (\dpr{x}{x}-\dpr{\mu_i}{\mu_i})f_i(x)dx +
		   \sum_{i=1}^m p_i\dpr{\mu_i}{\mu_i} - \dpr{\sum_{i=1}^m p_i\mu_i}{\sum_{j=1}^m p_i \mu_i} \\
                &= \sum_{i=1}^m p_i \E[\|X_i-\mu_i\|^2] + \sum_{i=1}^m p_i\dpr{\mu_i}{\mu_i}
							- \sum_{i=1}^m p_i^2 \dpr{\mu_i}{\mu_i}
					                + \sum_{1 \leq i < j \leq m} 2p_ip_j\dpr{\mu_i}{\mu_j} \\
                &= \sum_{i=1}^m p_i \E[\|X_i-\mu_i\|^2] + \sum_{i=1}^m (1-p_i)p_i\dpr{\mu_i}{\mu_i}
					                - \sum_{1 \leq i < j \leq m} 2p_ip_j\dpr{\mu_i}{\mu_j} \\
                &= \sum_{i=1}^m p_i \E[\|X_i-\mu_i\|^2] + \sum_{i=1}^m \sum_{\substack{j=1 \\ j \neq i}}^m p_ip_j\dpr{\mu_i}{\mu_i}
					                - \sum_{1 \leq i < j \leq m} 2p_ip_j\dpr{\mu_i}{\mu_j} \\
                &= \sum_{i=1}^m p_i \E[\|X_i-\mu_i\|^2] + \sum_{1 \leq i < j \leq m} p_ip_j\dpr{\mu_i}{\mu_i}
							- 2p_ip_j\dpr{\mu_i}{\mu_j} + p_ip_j\dpr{\mu_j}{\mu_j} \\
                &= \sum_{i=1}^m p_i \E[\|X_i-\mu_i\|^2] + \sum_{1 \leq i < j \leq m} p_ip_j \|\mu_i-\mu_j\|^2
\end{align*}
\end{proof}

\begin{proof}[Proof of Theorem \ref{ISO2} (Second moment isoperimetry)]
Let $(S_1,S_3,S_2)$ be the measurable partition of $S$. We may assume $\vol(S_1) \leq \vol(S_2)$ and so $\alpha =
\frac{\vol(S_1}{\vol(S_2)} \leq 1$. Let ${\cal P}_{\eps}$, $N$, $N^+$, $N^-$ be defined as in the proof of Theorem
$\ref{ISO1}$. Again as in Theorem $\ref{ISO1}$ we have the cases $(a)$ and $(b)$. If case $(a)$ occurs, then by the
proof of Theorem $\ref{ISO1}$ we have that
\[
\vol(S_3) \geq \frac{1}{4} (1-\eps) \eta(S) \vol(S)
\]
as needed. So we may assume assume that we are in case $(b)$, i.e that
\[
\sum_{P \in N_+} \frac{\vol(P)}{\vol(S)} \geq \frac{1}{2} (1-\eps) \eta(S)
\]
Now for each $P \in {\cal P}_\eps$, let $\pi_P$ denote the uniform measure on $P$, $\mu_P$ denote
the centroid of $P$, and let $M_P = \E_P[\|X-\mu_P\|^2]$. Now we note that $\pi_S$, the uniform measure
on $S$, is a mixture of the $\pi_P$s, i.e.
\[
\pi_S = \sum_{P \in {\cal P}_\eps} \frac{\vol(P)}{\vol(S)}\pi_P
\]
Therefore by Lemma \ref{lem:var_mix} we have that
\[
M_S = \sum_{P \in {\cal P}_\eps} \frac{\vol(P)}{\vol(S)}M_P + \sum_{\substack{\set{P,Q} \subseteq {\cal P}_\eps \\ P \neq Q}}
\frac{\vol(P)\vol(Q)}{\vol(S)^2}\|\mu_P-\mu_S\|^2 \geq \sum_{P \in N^+} \frac{\vol(P)}{\vol(S)}M_P
\]
Let $V = \sum_{P \in N^+} \vol(P)$. By assumption $V \geq \frac{1}{2} (1-\eps) \eta(S) \vol(S)$, and hence
\[
\sum_{P \in N^+} \frac{\vol(P)}{V} M_P \leq \sum_{P \in N^+} \frac{2}{(1-\eps)\eta(S)} \frac{\vol(P)}{\vol(S)}
\leq \frac{2}{(1-\eps)\eta(S)} M_S
\]
Let $N^* = \set{P: P \in N^+, M_P \leq \frac{4}{(1-\eps)\eta(S)} M_S}$. Since $\sum_{P \in N^+} \frac{\vol(P)}{V} M_P$
is an average of positive numbers by Markov's inequality we must have that
\[
\sum_{P \in N^*} \vol(P) \geq \frac{1}{2}V \geq \frac{1}{4} (1-\eps) \eta(S) \vol(S)
\]

Now take $P \in N^*$. By assumption on $N^* \subseteq N$, there exists $Q$ a convex body such that
$\vol(P \sym Q) \leq \eps \min \set{\vol(P), \vol(Q)}$. In particular, by the construction of Lemma
\ref{lem:near-convex-decomposition} we may assume that $Q \subseteq P$. Let $\bar{Q} = P \setminus Q$,
and let $\pi_{\bar{Q}},\pi_Q$ denote the uniform measures on $\bar{Q},Q$ respectively. We now see that
\[
\pi_P = \frac{\vol(\bar{Q})}{\vol(P)}\pi_{\bar{Q}} + \frac{\vol(Q)}{\vol(P)}\pi_{Q}
\]
As done previously above from Lemma $\ref{lem:var_mix}$ we readily see that
\[
M_P \geq \frac{\vol(Q)}{\vol(P)}M_Q \Rightarrow \frac{\vol(P)}{\vol(Q)} M_P \geq M_Q
\Rightarrow (1+\eps)M_P \geq M_Q
\]

As in the proof of Theorem \ref{ISO1}, let $S_1^Q = S_1^P \cap Q$, $S_2^Q = S_2^P \cap Q$, and $S_3^Q = Q \setminus
(S_1^Q \cup S_2^Q)$. Since $Q$ is a convex set, using the isoperimetric inequality proved in \cite{KLS95} we get that
\begin{align*}
\pi_Q(S_3^Q) &\geq \frac{d(S_1^Q,S_2^Q)}{2\sqrt{M_Q}} \min \set{\pi_Q(S_1^Q),\pi_Q(S_2^Q) }
	    \geq \frac{d(S_1,S_2)}{2\sqrt{(1+\eps)M_P}} \min \set{\pi_Q(S_1^Q),\pi_Q(S_2^Q)} \\
	    &\geq \left(\frac{(1-\eps)\eta(S)}{8(1+\eps)M_S}\right)^{\frac{1}{2}} d(S_1,S_2)
		\min \set{\pi_Q(S_1^Q),\pi_Q(S_2^Q)}
\end{align*}
Using the same analysis as in Theorem $\ref{ISO1}$, the above inequality gives us that
\[
\pi_P(S_3^P) \geq \left(\frac{(1-\eps)\eta(S)}{8(1+\eps)M_S}\right)^{\frac{1}{2}} d(S_1,S_2)
		  \min \set{\pi_P(S_1^P)-3\eps,\pi_P(S_2^P)-3\eps} -3\eps
\]
Now choose
\[
\eps \leq \min \left\{ \left(\frac{\eps_0}{12}\right)\left(\frac{(1-\eps)\eta(S)}{8(1+\eps)M_S}\right)^{\frac{1}{2}}
d(S_1,S_2)\left(\frac{\alpha}{\alpha+1}\right), \left(\frac{\eps_0}{12}\right)\left(\frac{\alpha}{\alpha+1}\right)\right\}
\]
for any $\eps_0 > 0$. By the same analysis as in Theorem $\ref{ISO1}$, we get that
\[
\pi_P(S_3^P) \geq
	(1-\eps_0)\left(\frac{(1-\eps)\eta(S)}{8(1+\eps)M_S}\right)^{\frac{1}{2}} d(S_1,S_2)\frac{\alpha}{\alpha+1}
\]
Using the fact that $\sum_{P \in N^*} \vol(P) \geq \frac{1}{4} (1-\eps) \eta(S) \vol(S)$ we get that
\begin{align*}
\vol(S_3) &\geq \sum_{P \in N^*} \vol(S_3^P) \\
          &\geq (1-\eps_0)\left(\frac{(1-\eps)\eta(S)}{8(1+\eps)M_S}\right)^{\frac{1}{2}}
		d(S_1,S_2)\frac{\alpha}{\alpha+1} \sum_{P \in N^*} \vol(P) \\
	  &\geq \frac{(1-\eps_0)(1-\eps)^{\frac{3}{2}}}{(1+\eps)^{\frac{1}{2}}}
		\left(\frac{\eta(S)^{\frac{3}{2}}}{16\sqrt{M_S}}\right) d(S_1,S_2)\frac{\alpha}{\alpha+1} \vol(S) \\
	  &\geq \frac{(1-\eps_0)(1-\eps)^{\frac{3}{2}}}{(1+\eps)^{\frac{1}{2}}}
		\left(\frac{\eta(S)^{\frac{3}{2}}}{16\sqrt{M_S}}\right) d(S_1,S_2)\vol(S_1) \\
	  &= \frac{(1-\eps_0)(1-\eps)^{\frac{3}{2}}}{(1+\eps)^{\frac{1}{2}}}
		\left(\frac{\eta(S)^{\frac{3}{2}}}{16\sqrt{M_S}}\right) d(S_1,S_2)\min \set{\vol(S_1), \vol(S_2)} \\
\end{align*}

Finally, letting $\eps_0 \rightarrow 0$ yields the result.
\end{proof}

\section{Conductance and mixing time}
\subsection{Local Conductance}
Ball walk on star-shaped bodies could potentially get stuck in points with very small local conductance. Here we prove that most of the points in a star-shaped body have good local conductance. First, we extend a lemma from \cite{KLS97} from convex bodies to star-shaped bodies which leads to the proof of good local conductance. The proof is essentially identical to the case of convex bodies.
\begin{lemma}\label{lem:acrossSurfaceMeasure}
Let $L$ be a measurable subset of the surface of a star-shaped set $S$ in $\reals^n$ and let
\[
S_L:=\{(x,y):x\in S, y\notin S, ||x-y||\leq r, \bar{xy}\cap L\neq \phi\}
\]
Then the $2n-$dimensional measure of $S_L$ is at most
\[
r^{n+1} \vol_{n-1}(L) \frac{\pi_{n-1}}{n+1}
\]
\end{lemma}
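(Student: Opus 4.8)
The plan is to reproduce the convex-body argument of \cite{KLS97} almost verbatim, bounding the $2n$-measure of $S_L$ by integrating the Jacobian of a surjective parametrization.

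First I would pass from the variable $(x,y)$ on $\reals^n\times\reals^n$ to $(x,w)$ with $w=y-x$; this is a unit-Jacobian change of variables, so $\vol_{2n}(S_L)$ equals the $2n$-measure of $\widetilde S_L=\set{(x,w): x\in S,\ x+w\notin S,\ \norm{w}\le r,\ [x,x+w]\cap L\ne\emptyset}$. For each such pair the segment from $x\in S$ to $y=x+w\notin S$ meets $L\subseteq\partial S$, so I pick a point $z=z(x,w)$ of $[x,y]\cap L$ at which the segment \emph{exits} $S$ (in the convex case this point is unique and automatically an exit). Writing $u=w/\norm{w}$ and $t=\norm{x-z}$ we get $x=z-tu$ with $z\in L$, $0\le t\le\norm{w}$, and $\dpr{u}{\nu(z)}\ge 0$, where $\nu(z)$ is the outer unit normal of $\partial S$ at $z$. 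Hence the map $(z,w,t)\mapsto\big(z-\tfrac{t}{\norm{w}}w,\ w\big)$, defined on $\set{(z,w,t): z\in L,\ 0<\norm{w}\le r,\ \dpr{w}{\nu(z)}\ge 0,\ 0\le t\le\norm{w}}$, is onto $\widetilde S_L$, so $\vol_{2n}(\widetilde S_L)$ is at most the integral of its Jacobian over the parameter domain.

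The only genuine computation is that Jacobian. Because the second output coordinate $w$ is itself a parameter, the $2n\times 2n$ Jacobian matrix is block triangular and its determinant equals $\abs{\det[\partial z/\partial\theta_1,\dots,\partial z/\partial\theta_{n-1},\ -u]}$ for local coordinates $\theta$ on $L$; the first $n-1$ columns are tangent vectors of $L$ at $z$, so this determinant equals the $(n-1)$-area element of $L$ at $z$ times $\abs{\dpr{u}{\nu(z)}}$. Switching $w$ to polar coordinates $w=\rho u$ with $\rho\in[0,r]$, $u\in\sph^{n-1}$, $dw=\rho^{\,n-1}\,d\rho\,du$, I obtain
\[
\vol_{2n}(S_L)\ \le\ \int_{L}\ \int_{\set{u\in\sph^{n-1}:\,\dpr{u}{\nu(z)}\ge0}}\ \int_0^r\ \int_0^{\rho}\ \dpr{u}{\nu(z)}\,\rho^{\,n-1}\ dt\,d\rho\,du\,d\sigma(z).
\]
The inner integral $\int_0^\rho dt=\rho$ turns $\int_0^r\rho^{\,n-1}\,\rho\,d\rho$ into $r^{n+1}/(n+1)$; by spherical symmetry $\int_{\set{\dpr{u}{\nu}\ge0}}\dpr{u}{\nu}\,du=\tfrac12\int_{\sph^{n-1}}\abs{\dpr{u}{\nu}}\,du$, which is the volume $\pi_{n-1}$ of the unit $(n-1)$-ball (a one-line spherical integral); and $\int_L d\sigma(z)=\vol_{n-1}(L)$. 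Multiplying the three factors yields exactly $r^{n+1}\,\vol_{n-1}(L)\,\pi_{n-1}/(n+1)$.

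The one step that is not literally a copy of the convex proof, and that I expect to be the crux, is the selection of $z$: for convex $S$ the segment meets $\partial S$ in a single point, forced to be an exit of $S$, whereas a star-shaped $S$ can be entered and left several times along $[x,y]$, so one must justify that a crossing of $L$ can be chosen to be an exit point — equivalently, that restricting to outward directions $\dpr{u}{\nu(z)}\ge0$ still gives a surjection onto $\widetilde S_L$; using $\abs{\dpr{u}{\nu(z)}}$ throughout without this refinement loses only a harmless factor of $2$. Everything else — the unit-Jacobian change of variables, the determinant evaluation, and the elementary $t$-, $\rho$- and spherical integrals — carries over unchanged from \cite{KLS97}.
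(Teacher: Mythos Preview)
Your approach is the same as the paper's---both reproduce the KLS97 argument, the paper saying so explicitly---with your Jacobian parametrization making explicit what the paper handles by the informal assertion that the measure ``is maximized when the surface of $S$ is a hyperplane in a larger neighborhood of $L$,'' followed by a direct flat-case computation.

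You correctly isolate the one step that is not a verbatim copy of the convex argument, but your hoped-for resolution fails: it is \emph{not} true in the star-shaped setting that a crossing of $L$ can always be chosen to be an exit point. Take a segment $[x,y]$ (with $x\in S$, $y\notin S$) that exits $S$, re-enters, then exits again, and let $L$ be a small patch of $\partial S$ containing only the re-entry point; then the unique $z\in[x,y]\cap L$ has $\dpr{u}{\nu(z)}<0$, so your restricted domain $\{\dpr{w}{\nu(z)}\ge0\}$ misses this $(x,w)$ and the map is not onto $\widetilde S_L$. Your fallback of integrating $\abs{\dpr{u}{\nu}}$ over the whole sphere is valid and costs only a factor of $2$, which is indeed harmless for the downstream use in Corollary~\ref{corr:avgLocalCond}. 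If you want the sharp constant, pair antipodal directions: for fixed $z\in L$, $\rho\in(0,r]$ and unit $u$, the sum of the $t$-lengths coming from $w=\rho u$ and $w=-\rho u$ equals
\[
\big|\{s\in[-\rho,0]:\ 1_S(z+su)\ne 1_S(z+(s+\rho)u)\}\big|\ \le\ \rho,
\]
with equality exactly when $S$ is a halfspace near $z$. This inequality is the actual content of the paper's ``flat maximizes'' claim; substituting the bound $\rho$ for the paired contribution into your hemisphere integral recovers $r^{n+1}\vol_{n-1}(L)\,\pi_{n-1}/(n+1)$ on the nose.
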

\begin{proof} %[Proof of Lemma \ref{lem:acrossSurfaceMeasure} (Measure across surface)]
By the sub-additivity property of this measure, we have that if $L=L_1+L_2$, then, 
\[
\vol(S_{L_1}\cup S_{L_2})\leq \vol(S_{L_1})+\vol(S_{L_2})
\]
Therefore, it is enough to prove this in the case when $L$ is infinitesimally small. Under this assumption, the measure $\mu(S_L)$ is maximized when the surface of S is a hyperplane in a larger neighborhood of $L$. Then the required measure is at most
\begin{eqnarray*}
\int_{x\in S_L:d(x,L)\leq r} \int_{y\notin S_L:||x-y||\leq r} dy dx &=& \int_{t=0}^r \int_{x\in S_L:d(x,L)=t} \int_{y\notin S_L:||x-y||\leq r} dy dx dt\\
&=& \int_{t=0}^r \int_{x\in S_L:d(x,L)=t} \vol(Cap_r(t)) dx dt\\
\end{eqnarray*}
where $Cap_r(t)$ refers to the intersection of a ball of radius $r$ and a halfspace at distance $t$ from the center of the ball. Hence, 
\begin{eqnarray*}
\mu(S_L)&\leq & \int_{t=0}^r \int_{x\in S_L:d(x,L)=t} \vol(Cap_r(t)) dx dt\\
&=& \int_{x\in L} \int_{t=0}^r t\vol(\mathbb{B}_{n-1}{\sqrt{r^2-t^2}}) dt dx\\
&=& \vol_{n-1}(L) \int_{t=0}^r t\vol(\mathbb{B}_{n-1}{\sqrt{r^2-t^2}}) dt\\
&=& \vol_{n-1}(L) \int_{t=0}^r \pi_{n-1}(\sqrt{r^2-t^2})^{n-1}tdt\\
&=& r^{n+1}\vol_{n-1}(L) \frac{\pi_{n-1}}{n+1}
\end{eqnarray*}
\end{proof}
Recall that the local conductance of a point $x\in S$ is defined as $l(x)=\frac{\vol(\ball{x}{r}\cap S)}{\vol(\ball{0}{r})}$. 
\begin{corollary}\label{corr:avgLocalCond}
Suppose a star-shaped body $S$ contains a unit ball with the origin being inside the kernel. Then the average local conductance $\lambda$ with respect to ball walk steps of radius $r$ is at least
\[
\lambda \geq 1- \frac{r\sqrt{n}}{2}
\]
\end{corollary}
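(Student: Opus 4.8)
I would first rewrite the average local conductance as
\[
\lambda = \frac{1}{\vol(S)\,\vol(\ball{0}{r})}\int_S \vol(\ball{x}{r}\cap S)\,dx ,
\]
and, using $\vol(\ball{x}{r}) = \vol(\ball{0}{r})$, note that
\[
1-\lambda = \frac{1}{\vol(S)\,\vol(\ball{0}{r})}\int_S \vol(\ball{x}{r}\setminus S)\,dx .
\]
The point is that the integral on the right is exactly the $2n$-dimensional measure of the set $S_L$ of Lemma~\ref{lem:acrossSurfaceMeasure} when $L$ is taken to be the whole boundary $\partial S$: if $x\in S$ and $y\notin S$ then, since $S$ is closed, the last point of the segment $\overline{xy}$ lying in $S$ belongs to $\partial S$, so the condition $\overline{xy}\cap\partial S\neq\emptyset$ holds automatically. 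Hence Lemma~\ref{lem:acrossSurfaceMeasure} gives $\int_S \vol(\ball{x}{r}\setminus S)\,dx \le r^{n+1}\vol_{n-1}(\partial S)\,\frac{\pi_{n-1}}{n+1}$.

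The remaining ingredient is the surface-area estimate $\vol_{n-1}(\partial S)\le n\,\vol(S)$, and this is the step where star-shapedness takes the place of convexity. Since $\ball{0}{1}\subseteq K_S$, I would show $S+\eps\ball{0}{1}\subseteq(1+\eps)S$ for every $\eps>0$: given $x\in S$ and $\|v\|\le\eps$, the point $\frac{x+v}{1+\eps}$ lies on the segment between $x\in S$ and $\frac{v}{\eps}\in\ball{0}{1}\subseteq K_S$, and such a segment is contained in $S$ by definition of the kernel. Thus $\vol(S+\eps\ball{0}{1})\le(1+\eps)^n\vol(S)$, and letting $\eps\to 0^+$ in the Minkowski-content expression for the surface measure gives $\vol_{n-1}(\partial S)=\lim_{\eps\to 0^+}\frac{\vol(S+\eps\ball{0}{1})-\vol(S)}{\eps}\le n\,\vol(S)$. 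Substituting this together with $\vol(\ball{0}{r})=\pi_n r^n$ (here $\pi_n$ is the volume of the unit $n$-ball, matching the notation of Lemma~\ref{lem:acrossSurfaceMeasure}), the factor $\vol(S)$ cancels and $1-\lambda\le \frac{r\,n\,\pi_{n-1}}{(n+1)\,\pi_n}$. I would finish with the standard ball-volume-ratio bound $\frac{\pi_{n-1}}{\pi_n}\le\frac{n+1}{2\sqrt n}$, which follows from $\frac{\pi_{n-1}}{\pi_n}=\frac{1}{\sqrt\pi}\,\frac{\Gamma(n/2+1)}{\Gamma((n+1)/2)}$, Gautschi's inequality $\frac{\Gamma(x+1)}{\Gamma(x+1/2)}\le\sqrt{x+1}$ with $x=n/2$, and the elementary inequality $2n(n+2)\le\pi(n+1)^2$ (valid since $(\pi-2)n^2+(2\pi-4)n+\pi>0$); this yields $1-\lambda\le \frac{rn}{n+1}\cdot\frac{n+1}{2\sqrt n}=\frac{r\sqrt n}{2}$, i.e.\ $\lambda\ge 1-\frac{r\sqrt n}{2}$.

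The main obstacle is the surface-area bound $\vol_{n-1}(\partial S)\le n\,\vol(S)$ for a \emph{nonconvex} body: the usual convex argument relies on $K+\eps B\subseteq(1+\eps)K$, which genuinely fails for general star-shaped sets, and what rescues it is precisely that $\ball{0}{1}\subseteq K_S$, so that the relevant convex combination still lands inside $S$. A secondary and purely routine point is that the Minkowski content used above agrees with the $(n-1)$-dimensional surface measure appearing in Lemma~\ref{lem:acrossSurfaceMeasure}; this is harmless for the compact bodies of Definition~\ref{def:thin}, and in any event only the displayed inequality is needed.
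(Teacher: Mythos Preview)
Your proof is correct and follows the paper's approach: apply Lemma~\ref{lem:acrossSurfaceMeasure} with $L=\partial S$, combine with the surface-area bound $\vol_{n-1}(\partial S)\le n\,\vol(S)$, and finish with the ball-volume ratio $\pi_{n-1}/\pi_n\le(n+1)/(2\sqrt n)$. The one cosmetic difference is in the surface-area step---the paper argues by cone decomposition from the origin (each infinitesimal cone over a boundary patch has height at least $1$ since $\ball{0}{1}\subseteq S$), whereas you derive the same bound via the inclusion $S+\eps\ball{0}{1}\subseteq(1+\eps)S$ and Minkowski content; both arguments rely on exactly the same hypothesis that $\ball{0}{1}\subseteq K_S$.
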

\begin{proof}% [Proof of Corollary \ref{corr:avgLocalCond} (Average Local Conductance)]
Let $S$ and $L$ be as in lemma \ref{lem:acrossSurfaceMeasure}. Then, if we choose $x$ uniformly from $S$ and $u$ uniformly from $\ball{0}{r}$, the probability that $[x,x+u]\cap L\neq \phi$ is at most
\[
\frac{r \vol_{n-1}(L)}{2\sqrt{n}\vol(S)}
\]
Using the whole surface of $S$ to be $L$ we obtain that
\[
\lambda \geq 1-\frac{r \vol_{n-1} (\partial S)}{2\sqrt{n}\vol(S)}
\]
The corollary follows once we lower bound the volume of $S$ in terms of its surface area. This volume can be written as the sum of the volume of cones whose apex is at the origin. Now, since the origin is present within the kernel, it can see every point on the surface of $S$. Hence, for each such cone $C$, $\vol(C)\geq \frac{1}{n}\vol_{n-1}(A_C)$, where $A_C$ denotes the base area of the cone. Summing this up, we get that $\vol(S)\geq\frac{1}{n}\vol_{n-1}(\partial S)$.
\end{proof}
The following lemma is the main result of this section.
\begin{lemma}\label{lem:goodCondBody}
Let $S$ denote a star-shaped body containing the unit ball with the origin present inside the kernel and $S_r$ be defined as
\[
S_r:=\{x\in S:l(x)\geq \frac{3}{4}\}
\]
Then,
\begin{enumerate}
\item $\vol(S_r)\geq (1-2r\sqrt{n})\vol(S)$
\item $\vol(K_{S_r})\geq (1-2r\sqrt{n})\vol(K_S)$
\item $S_r$ is star-shaped.
\end{enumerate}
\end{lemma}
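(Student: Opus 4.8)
The plan is to handle the three claims in increasing order of difficulty, with (2) carrying the real content. For (1) I would just combine Corollary~\ref{corr:avgLocalCond} with Markov's inequality: $l(x)\in[0,1]$ everywhere, so for $X$ uniform on $S$ the nonnegative variable $1-l(X)$ has mean at most $\tfrac{r\sqrt n}{2}$, and Markov at threshold $\tfrac14$ gives $\vol(S\setminus S_r)=\vol\{x:1-l(x)>\tfrac14\}\le 4\cdot\tfrac{r\sqrt n}{2}\vol(S)=2r\sqrt n\,\vol(S)$.

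The engine for (2) and (3) is a structural claim I would prove first: if $\ball{x}{r}\subseteq K_S$ then $x\in K_{S_r}$. Indeed $l(x)=1$ since $\ball{x}{r}\subseteq S$, so $x\in S_r$; and for any $y\in S_r$, $t\in[0,1]$ and $z=(1-t)x+ty$ we have $z\in S$ (as $x\in K_S$), while writing $z+u=(1-t)(x+u)+t(y+u)$ for $\|u\|\le r$ we get $x+u\in\ball{x}{r}\subseteq K_S$ always, and $y+u\in S$ for a set of $u$'s of measure at least $\tfrac34\vol(\ball{0}{r})$; on that set the chord $[x+u,y+u]$ lies in $S$ because one endpoint is a kernel point, so $z+u\in S$, whence $l(z)\ge\tfrac34$ and $z\in S_r$. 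Thus $[x,y]\subseteq S_r$ and $x\in K_{S_r}$. The subtlety worth flagging is that we really need the \emph{whole} $r$-ball of $x$ inside $K_S$, not merely inside $S$: segment-convexity of $S$ holds only through kernel points, so every perturbation $x+u$ must itself be a kernel point.

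Claim (3) then drops out: since $\ball{0}{1}\subseteq K_S$ we get $\ball{x}{r}\subseteq K_S$ for all $\|x\|\le 1-r$ (and $r<1$ in every application), so $\ball{0}{1-r}\subseteq K_{S_r}\neq\emptyset$; combined with continuity of $l$ (so $S_r$ is compact), nonemptiness of the open set $\{l>\tfrac34\}$ (so $S_r$ has nonempty interior), and the fact that a closed star-shaped set with an interior kernel point equals the closure of its interior, this shows $S_r$ is a star-shaped body.

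For (2) it remains to lower bound $\vol(K_r')$ where $K_r':=\{x:\ball{x}{r}\subseteq K_S\}\subseteq K_{S_r}$. A first bound: $\ball{0}{1}\subseteq K_S$ gives $r\ball{0}{1}\subseteq rK_S$, so by convexity $(1-r)K_S+r\ball{0}{1}\subseteq(1-r)K_S+rK_S\subseteq K_S$, i.e. $(1-r)K_S\subseteq K_r'$ and $\vol(K_r')\ge(1-r)^n\vol(K_S)\ge(1-nr)\vol(K_S)$. To reach the claimed $(1-2r\sqrt n)$ I would instead (a) use the erosion estimate $\vol(K_S\setminus K_r')\le r\,\vol_{n-1}(\partial K_S)$ with $\vol_{n-1}(\partial K_S)\le n\vol(K_S)$ (again from $\ball{0}{1}\subseteq K_S$), together with the cap-measure bound underlying Lemma~\ref{lem:acrossSurfaceMeasure}, to see that it actually suffices for $x$ to be only $\Theta(r/\sqrt n)$-deep in $K_S$ — in high dimension such an $x$ has all but a vanishing fraction of its $r$-ball inside $K_S$ — and (b) push that vanishing fraction through the segment decomposition above, resorting near $t=0$ or $t=1$ to the alternative writings $z+u=(1-t)x+t(y+u/t)$ and $z+u=(1-t)(x+u/(1-t))+ty$, which exploit that $y$ and $x$ have large shadows in $S$ and $K_S$ respectively. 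Step (b) is the main obstacle; once it is in place, the set of $\Theta(r/\sqrt n)$-deep points of $K_S$ has volume at least $(1-2r\sqrt n)\vol(K_S)$ by the surface-area estimate in (a), which together with the structural claim finishes (2).
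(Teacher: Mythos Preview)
Your argument for (1) matches the paper's exactly.

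For (2) the paper writes only ``applying the same argument as above to the kernel'', meaning: run Corollary~\ref{corr:avgLocalCond} and Markov on the convex body $K_S$ (which is its own kernel and contains the unit ball) to get $K':=\{x\in K_S:\vol(\ball{x}{r}\cap K_S)\ge\tfrac34\vol(\ball{0}{r})\}$ with $\vol(K')\ge(1-2r\sqrt n)\vol(K_S)$, and then use $K'\subseteq K_{S_r}$. You take a different route, proving the stronger-hypothesis claim ``$\ball{x}{r}\subseteq K_S\Rightarrow x\in K_{S_r}$'' via the parallel shift $z+u=(1-t)(x+u)+t(y+u)$; this is correct, but the $r$-erosion of $K_S$ has volume only $(1-r)^n\vol(K_S)$ in the worst case, and your proposed relaxation to $\Theta(r/\sqrt n)$-depth does not close the gap --- the rescaled writings $z+u=(1-t)x+t(y+u/t)$ and $z+u=(1-t)(x+u/(1-t))+ty$ each control only a $t^n$- or $(1-t)^n$-fraction of $\ball{0}{r}$, as you essentially acknowledge in flagging step~(b).

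The missing idea, which the paper does not spell out either, is Brunn--Minkowski. For $x\in K'$, $y\in S_r$ and $z=(1-t)x+ty$, set $C:=(K_S-x)\cap\ball{0}{r}$ and $D:=(S-y)\cap\ball{0}{r}$, so $\vol(C),\vol(D)\ge\tfrac34\vol(\ball{0}{r})$. Then $(1-t)C+tD\subseteq\ball{0}{r}$, and every $u=(1-t)a+tb$ in this set gives $z+u=(1-t)(x+a)+t(y+b)\in(1-t)K_S+tS\subseteq S$; Brunn--Minkowski (valid for arbitrary measurable sets) yields $\vol((1-t)C+tD)^{1/n}\ge(1-t)\vol(C)^{1/n}+t\vol(D)^{1/n}\ge(\tfrac34)^{1/n}\vol(\ball{0}{r})^{1/n}$, hence $l(z)\ge\tfrac34$. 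Thus $K'\subseteq K_{S_r}$ and (2) follows with the stated constant. Your structural claim is the special case $\vol(C)=\vol(\ball{0}{r})$; allowing the two shifts $a,b$ to differ and invoking Brunn--Minkowski is what buys the correct $(1-2r\sqrt n)$ bound. Your derivation of (3) from the structural claim (giving $\ball{0}{1-r}\subseteq K_{S_r}$) is correct and more explicit than the paper's one-line deduction from~(2).
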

\begin{proof} %[Proof of Lemma \ref{lem:goodCondBody} (Body of good local conductance)]
Using Corollary \ref{corr:avgLocalCond}, we get that
\begin{eqnarray*}
\frac{1}{\vol(S)}\int_S (1-l(x))dx &\geq & (1-\frac{r\sqrt{n}}{2})\vol(S)\\
E(1-l(x)) &\leq& \frac{r\sqrt{n}}{2}\\
\prob{\frac{\vol(\ball{x}{r}\cap \bar{S})}{\vol(\ball{0}{r})}\geq \frac{1}{4}} & \leq & 2r\sqrt{n}\\
\prob{l(x)\geq \frac{3}{4}} &\geq& (1-2r\sqrt{n})\\
\frac{\vol(S_r)}{\vol(S)} &\geq& (1-2r\sqrt{n})
\end{eqnarray*}
Applying the same argument as above to the kernel of $S_r$, we obtain the second inequality. The final conclusion is obtained as a consequence of 2.
\end{proof}

\subsection{Coupling}
In this section, we prove that the one-step distributions of points close to each other and having good local conductance overlap by a good fraction. The proof follows the case of convex bodies closely since we are considering only points of good local conductance.
\begin{lemma}\label{lem:coupling}
Let $S$ be a star-shaped body and let $u,v\in S$ such that $|u-v|\leq \frac{tr}{\sqrt{n}}$, $l(u),l(v)\geq l$. Then
\[
d_{TV}(P_u,P_v)\leq 1+t-l
\]
\end{lemma}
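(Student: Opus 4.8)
The plan is to mirror the standard convex-body coupling argument (as in the analysis of the ball walk for convex bodies), using only that $u$ and $v$ have good local conductance; convexity of the domain will not actually be used. Recall that the one-step distribution $P_u$ of the ball walk at a point $u$ is a measure on $S \cup \set{u}$: it consists of an absolutely continuous part supported on $\ball{u}{r} \cap S$ with density $\frac{1}{\vol(\ball{0}{r})}$ with respect to Lebesgue measure, plus an atom at $u$ of mass $1 - l(u)$ (the rejection probability). First I would bound the total variation distance between the two \emph{absolutely continuous parts} of $P_u$ and $P_v$, ignoring the atoms, and then account for the atoms separately.

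For the continuous parts, write $g_u(x) = \frac{1}{\vol(\ball{0}{r})} 1_{\ball{u}{r} \cap S}(x)$ and similarly $g_v$. Then
\[
d_{TV}(P_u, P_v) \le \frac{1}{2}\int_{\reals^n} |g_u(x) - g_v(x)|\, dx + \frac{1}{2}\big((1-l(u)) + (1-l(v))\big),
\]
where the last term handles the two atoms (in the worst case they sit at distinct points and contribute their full mass to the variation distance). For the integral, the densities $g_u, g_v$ agree (and equal $\frac{1}{\vol(\ball{0}{r})}$) on $\ball{u}{r} \cap \ball{v}{r} \cap S$, so the $L^1$ difference is supported on the symmetric-difference region and is at most $\frac{1}{\vol(\ball{0}{r})}\vol\big(\ball{u}{r} \sym \ball{v}{r}\big)$. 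A direct geometric estimate — the key calculation — gives $\vol(\ball{u}{r} \sym \ball{v}{r}) \le \frac{\norm{u-v}\sqrt{n}}{r} \cdot \vol(\ball{0}{r})$ for the relevant range of $\norm{u-v}$; this is the same bound used in the convex ball-walk analysis and follows from slicing the two balls by hyperplanes orthogonal to $u-v$ and comparing cross-sections, or from the standard fact that the fraction of $\ball{0}{r}$ cut off by a slab of width $\norm{u-v}$ through the center is at most $\frac{\norm{u-v}\sqrt{n}}{r}$ (which is where the $\frac{\sqrt{n}}{r}$, and hence the hypothesis $\norm{u-v}\le \frac{tr}{\sqrt{n}}$ turning this into $t$, comes from). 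Thus $\frac{1}{2}\int |g_u - g_v| \le \frac{1}{2}\cdot \frac{\norm{u-v}\sqrt{n}}{r} \le \frac{t}{2}$.

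Combining, with $l(u), l(v) \ge l$ the atom term is at most $\frac{1}{2}(2(1-l)) = 1-l$, and altogether $d_{TV}(P_u,P_v) \le \frac{t}{2} + (1-l)$, which is at most $1 + t - l$ as claimed (in fact a slightly stronger bound). I expect the main obstacle to be the geometric estimate on $\vol(\ball{u}{r} \sym \ball{v}{r})$: one must be careful that this is a statement purely about Euclidean balls in $\reals^n$ (the set $S$ only shrinks both supports, which can only decrease the symmetric difference once we intersect with $S$, since $\ball{u}{r} \cap S \sym \ball{v}{r} \cap S \subseteq \ball{u}{r} \sym \ball{v}{r}$), so star-shapedness of $S$ plays no role here — the only place the body enters is through the local conductances $l(u), l(v)$, which is precisely why the convex proof carries over verbatim. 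A minor point to verify is that the constant and the exact form $1+t-l$ (rather than the tighter $\frac{t}{2}+(1-l)$) is what is needed downstream; since the lemma only claims the weaker bound, rounding up at the end suffices.
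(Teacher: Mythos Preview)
Your proposal is correct and follows essentially the same approach as the paper: both split $d_{TV}(P_u,P_v)$ into the contribution of the rejection atoms (bounded by $1-l$ using $l(u),l(v)\ge l$) and the contribution of the absolutely continuous parts (bounded via the volume of the symmetric difference of the two $r$-balls and the standard cap/slab estimate). Your handling of the intersection with $S$ via $(\ball{u}{r}\cap S)\sym(\ball{v}{r}\cap S)\subseteq \ball{u}{r}\sym\ball{v}{r}$ is exactly the observation the paper encodes by saying the ``balls contained in $S$'' case is the worst case, and your resulting bound $\tfrac{t}{2}+(1-l)$ is indeed a slight sharpening of the stated $1+t-l$.
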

\begin{proof} [Proof of Lemma \ref{lem:coupling} (Coupling lemma)]
We prove the inequality in the case when both $\ball{u}{r}$ and $\ball{v}{r}$ are contained within $S$. If not, then the considered case gives an upper bound and hence, we are done.
\begin{eqnarray*}
d_{TV}(P_u,P_v)&=&\frac{1}{2}(\int_{x\in \ball{u}{r}\cup \ball{v}{r}} |P_u(x)-P_v(x)|dx)\\
&=& \frac{1}{2} |P_u(u)-P_v(u)|+\frac{1}{2}|P_u(v)-P_v(v)|+\frac{1}{2}\int_{x\in \ball{u}{r}\cap \ball{v}{r}\backslash u \backslash v}|P_u(x)-P_v(x)|dx\\
&&\ \  + \frac{1}{2}\int_{x\in \ball{u}{r}\cap \ball{v}{r}\backslash u}(P_u(x)-P_v(x))dx+\frac{1}{2}\int_{x\in \ball{v}{r}\cap \ball{u}{r}\backslash v}(P_v(x)-P_u(x))dx\\
&=&\frac{1}{2}(1-l(u))+\frac{1}{2}(1-l(v))+0\\
&&\ \ +\frac{1}{2}\int_{x\in \ball{u}{r}\cap \ball{v}{r}\backslash u}(P_u(x)-P_v(x))dx+\frac{1}{2}\int_{x\in \ball{v}{r}\cap \ball{u}{r}\backslash v}(P_v(x)-P_u(x))dx\\
&=& 1-l +\frac{1}{2\vol(\ball{0}{r})}(2\vol(\ball{0}{r})-2\vol(Cap_r(\frac{tr}{\sqrt{n}})))\\
&\leq& 1-l+t
\end{eqnarray*}
\end{proof}

\subsection{Conductance}
Now, we bound the $s-$conductance of the ball walk on a star-shaped body.
\begin{lemma}\label{lem:sCond}
Let $S\subset \reals^n$ be a star-shaped body with diameter $D$ such that $\eta(S)=\eta$ fraction of its volume is present in its kernel. Then there exists a ball walk radius $r$ such that the s-conductance $\Phi_s$ of ball-walk of radius $r$ is at least $\frac{s\eta}{2^{13}nD}$.
\end{lemma}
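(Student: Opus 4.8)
The plan is to combine the standard conductance machinery of Lov\'{a}sz--Simonovits with the new isoperimetric inequality (Theorem \ref{ISO1}) and the local geometry lemmas proved above. First I would fix the ball walk radius by setting $r = \frac{c}{\sqrt n}$ for a suitably small absolute constant $c$ (e.g.\ $r$ chosen so that $2r\sqrt n \le 1/2$ and the coupling bound is useful); this guarantees via Lemma \ref{lem:goodCondBody} that the set $S_r$ of points with local conductance at least $3/4$ has $\vol(S_r) \ge (1-2r\sqrt n)\vol(S) \ge \tfrac12\vol(S)$, that $\vol(K_{S_r}) \ge (1-2r\sqrt n)\vol(K_S)$, and that $S_r$ is itself star-shaped. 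In particular $\eta(S_r) = \vol(K_{S_r})/\vol(S_r) \ge \frac{(1-2r\sqrt n)\vol(K_S)}{\vol(S)} \ge (1-2r\sqrt n)\eta \ge \tfrac12\eta$, and $\diam(S_r) \le D$. This lets me do all the isoperimetry on the ``good'' body $S_r$.

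Next I would set up the conductance estimate in the usual way. Given a measurable $A \subseteq S$ with $\pi_S(A) \le 1/2$, write $A_1 = \{x \in A : P_x(S \setminus A) < 1/8\}$ and $A_2 = \{x \in S\setminus A : P_x(A) < 1/8\}$, and $A_3 = S \setminus (A_1 \cup A_2)$. The standard argument (identical to the convex case, e.g.\ \cite{KLS97,LS93}) shows that if $\pi_S(A_1) \le \pi_S(A)/2$ or $\pi_S(A_2) \le \pi_S(S\setminus A)/2$ then the crossing probability is already $\Omega(\pi_S(A))$, so it remains to handle the case $\pi_S(A_1) \ge \pi_S(A)/2$ and $\pi_S(A_2) \ge \pi_S(S\setminus A)/2$. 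In that case, using the coupling lemma (Lemma \ref{lem:coupling}) with $l = 3/4$ and the points restricted to $S_r$: any $u \in A_1 \cap S_r$ and $v \in A_2 \cap S_r$ must satisfy $d_{TV}(P_u,P_v) \ge 1 - P_u(S\setminus A) - P_v(A) > 1 - 1/8 - 1/8 = 3/4$, so by Lemma \ref{lem:coupling}, $1 + t - l \ge 3/4$ forces $t \ge 1/2$, i.e.\ $\|u - v\| \ge \frac{r}{2\sqrt n}$. Hence $d(A_1 \cap S_r,\, A_2 \cap S_r) \ge \frac{r}{2\sqrt n} = \Omega(1/n)$. Discarding the small-volume complement $S \setminus S_r$ (whose measure is at most $2r\sqrt n \le 1/2$, but more importantly can be chosen much smaller, or absorbed into the $s$-conductance slack), I apply Theorem \ref{ISO1} to the star-shaped body $S_r$ with the partition $(A_1 \cap S_r,\ A_3 \cap S_r,\ A_2 \cap S_r)$ to get
\[
\vol(A_3 \cap S_r) \ge \frac{\eta(S_r)}{4\diam(S_r)}\, d(A_1\cap S_r, A_2\cap S_r)\, \min\{\vol(A_1\cap S_r), \vol(A_2\cap S_r)\} \ge \frac{c'\eta}{nD}\min\{\vol(A_1\cap S_r),\vol(A_2\cap S_r)\}
\]
for an absolute constant $c'$. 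Then since $P_x(\cdot) \ge l(x)/(\text{stuff}) \ge$ a constant whenever $x \in A_3$, the ergodic flow $\int_A P_x(S\setminus A)\,d\pi_S(x) \ge \frac18\pi_S(A_3) \ge \frac18\pi_S(A_3 \cap S_r)$, and combining with the isoperimetric bound and $\pi_S(A_1 \cap S_r) \gtrsim \pi_S(A)$ (which holds once the $s$-slack handles the $S\setminus S_r$ part) yields $\Phi_s \ge \frac{s\eta}{2^{13}nD}$ after tracking the numerical constants.

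The main obstacle is the interplay between the ``bad'' set $S \setminus S_r$ and the $s$-conductance bookkeeping: the isoperimetry only holds on $S_r$, so one must argue that for sets $A$ with $\pi_S(A) > s$ the portion of $A_1, A_2$ lying inside $S_r$ still carries a constant fraction of the relevant volume, which is exactly what the ``$s$-conductance'' relaxation (ignoring sets of measure below $s$) is designed to absorb — this is standard but requires care in choosing $r$ as a function of $n$ (not of $s$) and verifying $2r\sqrt n$ is small enough that $\vol(S_r)$ and $\vol(K_{S_r})$ lose only constant factors. A secondary, purely arithmetic obstacle is chasing the constant through Lemma \ref{lem:goodCondBody}, Lemma \ref{lem:coupling}, Theorem \ref{ISO1}, and the flow lower bound to land on the claimed $2^{13}$; I would not grind through this but would note that each step loses only an explicit constant factor and a single power of $n$ enters (from $d(A_1,A_2) \ge r/(2\sqrt n)$ with $r = \Theta(1/\sqrt n)$ giving $\Theta(1/n)$), together with the $1/D$ and $\eta$ from Theorem \ref{ISO1}, so the final form $\frac{s\eta}{2^{13}nD}$ is what emerges.
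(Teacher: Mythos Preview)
Your overall architecture is right --- restrict to the high--local--conductance set $S_r$, use the coupling lemma to lower bound $d(A_1,A_2)$, and then invoke Theorem~\ref{ISO1} on $S_r$ --- and this is exactly what the paper does. But there is a genuine gap in your choice of $r$.

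You fix $r = c/\sqrt n$ with $c$ an \emph{absolute} constant and explicitly say $r$ should depend on $n$ but \emph{not} on $s$. This cannot work. With that choice, $\vol(S\setminus S_r)\le 2c\,\vol(S)$ is only a constant fraction of $\vol(S)$, whereas the sets $A$ you must handle for $s$-conductance satisfy merely $\pi_S(A)>s$. If $s\ll 2c$, nothing prevents $A$ (and hence $A_1$) from lying entirely inside the bad set $S\setminus S_r$; then $A_1\cap S_r=\emptyset$ and the isoperimetric step on $S_r$ yields nothing. The ``$s$-conductance slack'' does not absorb this: that slack lets you ignore subsets of measure below $s$, not a fixed subset of measure $\Theta(1)$. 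Relatedly, with $r$ independent of $s$ your distance bound $d(A_1,A_2)\ge r/(2\sqrt n)=\Theta(1/n)$ contains no $s$, so you have no mechanism that produces the factor $s$ in the claimed bound $\frac{s\eta}{2^{13}nD}$; your final paragraph asserts the $s$ ``emerges'' but never identifies where.

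The fix (and what the paper does) is to take $r = s/(4\sqrt n)$. Then $\vol(S\setminus S_r)\le (s/2)\vol(S)$, which is strictly smaller than $\vol(A)$ for every admissible $A$, so $A\cap S_r$ and $A_1\cap S_r$ retain a constant fraction of the relevant volume and the bookkeeping goes through. The same choice makes $d(A_1,A_2)\ge \Omega(r/\sqrt n)=\Omega(s/n)$, and this is precisely where the factor $s$ in $\Phi_s\ge \frac{s\eta}{2^{13}nD}$ comes from. Once you make this single change, the rest of your outline matches the paper's proof.
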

\begin{proof} %[Proof of Lemma \ref{lem:sCond} ($s$-conductance)]
Let the radius $r$ of the ball walk step be $s/4\sqrt{n}$. By Lemma \ref{lem:goodCondBody}, this gives us that
\[
\vol(S_r)\geq (1-\frac{s}{2})\vol(S)
\]
Further, the fraction of the volume of the kernel of $S_r$ is
\[
\eta(S_r)=\frac{\vol(K_{S_r})}{\vol(S)}\geq \frac{(1-s/2)\vol(K_S)}{\vol(S)}=(1-\frac{s}{2})\eta
\]
Now, let $A\cup \bar{A}$ be any partition of $S$ into measurable sets with $\vol(A), \vol(\bar{A})>s (\vol(S))$. Define sets
\[
A_1:=\{x\in A\cap S_r:P_x(\bar{A})<\frac{1}{16}\}
\]
\[
A_2:=\{x\in \bar{A}\cap S_r:P_x({A})<\frac{1}{16}\}
\]
\[
A_3:=S_r\backslash A_1\backslash A_2
\]
Now, suppose that $\vol(A_1)\leq \frac{\vol(S)}{3}$. Then the conductance $\phi_s(A,\bar{A})$ is at least
\begin{eqnarray*}
\frac{1}{\min\{\vol(A),\vol(\bar{A})\}} \int_{x\in A\cap S_r\backslash A_1} \frac{1}{16}dx &=& \frac{1}{\min\{\vol(A),\vol(\bar{A})\}} \frac{1}{16} \vol(A\cap S_r\backslash A_1)\\
&\geq & \frac{1}{\min\{\vol(A),\vol(\bar{A})\}} \frac{1}{16} (\vol(A\backslash A_1)-\vol(S\backslash S_r))\\
&\geq & \frac{1}{\min\{\vol(A),\vol(\bar{A})\}} \frac{1}{16} (\frac{2}{3}\vol(A)-\frac{s}{2}\vol(S))\\
&\geq & \frac{1}{\min\{\vol(A),\vol(\bar{A})\}} \frac{1}{16} (\frac{2s}{3}\vol(S)-\frac{s}{2}\vol(S))\\
&\geq & \frac{1}{\min\{\vol(A),\vol(\bar{A})\}} \frac{s}{32} \vol(S)\\
&\geq & \frac{1}{32}
\end{eqnarray*}
and hence we are done. Therefore, we may assume that $\vol(A_1)\geq \frac{\vol(A)}{3}$ and $\vol(A_2)\geq \frac{\vol(\bar{A})}{3}$.\\
Consider $u\in A_1$ and $v\in A_2$. Then,
\[
d_{TV}(P_u,P_v)\geq 1-P_u(\bar{A})-P_v(A)> 1 - \frac{1}{8}
\]
Using Lemma \ref{lem:coupling} (t=1/8), we get $|u-v|\geq \frac{5r}{8\sqrt{n}}$, and hence, $d(A_1,A_2)\geq \frac{5r}{8\sqrt{n}}$. Now, using Theorem \ref{ISO1} on the partition $A_1, A_2, A_3$ of $S_r$, we get that
\begin{eqnarray*}
\Phi_s &\geq& \frac{1}{\min\{\vol(A),\vol(\bar{A})\}}\int_A P_x(\bar{A})dx\\
&\geq& \frac{1}{2}\frac{1}{16}\frac{\vol(A_3)}{\min\{\vol(A),\vol(\bar{A})\}}\\
&\geq& \frac{1}{2^5}\frac{\eta(S_r)d(A_1,A_2)}{4D}\frac{\min\{\vol(A_1),\vol(A_2)\}}{\min\{\vol(A),\vol(\bar{A})\}}\\
&\geq& \frac{1}{2^9}\frac{\eta(5r(1-s/2))}{8\sqrt{n}D}\frac{\min\{\vol(A),\vol(\bar{A})\}}{\min\{\vol(A),\vol(\bar{A})\}}\\
&\geq& \frac{1}{2^{12}}\frac{s(1-s/2)\eta}{nD}\\
&\geq& \frac{1}{2^{13}}\frac{s\eta}{nD}
\end{eqnarray*}
\end{proof}

Using Theorem \ref{ISO2}, one can derive the following bound by proceeding similarly as in the proof of the above lemma.
\begin{lemma}\label{lem:sCondMeanSqDist}
Let $S\subset \reals^n$ be a star-shaped body with diameter $D$ such that $\eta(S)=\eta$ fraction of its volume is present in its kernel. Then for ball walk radius $r=s/4\sqrt{n}$, for any partition $A,\bar{A}$ of $A$ satisfying $\vol(A),\vol(\bar{A})>s(\vol(S))$, the $s$-conductance of $A$ satisfies
\[
\phi_s(A) \geq \frac{\eta}{2^9}\min\left\{ \frac{\vol(S)}{\min\{\vol(A),\vol(\bar{A})\}},\frac{s}{2^7n}\sqrt{\frac{\eta}{M_{S}}} \right\}
\]
\end{lemma}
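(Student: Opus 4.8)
The plan is to run the proof of Lemma~\ref{lem:sCond} essentially verbatim, substituting Theorem~\ref{ISO2} for Theorem~\ref{ISO1} at the one place where isoperimetry is invoked; the two-way alternative in the conclusion of Theorem~\ref{ISO2} is precisely what produces the two quantities inside the $\min$. Fix $r=s/4\sqrt n$. By Lemma~\ref{lem:goodCondBody} the good-local-conductance subbody $S_r=\set{x\in S:l(x)\ge 3/4}$ satisfies $\vol(S_r)\ge(1-s/2)\vol(S)$ and $\vol(K_{S_r})\ge(1-s/2)\vol(K_S)$ (hence $\eta(S_r)\ge(1-s/2)\eta$), and is itself star-shaped. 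I also record the crude comparison $M_{S_r}\le\frac{\vol(S)}{\vol(S_r)}M_S\le 2M_S$ for $s\le 1$, which follows from $S_r\subseteq S$ together with the fact that the centroid minimizes the second moment, $M_{S_r}\le\E_{S_r}[\norm{X-\mu_S}^2]$.

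Fix a measurable partition $(A,\bar A)$ of $S$ with $\vol(A),\vol(\bar A)>s\,\vol(S)$, and, exactly as in Lemma~\ref{lem:sCond}, set $A_1=\set{x\in A\cap S_r:P_x(\bar A)<1/16}$, $A_2=\set{x\in\bar A\cap S_r:P_x(A)<1/16}$, $A_3=S_r\setminus A_1\setminus A_2$. If $\vol(A_1)\le\frac13\vol(A)$ (or symmetrically for $A_2$), the direct estimate of Lemma~\ref{lem:sCond} — integrating the pointwise bound $P_x(\bar A)\ge 1/16$ over $A\cap S_r\setminus A_1$ and using $\vol(S\setminus S_r)\le\frac s2\vol(S)$ — applies unchanged and finishes this branch. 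Otherwise $\vol(A_1)\ge\frac13\vol(A)$ and $\vol(A_2)\ge\frac13\vol(\bar A)$; then for $u\in A_1$, $v\in A_2$ we get $d_{TV}(P_u,P_v)\ge 1-P_u(\bar A)-P_v(A)>7/8$, so the coupling Lemma~\ref{lem:coupling} with $l=3/4$ forces $\norm{u-v}\ge\frac{5r}{8\sqrt n}$, i.e. $d(A_1,A_2)\ge\frac{5r}{8\sqrt n}=\frac{5s}{32n}$.

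Now apply Theorem~\ref{ISO2} to the partition $(A_1,A_3,A_2)$ of the star-shaped body $S_r$. In the first alternative, $\vol(A_3)\ge\frac{\eta(S_r)}{4}\vol(S_r)\ge\frac{\eta}{16}\vol(S)$ (using $s\le 1$). In the second, $\vol(A_3)\ge\frac{\eta(S_r)^{3/2}}{16\sqrt{M_{S_r}}}\,d(A_1,A_2)\min\set{\vol(A_1),\vol(A_2)}$, and substituting $\eta(S_r)\ge\eta/2$, $M_{S_r}\le 2M_S$, $d(A_1,A_2)\ge 5s/32n$, and $\min\set{\vol(A_1),\vol(A_2)}\ge\frac13\min\set{\vol(A),\vol(\bar A)}$ makes this at least $c\,\frac{\eta^{3/2}}{\sqrt{M_S}}\cdot\frac sn\cdot\min\set{\vol(A),\vol(\bar A)}$ for an absolute constant $c$. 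In either case, the detailed-balance identity $\int_A P_x(\bar A)\,dx=\tfrac12\bigl(\int_A P_x(\bar A)\,dx+\int_{\bar A}P_x(A)\,dx\bigr)\ge\tfrac1{32}\vol(A_3)$, followed by dividing by $\min\set{\vol(A),\vol(\bar A)}$, converts the first bound into $\frac{\eta}{2^9}\cdot\frac{\vol(S)}{\min\set{\vol(A),\vol(\bar A)}}$ and the second into $\frac{\eta}{2^9}\cdot\frac s{2^7n}\sqrt{\eta/M_S}$ once the constants are tallied; taking the smaller of the two yields the advertised $\min$.

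The only genuinely new ingredient relative to Lemma~\ref{lem:sCond} is that Theorem~\ref{ISO2} carries a two-way alternative with no counterpart in the diameter version, which is exactly why the conclusion is a minimum of two terms rather than a single bound. Accordingly, the step I expect to demand the most care is pure bookkeeping: tracking how the multiplicative losses $\eta(S_r)\ge(1-s/2)\eta$, $M_{S_r}\le 2M_S$, the factor $\frac13$ from the $A_1,A_2$ volume comparisons, and the $1/16$ and $1/32$ from the coupling and detailed-balance steps combine to produce precisely the constants $2^9$ and $2^7$ in the statement. None of this is conceptually hard given Theorem~\ref{ISO2} and Lemmas~\ref{lem:goodCondBody}, \ref{lem:coupling}, and \ref{lem:sCond}.
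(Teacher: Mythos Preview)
Your proposal is correct and follows exactly the approach the paper indicates: it explicitly says ``Using Theorem~\ref{ISO2}, one can derive the following bound by proceeding similarly as in the proof of the above lemma'' without providing details, and you have faithfully carried out that substitution, including the necessary auxiliary observation $M_{S_r}\le\frac{\vol(S)}{\vol(S_r)}M_S\le 2M_S$ needed to pass from $S_r$ back to $S$. The constant bookkeeping you flag as the delicate part indeed checks out against the stated $2^9$ and $2^7$.
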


\subsection{Mixing time}
Let $\pi_S$ denote the uniform distribution over the star-shaped body. Let $\sigma_m$ denote the distribution after $m$-steps of the ball walk on the star-shaped body.
To relate the $s$-conductance to the mixing time, we use the following lemma from \cite{LS93}.
\begin{lemma} \label{lemma:mix}
Let $0 < s \leq 1/2$ and $H_{s} = \sup_{\pi_{S}(A) \leq s} \abs{\sigma_{0}(A) - \pi_{S}(A)}$. Then for every measurable $A \subseteq \reals^{n}$ and every $m \geq 0$,
\[
\abs{\sigma_{m}(A) - \pi_{S}(A)} \leq H_{s} + \frac{H_{s}}{s} \left(1 - \frac{\phi_{s}^{2}}{2}\right)^{m}.
\]
\end{lemma}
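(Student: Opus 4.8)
This lemma is quoted from Lov\'asz and Simonovits \cite{LS93}, so in the paper it is used as a black box; the plan for a proof would be to reproduce their conductance-to-mixing argument. The only fact needed about the ball walk is that its one-step kernel $P$ is reversible with respect to $\pi_S$: for $x\neq y$ in $S$ the transition density equals $\vol(\ball{0}{\delta})^{-1}$ when $\|x-y\|\le\delta$ and $0$ otherwise, which is symmetric in $x$ and $y$, while rejection only contributes a holding mass; hence $\pi_S P=\pi_S$ and detailed balance holds (the positive holding probability also rules out periodicity, so no artificial laziness is needed).

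First I would pass from sets to a one-dimensional profile. For a distribution $\sigma$ on $S$ put $g_\sigma(x)=\sup\{\sigma(A):\pi_S(A)=x\}$ for $x\in[0,1]$; by the usual layer-cake argument the supremum is attained on a super-level set of the density $d\sigma/d\pi_S$, so $g_\sigma$ is concave and non-decreasing with $g_\sigma(0)=0$, $g_\sigma(1)=1$ and $g_\sigma(x)\ge x$. Using $\sigma(A)-\pi_S(A)=\pi_S(A^c)-\sigma(A^c)$ one gets $\sup_A\abs{\sigma(A)-\pi_S(A)}=\sup_{x\in[0,1]}(g_\sigma(x)-x)$, so it suffices to bound $g_{\sigma_m}(x)-x$. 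From the definition of $H_s$ we have $g_{\sigma_0}(x)-x\le H_s$ for $x\le s$ and, by complementation, for $x\ge 1-s$; combined with the fact that $g_{\sigma_0}(x)/x$ is non-increasing (concavity through the origin), this yields an initial bound $g_{\sigma_0}(x)-x\le H_s+\tfrac{H_s}{s}\,\psi(x)$ for a fixed concave ``tent'' function $\psi$ with $0\le\psi\le 1$ vanishing at $0$ and $1$.

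Next comes the one-step inequality, which is where the $s$-conductance $\phi_s$ enters. Applying $P$ and using reversibility (so that under $\pi_S$ the flow out of any set $A$ equals the flow into it), one shows $g_{\sigma P}\le g_\sigma$ pointwise, and moreover, for $x\in(s,1-s)$,
\[
g_{\sigma P}(x)\ \le\ \tfrac12\,g_\sigma(x-t_x)+\tfrac12\,g_\sigma(x+t_x),\qquad t_x=\phi_s\,\min\{x-s,\,1-s-x\}.
\]
The proof of this takes a near-optimal $A$ with $\pi_S(A)=x$, splits $A$ and $A^c$ into the part from which a step escapes with small probability and the rest, plays the ``deep'' mass of $A$ against that of $A^c$ through the definition of $\phi_s$, and uses concavity of $g_\sigma$ to absorb the leftover. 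Then I would induct with the comparison function $G_t(x)=x+H_s+\tfrac{H_s}{s}\bigl(1-\tfrac{\phi_s^2}{2}\bigr)^{t}\psi(x)$: one checks $G_0\ge g_{\sigma_0}$ from the previous paragraph, and that the one-step inequality together with the contraction estimate $\tfrac12\psi(x-t_x)+\tfrac12\psi(x+t_x)\le\bigl(1-\tfrac{\phi_s^2}{2}\bigr)\psi(x)$ propagates $g_{\sigma_t}\le G_t$ to $g_{\sigma_{t+1}}\le G_{t+1}$. Taking $t=m$, the supremum over $x$, and $\psi\le 1$ then gives the claimed bound.

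The main obstacle is the compatibility needed in the last step: a single fixed concave profile $\psi$ must simultaneously dominate the worst-case initial discrepancy $g_{\sigma_0}(x)-x$ on all of $[0,1]$ and contract under the $t_x$-averaging by exactly the factor $1-\phi_s^2/2$ (not merely some unspecified $1-c\,\phi_s^2$); pinning down that constant is the one genuinely delicate calculation, the rest being bookkeeping with concave functions. Since the statement is taken verbatim from \cite{LS93}, in the paper itself the right move is simply to cite it, as is done.
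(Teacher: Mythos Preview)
Your reading is correct: the paper states this lemma as a direct quotation from \cite{LS93} and supplies no proof of its own, exactly as you note in your closing sentence. The sketch you give of the Lov\'asz--Simonovits conductance-to-mixing argument (concave profile $g_\sigma$, one-step averaging inequality driven by $\phi_s$, induction against a tent function) is a faithful outline of the original proof and goes beyond what the paper itself contains.
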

\begin{proof} (of Theorem \ref{SAMPLING1})
Suppose $\sigma_0$ be a starting distribution such that there exists $M>0$, $\forall A\subseteq S$, $\sigma_0(A)\leq M\pi_S(A)$.
Now, by definition $H_{s}\leq M\cdot s$. Hence, using Lemma \ref{lemma:mix} and Lemma \ref{lem:sCond},
\[
d_{TV}(\sigma_m,\pi_S)\leq M\cdot s+M\left(1-\frac{s^2\eta^2}{2^{27}n^2D^2}\right)^m \leq M\cdot s+Me^{-ms^2\eta^2/2^{27}n^2D^2}
\]
Replacing $s$ by $\eps/2M$, for $m\geq \frac{2^{29}n^2D^2M^2}{\eta^2\eps^2}\log\frac{2M}{\eps}$, we have $d_{TV}(\sigma_m,\pi_S)\leq \eps$.
\end{proof}

\section{Sampling algorithm}
To obtain a polynomial-time sampling algorithm we make the additional assumption that we are given an oracle to the kernel of the star-shaped body, a point $x_0$ in the kernel and parameters $r,R$ such that $\ball{x_0}{r}$ lies in the kernel and the kernel is contained in a ball of radius $R$.
The sampling algorithm proceeds as follows:
\begin{enumerate}
\item Use the algorithm of \cite{LV1,LV3} to find a transformation of the body $S$ into isotropic position and obtain a random point $x_0$ in $K_S$.
\item Perform $m$ ball-walk steps from $x_0$ on the transformed body $S'$, for each desired random point.
\end{enumerate}
Clearly, by step $1$ above, we have a $\frac{1}{\eta}$-warm start for the ball-walk on $S'$. Now, by Lemma \ref{lem:sCondMeanSqDist}, to obtain a bound on the $s$-conductance, we need an upper bound on the mean square distance $M_{S'}$ of the body $S'$.\\
We next show that when the kernel is isotropic, the body is not far from isotropic. This will bound $M_{S'}$ which along with Lemma \ref{lem:sCondMeanSqDist} and Lemma \ref{lemma:mix} would prove Theorem \ref{SAMPLING2}.

\begin{lemma} \label{lem:kernel-concave} Let $S$ be a star-shaped body and let $K_S$ be the kernel of $S$. For a vector
$v \in \reals^n$, $\|v\|=1$, define
\[
f_S(t) = \vol_{n-1}(\set{x: v^T x = t, x \in \reals^n} \cap S) \quad \text{ and } \quad
f_K(t) = \vol_{n-1}(\set{x: v^T x = t, x \in \reals^n} \cap K_S) \text{,}
\]
the cross-sectional volumes for $S$ and $K_S$ in direction $v$. Then for $x \in \supp(f_K)$ and $y \in \supp(f_S)$
and $\alpha \in [0,1]$ we have that
\[
f_S(\alpha x + (1-\alpha)y)^{\frac{1}{n-1}} \geq \alpha f_K(x)^{\frac{1}{n-1}} + (1-\alpha)f_S(y)^{\frac{1}{n-1}}
\]
Furthermore, for all $x,y \in \reals$ we get that
\[
f_S(\alpha x + (1-\alpha)y) \geq f_K(x)^\alpha f_S(y)^{1-\alpha}
\]
\end{lemma}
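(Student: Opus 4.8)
The plan is to prove both inequalities at once from a single ``relative Brunn--Minkowski'' fact, mimicking the standard proof that cross-sectional volumes of a convex body are $(n-1)$-concave but using the star-shaped visibility of the kernel in place of convexity. Write $H_t = \set{z \in \reals^n : v^T z = t}$, and for $x \in \supp(f_K)$, $y \in \supp(f_S)$ put $A = K_S \cap H_x$ and $B = S \cap H_y$. Since $S$ is a compact body and $K_S$ is compact by Lemma \ref{lem:convex-kernel}, the sets $A$ and $B$ are compact; they are nonempty since $x,y$ lie in $\supp(f_K),\supp(f_S)$ and $K_S,S$ are closed (a point realising a nearby positive-volume slice has a convergent subsequence landing in the slice at $x$, resp.\ $y$).

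The one geometric input is the Minkowski containment
\[
\alpha A + (1-\alpha) B \ \subseteq\ S \cap H_{\alpha x + (1-\alpha) y} \qquad \text{for all } \alpha \in [0,1].
\]
This is immediate from the definition of the kernel: for $a \in A \subseteq K_S$ and $b \in B \subseteq S$ we have $[a,b] \subseteq S$, so $\alpha a + (1-\alpha) b \in S$, while $v^T(\alpha a + (1-\alpha) b) = \alpha x + (1-\alpha) y$. Now apply the Brunn--Minkowski inequality inside the $(n-1)$-dimensional hyperplane $H_{\alpha x + (1-\alpha) y}$; crucially this needs only that $A,B$ be nonempty compact sets, so the nonconvexity of $B$ causes no trouble. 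Combining with the containment and the definitions of $f_S,f_K$ gives
\[
f_S(\alpha x + (1-\alpha) y)^{\frac{1}{n-1}} \ \geq\ \vol_{n-1}\!\left(\alpha A + (1-\alpha) B\right)^{\frac{1}{n-1}} \ \geq\ \alpha\, f_K(x)^{\frac{1}{n-1}} + (1-\alpha)\, f_S(y)^{\frac{1}{n-1}},
\]
which is the first claimed inequality.

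For the second inequality I would dispose of the degenerate cases first: for $\alpha = 1$ it follows from $f_S \geq f_K$ pointwise (since $K_S \subseteq S$), for $\alpha = 0$ it is an equality, and for $\alpha \in (0,1)$ with $f_K(x)=0$ or $f_S(y)=0$ the right-hand side is zero. In the remaining case both $f_K(x)$ and $f_S(y)$ are positive, so we feed the first inequality into the weighted AM--GM bound $\alpha p + (1-\alpha) q \geq p^\alpha q^{1-\alpha}$ with $p = f_K(x)^{1/(n-1)}$, $q = f_S(y)^{1/(n-1)}$, obtaining $f_S(\alpha x + (1-\alpha) y)^{1/(n-1)} \geq f_K(x)^{\alpha/(n-1)} f_S(y)^{(1-\alpha)/(n-1)}$; raising to the power $n-1$ finishes it. I do not expect a genuine obstacle: the only points requiring care are the nonemptiness and compactness of the slices at the ends of the supports (so that Brunn--Minkowski and the measurability of the Minkowski sum are unproblematic) and the short case analysis for the log-concave form; the substantive step — that a kernel slice can be Minkowski-added into $S$ — is nothing more than the definition of $K_S$.
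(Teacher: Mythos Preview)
Your proof is correct and follows essentially the same line as the paper: establish the Minkowski containment $\alpha(K_S\cap H_x)+(1-\alpha)(S\cap H_y)\subseteq S\cap H_{\alpha x+(1-\alpha)y}$ from the definition of the kernel, apply Brunn--Minkowski in the $(n-1)$-dimensional slice, and then derive the log-concave form via AM--GM after disposing of the degenerate cases. Your treatment is in fact slightly more careful than the paper's about nonemptiness of the slices at support endpoints and about the edge cases $\alpha\in\{0,1\}$.
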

\begin{proof} %[Proof of Lemma \ref{lem:kernel-concave} (Relation between cross-sectional volume of kernel and body)]
Let $S(t),K_S(t)$ denote the cross-sections of $S$ and $K_S$ in direction $v$ at $t$. Since $x \in \supp(f_K), y \in
\supp(f_S)$ we have that $K_S(x),S(y) \neq \emptyset$. Since $K_S(x)$ is part of the kernel we have that
\[
\alpha K_S(x) + (1-\alpha) S(y) \subseteq S(\alpha x + (1-\alpha)y)
\]
Therefore by the Brunn-Minkowski inequality we have that
\begin{align*}
\alpha f_K(x)^{\frac{1}{n-1}} + (1-\alpha)f_S(y)^{\frac{1}{n-1}}
	&\leq \vol_{n-1}(\alpha K_S(x) + (1-\alpha) K_S(y))^{\frac{1}{n-1}} \\
	&\leq \vol_{n-1}(S(\alpha x + (1-\alpha)y))^{\frac{1}{n-1}} = f_S(\alpha x + (1-\alpha)y)^{\frac{1}{n-1}}
\end{align*}
For the furthermore, we note that the statement is trivial if either $f_K(x)=0$ or $f_S(y)=0$. Therefore, we may assume that
$x \in \supp(f_K), y \in \supp(f_S)$. Since the harmonic average is always smaller than the geometric average, the
statement follows directly from our first inequality.
\end{proof}

\begin{lemma}\label{lem:isotropy}
Let $S \subseteq \reals^n$ be a star-shaped body with an isotropic kernel $K_S$ such that $\eta= \vol(K_S)/\vol(S)$. Then, in any direction $v$, for a random point $X$ from $K_S$, we have
\[
\E((v^T X)^2) \le \frac{3328}{\eta^2}.
\]
\end{lemma}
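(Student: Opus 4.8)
First I would observe that the statement is only meaningful, and only serves its stated purpose of bounding $M_{S'}$, if $X$ is drawn from $S$ rather than from $K_S$ (for $X$ from an isotropic $K_S$ the left side is simply $1$); so the claim to establish is $\E((v^TX)^2)\le 3328/\eta^2$ for every unit vector $v$, with $X$ uniform on $S$. The plan is to reduce to one dimension and exploit the restricted log-concavity of Lemma~\ref{lem:kernel-concave}. Fix a unit vector $v$ and let $f_S,f_K$ be the cross-sectional volume functions in direction $v$ from Lemma~\ref{lem:kernel-concave}; set $g_S=f_S/\vol(S)$ and $g_K=f_K/\vol(K_S)$, the corresponding one-dimensional probability densities. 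By Fubini $\E((v^TX)^2)=\int_\reals t^2 g_S(t)\,dt$, so I must bound this integral. Since $K_S$ is isotropic, $g_K$ has mean $0$ and variance $1$; since $K_S\subseteq S$ we have $f_S\ge f_K$ and hence $g_S\ge\eta g_K$ pointwise. Dividing the ``furthermore'' inequality of Lemma~\ref{lem:kernel-concave} by $\vol(S)$ and substituting $f_K=\vol(K_S)g_K$, $f_S=\vol(S)g_S$ produces the key relation: for all $x,y\in\reals$ and $\alpha\in[0,1]$, $g_S(\alpha x+(1-\alpha)y)\ge \eta^{\alpha}g_K(x)^{\alpha}g_S(y)^{1-\alpha}$.

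Next I would extract an exponential tail bound for $g_S$. Let $m$ be a mode of $g_K$. Any density on $\reals$ with variance $1$ has supremum at least $1/\sqrt{12}$ (the uniform density is extremal), so $a:=g_K(m)\ge 1/\sqrt{12}$; and since $g_K$ is unimodal with mean $0$ and variance $1$, $|m|$ is at most an absolute constant. Note $g_S(m)\ge\eta a$. Applying the key relation with $x=m$, $y=m+t$ and taking logarithms shows that $t\mapsto \bigl(\log g_S(m+t)-\log(\eta a)\bigr)/t$ is non-increasing on $t>0$, and symmetrically on $t<0$; equivalently, $\log g_S$ lies above every chord issuing from the point $(m,\log(\eta a))$. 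Since $\int g_S=1$, on each side of $m$ there is a point $m+t_1$ with $|t_1|\le (e+1)/(\eta a)$ and $g_S(m+t_1)\le \eta a/e$ (otherwise $g_S>\eta a/e$ throughout an interval of length $(e+1)/(\eta a)$ about $m$, forcing $\int g_S>1$). At such a point the slope in the previous sentence is at most $-1/|t_1|$, so monotonicity gives $g_S(m+t)\le \eta a\,e^{-|t|/|t_1|}$ for all $|t|\ge |t_1|$: a genuine exponential tail, of rate at least $\eta a/(e+1)=\Omega(\eta)$.

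Then I would integrate. Split $\int_\reals (s-m)^2 g_S(s)\,ds$ at $|s-m|=|t_1|$: the inner part is at most $t_1^2\int g_S= t_1^2$, and the outer part is at most $\eta a\int_\reals t^2 e^{-|t|/|t_1|}\,dt=4\eta a|t_1|^3\le 4(e+1)t_1^2$, using $\eta a|t_1|\le e+1$. With $|t_1|\le (e+1)/(\eta a)$ and $a\ge 1/\sqrt{12}$ this gives $\int (s-m)^2 g_S = O(1/\eta^2)$. Finally, recenter from $m$ to $0$ via $\int s^2 g_S=\int (s-m)^2 g_S + 2m\int(s-m)g_S + m^2$ and Cauchy--Schwarz $|\int (s-m)g_S|\le (\int (s-m)^2 g_S)^{1/2}$; since $|m|=O(1)$ and $\eta\le 1$, this changes only lower-order terms, and carrying the constants through yields the bound $3328/\eta^2$.

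The step I expect to be the crux is the monotonicity observation in the second paragraph. Used naively, Lemma~\ref{lem:kernel-concave} only yields $g_S(y)=O(1/|y|)$, which does not even give a finite second moment; what actually controls the tail is the realization that anchoring the inequality at one fixed point of large kernel density forces $\log g_S$ above all chords from that point, turning ``total mass one'' into true exponential decay. Everything after that --- the choice to anchor at the mode (so $a$ is bounded below by $1/\sqrt{12}$, not the smaller centroid value $1/(e\sqrt{12})$, which keeps the final constant under $3328$) and the routine recentering --- is bookkeeping.
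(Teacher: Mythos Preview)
Your proposal is correct (including the observation that $X$ must be uniform on $S$, as the paper's own proof confirms) and follows essentially the same route as the paper: reduce to the one-dimensional marginals, anchor the restricted log-concavity of Lemma~\ref{lem:kernel-concave} at a fixed point where $g_K$ is bounded below by an absolute constant, use the unit-mass constraint to force exponential decay of $g_S$, and integrate. The only cosmetic difference is that the paper anchors at the centroid $0$ (quoting $f_K(0)\ge 1/8$ from \cite{LV2}) rather than at the mode, which spares it your recentering step but is otherwise structurally identical.
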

\begin{proof}%[Proof of Lemma \ref{lem:isotropy} (Second moment of body with isotropic kernel)]
Let $v=(1,0,\ldots,0)^T$ w.l.o.g.  Consider the cross-sectional density $f_K$ induced by the kernel along $v$. Since $K_S$
is isotropic, we have that $f_K(0) \geq \frac{1}{8}$ \cite{LV2}.

Next let $f$ be the cross-sectional density of the body $S$ along $v$. It follows that
\[
f(0) \ge \eta f_K(0) \ge \frac{\eta}{8}.
\]

Let $a = \sup \set{x: f(x) < \frac{\eta f_K(0)}{2}, x \leq 0 }$ and $b = \inf \set{x: f(x) < \frac{\eta f_K(0)}{2}, x
\geq 0}$. We claim that $b-a \le \frac{2}{\eta f_K(0)}$. Suppose not, then
\[
\int_a^b f(x)\, dx \ge \frac{\eta f_K(0)}{2} (b-a) > 1.
\]

Now consider a point $x = tb$ for $t > 1$. Then by Lemma \ref{lem:kernel-concave} we have that
\begin{align*}
&f(b) = f\left(\left(1-\frac{1}{t}\right)0 + \frac{1}{t}x\right) \geq
(\eta f_K(0))^{1-\frac{1}{t}}f(x)^{\frac{1}{t}} \Rightarrow  \\
&f(b)^t(\eta f_K(0))^{1-t} \geq f(x) \Rightarrow \eta f_K(0) \left(\frac{f(b)}{\eta f_K(0)}\right)^{t} \geq f(x)
\end{align*}
The same inequality as above can be derived starting from any $b' > b$, and since for every such $b'$ we have
that $f(b') < \frac{\eta f_K(0)}{2}$ by continuity we have that for $t > 1$
\[
f(x) \leq \eta f_K(0) \left(\frac{1}{2}\right)^{t} = \eta f_K(0) e^{-\ln 2 t}
\]
By a symmetric argument, the same bound holds for $x = ta$. Let $p = \int_a^b
f(x)dx$. The following calculation gives the result:
\begin{eqnarray*}
\E((v^Tx)^2) &\le& \int_{-\infty}^\infty x^2f(x)\, dx = p \left(\frac{1}{p} \int_a^b x^2f(x)\right)
	      + \int_{-\infty}^a x^2 f(x)dx + \int_b^\infty x^2f(x)dx \\
	     &\le& p \max \set{a^2, b^2} + \eta f_K(0) \left[\int_{-\infty}^a x^2 e^{-\ln 2(x/a)}\,
dx + \int_{b}^\infty x^2 e^{-\ln 2(x/b)}\, dx\right]\\
	     &=& p \max \set{a^2,b^2} +
			\eta f_K(0)(a^3 + b^3)\left(\frac{1}{2\ln 2} + \frac{1}{(\ln 2)^2} + \frac{1}{(\ln 2)^3}\right) \\
             &\le& (a^2 + b^2) + (2a^2 + 2b^2)6 = 13(a^2 + b^2) \leq 13\frac{4}{\eta^2 f_K(0)^2} \leq \frac{3328}{\eta^2} \\
\end{eqnarray*}
\end{proof}

Proceeding similarly as in the proof of Theorem \ref{SAMPLING1}, one can derive the proof of Theorem \ref{SAMPLING2}.
\begin{proof} [Proof of Theorem \ref{SAMPLING2} (Polynomial time amortized sampling)]
Lemma \ref{lem:isotropy} gives an upper bound on $M_S\leq \frac{2^{12}n}{\eta^2}$. Using Lemma \ref{lem:sCondMeanSqDist}, we get that the $s$-conductance of the ball walk on a star-shaped body $S\subseteq \reals^n$ with the kernel in isotropic position and $\eta=\vol(K_S)/\vol(S)$ satisfies
\[
\Phi_s\geq \frac{\eta^{5/2}s}{2^{21}n^{3/2}}
\]
By the sampling algorithm of \cite{LV1,LV3}, Step 1 of the sampling algorithm takes $O^*(n^4)$ oracle queries.
Since in step 2 of the algorithm, we started the ball-walk on $S$ by choosing a random point from the kernel, and the kernel takes up at least an $\eta$ fraction of the volume of $S$, a random point from it provides an $(1/\eta)$-warm start.
Proceeding similarly as in the proof of Theorem \ref{SAMPLING1}, we get that after $m>\frac{2^{44}n^3}{\eta^4\eps^2}\log\frac{2}{\eta\eps}$ ball walk steps, $d_{TV}(\sigma_m,\pi_S)\leq \eps$. Hence, by performing $m$-steps of the ball-walk for each desired sample, we obtain the desired amortized bound.
\end{proof}

\section{Discussion}
We have presented isoperimetric inequalities and efficient sampling algorithms for star-shaped bodies, based on a new technique called thin partitions. Linear optimization is NP-hard on these bodies, even when the kernel takes up a constant fraction of the body.
\begin{theorem}\label{thm:NP-hard}
Given a star-shaped polytope $S$, it is NP-hard to optimize a linear function over this body for any $\eta(S) < 1$, even if $S$ is well-rounded.
\end{theorem}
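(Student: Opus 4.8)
The plan is to reduce from the NP-hardness of linear optimization over ``$k$-out-of-$m$'' star-shaped polytopes established in \cite{LAN07}; equivalently one may start from a direct reduction of a canonical NP-complete problem (e.g.\ independent set), which produces a bounded star-shaped polytope $S_0 \subseteq \reals^n$ together with a linear functional $c$ whose maximum over $S_0$ encodes the answer to the hard problem. By the very definition of ``$k$-out-of-$m$'', the kernel $K_0$ of $S_0$ is the intersection of all $m$ inequalities with the bounding box, which is full-dimensional; after an affine change of coordinates we may assume $\ball{0}{\rho} \subseteq K_0 \subseteq S_0 \subseteq \ball{0}{R}$ with $\rho,R$ of size polynomial in the instance. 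This already yields hardness for \emph{some} value of the kernel ratio; the remaining work is to make the hardness insensitive to the value of $\eta(\cdot)$ and to enforce well-roundedness.

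To reach an arbitrary prescribed target $\eta_0 \in (0,1)$ (in particular $\eta_0$ arbitrarily close to $1$), I would pad $S_0$ with a tunable convex body glued inside its kernel, working in one extra dimension. Set $S := \bigl(S_0 \times [-\delta,\delta]\bigr) \cup P$, where $P$ is the convex bipyramid (double cone) over the ball of radius $2R$ in the hyperplane $x_{n+1}=0$ with apexes $(0,\dots,0,\pm H)$, and $\delta \ll H$. The steps are: (1) show $S$ is star-shaped with $0 \in K_S$, using that $0$ lies in the kernel $K_0$ of $S_0$ and that $P$ is convex and contains $0$; (2) show $K_S$ contains a scaled copy of $P$ — a point close enough to the $x_{n+1}$-axis and not too near an apex sees the whole slab $S_0 \times [-\delta,\delta]$, because the corresponding slice of $P$ is a ball of radius at least $R \ge \diam(\Pi_{\reals^n}(S_0))$ — so that $\vol(K_S)/\vol(S)$ can be driven to any value: enlarging $H$ (or $R$) makes $P$ and hence the kernel fraction dominate and pushes $\eta(S) \to 1$, while shrinking $H,\delta$ (or simply reusing instances $S_0$ whose own $\eta$ is already small, or attaching an extra thin box to a boundary face of $S_0$ to inflate its non-kernel part) pushes $\eta(S)$ toward $0$, and a continuity argument in the scaling parameter attains exactly $\eta_0$; (3) show that maximizing $\hat c := (c,0)$ over $S$ is attained with the same first-$n$ coordinates as $\max_{S_0} c$, since the extra coordinate carries no reward and the padding $P$, contained in the ball of radius $2R$ in coordinates where $\hat c$ is flat, produces no larger value of $\hat c$ than $S_0 \times \{0\}$ does (translating $c$ so that the relevant optimum of $S_0$ is positive makes this explicit) — hence an oracle for linear optimization over $S$ solves the NP-hard problem; (4) observe $S$ is well-rounded, being sandwiched between concentric balls of radii $\min(\rho,\delta)$ and $\max(2R,H)$, both polynomially bounded, so a final scaling puts it in the desired form.

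The main obstacle is carrying out (2) and (3) simultaneously: the padding that inflates the kernel fraction must be glued so that it neither ``convexifies away'' the hardness of $S_0$ — no point of $P$ may beat the true optimum of $\hat c$ — nor leaves $K_S$ too small to reach $\eta_0$ near $1$. This forces a careful choice of the base radius of $P$ relative to $R$, of $\delta$ relative to $H$, and a precise description of exactly which points of $P$ can see all of $S_0 \times [-\delta,\delta]$ (the obstruction being points near an apex, whose visibility slice shrinks). Once these constants are pinned down, star-shapedness, the kernel-volume tuning, correctness of the reduction, and well-roundedness are all routine.
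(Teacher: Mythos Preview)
Your padding construction has a genuine gap. With the bipyramid $P$ having base radius $2R$ and $\delta \ll H$, the slab $S_0 \times [-\delta,\delta]$ is entirely contained in $P$: at every height $|t|\le \delta$ the cross-section of $P$ is a ball of radius $2R(1-|t|/H) \ge R$, which already contains $S_0 \subseteq \ball{0}{R}$. Hence $S = P$ is convex and optimization over it is trivial. If instead you shrink the base of $P$ so that $S_0$ genuinely protrudes, the opposite problem occurs: a point $p=(y,h)$ with $h>\delta$ cannot see a point $q=(x,s)$ in the slab with $\|x\|$ close to $R$, because at heights $t\in(\delta,h)$ the segment $[p,q]$ must lie in $P$, yet its first-$n$ component has norm roughly $\frac{h-t}{h-s}\|x\|$, which near $t=\delta$ is close to $\|x\|\approx R$, far larger than the radius of $P$'s cross-section there. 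So $K_S$ is trapped in the slab $|x_{n+1}|\le\delta$, and enlarging $H$ sends $\eta(S)\to 0$, not $\to 1$ as you claim. Your step (3) has a parallel issue: with base radius $2R$ the bipyramid extends twice as far as $S_0$ in every direction, so $\max_P \hat c$ can strictly exceed $\max_{S_0} c$ and the oracle answer no longer encodes the hard instance.

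The paper sidesteps all of this by staying in the original dimension. It reduces directly from CLIQUE$(k)$ to a $\binom{k}{2}$-out-of-$|E|$ system, whose kernel contains the box $\{x:x_i\ge 1\ \forall i\}$, and then simply adds upper bounds $x_i\le a$. The kernel then contains $[1,a]^n$ while the whole body sits in $[0,a]^n$, giving $\eta(S')\ge\bigl(\tfrac{a-1}{a}\bigr)^n$; choosing $a$ of order $n/\ln(1/\eta_0)$ pushes this above any prescribed $\eta_0<1$ without touching the objective or the hardness. The key difference is that the paper enlarges the kernel \emph{inside} the existing star-shaped structure rather than gluing an external convex piece that would have to ``see through'' the nonconvex part of $S_0$.
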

This result follows easily from a theorem of Luedtke et al. \cite{LAN07} and we include a proof in the appendix for completeness. Thus, quite unlike convex bodies, linear optimization is NP-hard over star-shaped bodies, but sampling remains tractable. Given the sampling algorithm, we can estimate the volume as follows: given an oracle for the kernel, we can sample from $K_S$ and obtain the volume estimate for $K_S$ using \cite{LV1,LV3}; further, given that $\eta(S)\geq \eta$ we can also estimate $\eta(S)$ using $O(\frac{1}{\eta^2\eps^2})$ samples and output the product of the two as the estimate for volume.

We believe the thin partition approach should be broadly applicable to proving inequalities in convex geometry, especially for inequalities that do not seem reducible to one-dimensional versions (e.g., the KLS hyperplane conjecture \cite{KLS95}).
\bibliographystyle{amsalpha}
\bibliography{ball-walk-ref}

\providecommand{\bysame}{\leavevmode\hbox to3em{\hrulefill}\thinspace}
\providecommand{\MR}{\relax\ifhmode\unskip\space\fi MR }
% \MRhref is called by the amsart/book/proc definition of \MR.
\providecommand{\MRhref}[2]{%
  \href{http://www.ams.org/mathscinet-getitem?mr=#1}{#2}
}
\providecommand{\href}[2]{#2}
\begin{thebibliography}{DFK91}

\bibitem[AK91]{AK}
D.~Applegate and R.~Kannan, \emph{Sampling and integration of near log-concave
  functions}, STOC '91: Proceedings of the twenty-third annual ACM symposium on
  Theory of computing (New York, NY, USA), ACM, 1991, pp.~156--163.

\bibitem[BV04]{BV}
D.~Bertsimas and S.~Vempala, \emph{Solving convex programs by random walks}, J.
  ACM \textbf{51} (2004), no.~4, 540--556.

\bibitem[Cha05]{C05}
T.M. Chan, \emph{Low-dimensional linear programming with violations}, SIAM J.
  Comput. \textbf{34} (2005), no.~4, 879--893.

\bibitem[Cox73]{COX}
H.S.M. Coxeter, \emph{Regular polytopes}, Dover, 1973.

\bibitem[DFK91]{DFK}
M.E. Dyer, A.M. Frieze, and R.~Kannan, \emph{A random polynomial-time algorithm
  for approximating the volume of convex bodies}, J. ACM \textbf{38} (1991),
  no.~1, 1--17.

\bibitem[GLS88]{GLS}
M.~Gr{\"o}tschel, L.~Lov{\'a}sz, and A.~Schrijver, \emph{Geometric algorithms
  and combinatorial optimization}, Springer, 1988.

\bibitem[KLS95]{KLS95}
R.~Kannan, L.~Lov\'{a}sz, and M.~Simonovits, \emph{Isoperimetric problems for
  convex bodies and a localization lemma}, J. Discr. Comput. Geom. \textbf{13}
  (1995), 541--559.

\bibitem[KLS97]{KLS97}
R.~Kannan, L.~Lov\'{a}sz, and M.~Simonovits, \emph{Random walks and an
  {$O^*(n^5)$} volume algorithm for convex bodies}, Random Structures and
  Algorithms \textbf{11} (1997), 1--50.

\bibitem[LS93]{LS93}
L.~Lov\'{a}sz and M.~Simonovits, \emph{Random walks in a convex body and an
  improved volume algorithm}, Random Structures and Alg., vol.~4, 1993,
  pp.~359--412.

\bibitem[LSN07]{LAN07}
J.~Luedtke, A.~Shabbir, and G.~Nemhauser, \emph{An integer programming approach
  for linear programs with probabilistic constraints}, IPCO '07: Proceedings of
  the 12th international conference on Integer Programming and Combinatorial
  Optimization (Berlin, Heidelberg), Springer-Verlag, 2007, pp.~410--423.

\bibitem[LV06a]{LV3}
L.~Lov\'{a}sz and S.~Vempala, \emph{Hit-and-run from a corner}, SIAM J.
  Computing \textbf{35} (2006), 985--1005.

\bibitem[LV06b]{LV2}
\bysame, \emph{Simulated annealing in convex bodies and an {$O^*(n^4)$} volume
  algorithm}, J. Comput. Syst. Sci. \textbf{72} (2006), no.~2, 392--417.

\bibitem[LV07]{LV1}
\bysame, \emph{The geometry of logconcave functions and sampling algorithms},
  Random Struct. Algorithms \textbf{30} (2007), no.~3, 307--358.

\bibitem[Mat94]{M94}
J.~Matou\v{s}ek, \emph{On geometric optimization with few violated
  constraints}, SCG '94: Proceedings of the tenth annual symposium on
  Computational geometry (New York, NY, USA), ACM, 1994, pp.~312--321.

\bibitem[PS85]{PS}
F.P. Preparata and M.I. Shamos, \emph{Computational geometry: An introduction},
  Springer-Verlag, 1985.

\bibitem[RP94]{RW94}
T.~Roos and W.~Peter, \emph{k-violation linear programming}, Inf. Process.
  Lett. \textbf{52} (1994), no.~2, 109--114.

\bibitem[Vai96]{Va}
P.~M. Vaidya, \emph{A new algorithm for minimizing convex functions over convex
  sets}, Math. Program. \textbf{73} (1996), no.~3, 291--341.

\bibitem[YN76]{YN}
D.B. Yudin and A.S. Nemirovski, \emph{Evaluation of the information complexity
  of mathematical programming problems}, Ekonomika i Matematicheskie Metody 12
  (1976), 128--142.

\end{thebibliography}

\section{Appendix}
\subsection{Optimization over star-shaped body}
Here we prove that optimization over a star-shaped body is NP-hard. In particular, we reduce the clique problem to linear optimization over a star-shaped polyhedron.
\begin{definition}
An instance of CLIQUE($k$) is given by a graph $G(V,E)$. The problem is to decide if there exists clique of size greater
than $k$.
\end{definition}
It is well-known that CLIQUE($k$) is NP-hard.

Note that the NP-hardness of optimization over a star-shaped body does not depend on the fraction of volume of the kernel.
\begin{proof} [Proof of Theorem \ref{thm:NP-hard}]
We reduce solving CLIQUE($k$) to minimizing a linear function over a star-shaped body. Given a CLIQUE($k$) instance
$G(V,E)$, define variables $x\in \reals ^n$. For each edge $e=(i,j)$, define $\psi^{e}\in \reals^n$,
\begin{equation*}
\psi^e_{l}=\left\{
\begin{array}{cc}
1 &\text{if } l=i \text{ or } l=j,\\
0 &\text{otherwise.}
\end{array}\right.
\end{equation*}
For every edge $e$, denote the set of constraints given by $x\geq \psi^e$ as a block constraint.
Consider the following formulation:
\begin{align}
\text{Minimize} f(x)=\sum_{i=1}^n x_i, &\text{ satisfying at least } \binom{k}{2} \text{ block constraints among:}\nonumber\\
\forall e\in E,\ &x\geq \psi^e
\end{align}
Define the feasible polyhedron as $S$.

\begin{claim}
The feasible polyhedron $S$ defined by the above formulation is star-shaped.
\end{claim}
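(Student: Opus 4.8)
The plan is to produce an explicit point of the kernel $K_S$, namely the all-ones vector $\mathbf{1} = (1,1,\ldots,1)$, and to verify directly that every point of $S$ can be ``seen'' from it. The first observation is that $\mathbf{1}$ satisfies \emph{every} block constraint: each $\psi^e \in \set{0,1}^n$, so $\mathbf{1} \geq \psi^e$ coordinatewise for all $e \in E$. In particular $\mathbf{1}$ satisfies all $|E| \geq \binom{k}{2}$ block constraints, hence $\mathbf{1} \in S$.

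Next I would exhibit $S$ as a finite union of convex polyhedra all containing $\mathbf{1}$. For a subset $F \subseteq E$ with $|F| = \binom{k}{2}$, set $P_F = \set{x \in \reals^n : x \geq \psi^e \text{ for every } e \in F}$. Each $P_F$ is a finite intersection of coordinate halfspaces, hence convex, and a point lies in $S$ exactly when it satisfies at least $\binom{k}{2}$ block constraints, i.e. exactly when it lies in $P_F$ for some such $F$; thus $S = \bigcup_{F : |F| = \binom{k}{2}} P_F$. By the first observation, $\mathbf{1} \in P_F$ for every $F$.

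Finally I would check $\mathbf{1} \in K_S$. Given any $y \in S$, pick $F$ with $|F| = \binom{k}{2}$ and $y \in P_F$. Since $P_F$ is convex and contains both $\mathbf{1}$ and $y$, the whole segment $[\mathbf{1}, y]$ lies in $P_F \subseteq S$. As $y \in S$ was arbitrary, $\mathbf{1} \in K_S$, so $K_S \neq \emptyset$ and $S$ is star-shaped. I do not expect a genuine obstacle here: the only points needing care are that the block-constraint structure makes each $P_F$ honestly convex (it is, as a finite intersection of halfspaces) and that the single point $\mathbf{1}$ lies in every piece, which is immediate from $\psi^e \leq \mathbf{1}$. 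An equivalent phrasing, matching the ``union of convex sets with a common point'' picture, is that $\set{P_F}$ is a covering of $S$ by convex sets whose intersection contains $\mathbf{1}$.
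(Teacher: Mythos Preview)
Your proof is correct and follows essentially the same approach as the paper: write $S$ as a union of convex pieces (one per choice of $\binom{k}{2}$ block constraints) and observe that the all-ones vector lies in every piece, hence in the kernel. You have simply spelled out in full the segment argument that the paper leaves implicit.
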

\begin{proof}
First note that any subset of block constraints among the given constraints define a convex body. Thus, the feasible
polyhedron is a union of convex bodies. Further, $x=(1\ 1 \dots 1)^T$ satisfies all the constraints and hence, we have a
non-empty kernel.
\end{proof}

\begin{claim}\label{tweakEta}
By adding new constraints, a new feasible star-shaped polyhedron $S'$ can be created such that $\eta(S')$ is a constant.
\end{claim}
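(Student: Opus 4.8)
The plan is to take $S'$ to be the intersection of $S$ with a large axis-parallel box and to read off both $\eta(S')$ and the preservation of NP-hardness from the explicit product structure already exhibited in the construction. Recall that $S=\bigcup_T P_T$, where the union ranges over all $T\subseteq E$ with $|T|\ge\binom{k}{2}$ and $P_T=\{x:x\ge\psi^e\text{ for all }e\in T\}=\{x:x\ge v_T\}$ with $v_T=\max_{e\in T}\psi^e$ taken coordinatewise; each $P_T$ is a shifted copy of the nonnegative orthant. Since $v_T\le c:=\max_{e\in E}\psi^e$ coordinatewise for every valid $T$, the intersection of all block constraints, $\{x:x\ge c\}$, is contained in every $P_T$ and hence lies in $K_S$.

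First I would fix $M$ to be an integer polynomial in $n$; for a prescribed target constant $\eta_0<1$ it suffices that $(1-1/M)^n\ge\eta_0$, which holds once $M=O\!\left(n/\ln(1/\eta_0)\right)$, and in any case I would take $M\ge 2$. Set $S'=S\cap[0,M]^n$, equivalently, adjoin the constraints $0\le x_i\le M$ to every block (the lower bounds are actually implied, since $v_T\ge 0$). Then $S'=\bigcup_T\bigl(P_T\cap[0,M]^n\bigr)$ is a finite union of convex polytopes, each containing $\mathbf 1$, so $S'$ is a star-shaped polyhedron; moreover the box $\prod_{i=1}^n[c_i,M]$ is contained in $K_{S'}$ by the same segment argument as before (if $z\ge c$ and $y\in P_T\cap[0,M]^n$ then $[z,y]\subseteq P_T\cap[0,M]^n$), so $K_{S'}\neq\emptyset$.

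Next I would bound the volume ratio. On one hand $S'\subseteq[0,M]^n$ gives $\vol(S')\le M^n$. On the other hand, since each $c_i\in\{0,1\}$, $\vol(K_{S'})\ge\vol\bigl(\prod_i[c_i,M]\bigr)=\prod_i(M-c_i)\ge(M-1)^n$. Hence $\eta(S')=\vol(K_{S'})/\vol(S')\ge(1-1/M)^n$, which is bounded below by a positive absolute constant, and by choosing $M$ a sufficiently large polynomial in $n$ can be made to exceed any prescribed $\eta_0<1$. Finally, to retain the theorem I would check that clamping to the box does not disturb the reduction: on $P_T\cap[0,M]^n$ the linear function $f(x)=\sum_i x_i$ still attains its minimum at $x=v_T$, a $0/1$ vector lying in $[0,M]^n$, with value $|\supp(v_T)|$, the number of vertices incident to edges of $T$; thus $\min_{S'}f=\min\{|\supp(v_T)|:|T|\ge\binom{k}{2}\}$ equals $k$ precisely when $G$ contains $K_k$ and is at least $k+1$ otherwise, so deciding $\min_{S'}f\le k$ solves CLIQUE$(k)$. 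Well-roundedness follows from $[1,M]^n\subseteq S'\subseteq[0,M]^n\subseteq\ball{0}{M\sqrt n}$ after rescaling.

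I expect the only delicate point to be confirming that adding the box constraints neither enlarges the kernel in a way that would spoil the $\eta$ estimate nor removes the vertices carrying the reduction; both issues dissolve once one notes that the kernel-defining shift $c$ and all the relevant optimal points $v_T$ are $0/1$ vectors, so they survive intersection with $[0,M]^n$ for $M\ge 1$, and that the trivial inclusion $S'\subseteq[0,M]^n$ already supplies the needed upper bound on $\vol(S')$. Everything else is a direct computation on products of intervals.
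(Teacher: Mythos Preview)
Your proposal is correct and follows essentially the same approach as the paper: intersect $S$ with an axis-parallel box $[0,M]^n$ (the paper uses $x_i\le a$ with $a=n$), observe that the box $\prod_i[c_i,M]$ sits inside the kernel, and bound $\eta(S')\ge(M-1)^n/M^n=(1-1/M)^n$. Your version is in fact a bit more thorough, since you also verify that the bounding box does not disturb the optimal $0/1$ vertices carrying the CLIQUE reduction and you note well-roundedness, whereas the paper just records the volume estimate and the specific constant $1/e$ from $M=n$.
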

\begin{proof}
Clearly $x_i\geq 1$ $\forall i\in\{1,...,n\}$ is a feasible convex region contained in $K_S$. Therefore, by adding
constraints $x_i\leq a$ $\forall$ $i\in \{1,...,n\}$, for appropriately chosen value of $a(>1)$, one can make $\eta(S')$
a constant. Note that the set $1\leq x_i\leq a$ $\forall$ $i\in \{1,...,n\}$ is still a feasible convex region contained
in $K_S'$.  Specifically, one can choose $a=n$, to see that \[ \eta(S')\geq \left(\frac{n-1}{n}\right)^n \geq
\frac{1}{e} \]
\end{proof}

\begin{claim} \label{cliqueEquivalence}
There exists a clique of size $k$ in $G(V,E)$, if and only if there exists $x\in S$ such that $f(x)\leq k$.
\end{claim}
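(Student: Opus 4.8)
The plan is to prove the two directions of the biconditional separately, in each case leaning on the $0/1$ structure of the vectors $\psi^e$ and on the observation that any satisfied block constraint $x \ge \psi^e$ forces $x_l \ge 0$ on the coordinates $l \notin e$ (and $x_i \ge 1$ for $i \in e$). For the forward direction, I would take a clique on a vertex set $C$ with $|C| = k$ and let $x$ be its indicator vector: $x_i = 1$ for $i \in C$ and $x_i = 0$ otherwise. For each of the $\binom{k}{2}$ edges $e=(i,j)$ with both endpoints in $C$ one checks $x \ge \psi^e$ directly (the coordinates $i,j$ equal $1 = \psi^e_i = \psi^e_j$, all others equal $0 = \psi^e_l$), so $x$ satisfies at least $\binom{k}{2}$ block constraints and hence $x \in S$; moreover $f(x) = \sum_i x_i = |C| = k$.

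For the reverse direction, suppose $x \in S$ with $f(x) \le k$. Let $F$ be the set of edges whose block constraint is satisfied by $x$, so $|F| \ge \binom{k}{2}$ by definition of $S$, and let $W = \bigcup_{e \in F} e$ be the set of vertices covered by $F$. Every coordinate of $x$ is nonnegative (apply the observation above to any $e \in F$), and $x_i \ge 1$ for each $i \in W$ (apply it to an edge of $F$ containing $i$). Hence $k \ge f(x) = \sum_i x_i \ge \sum_{i \in W} x_i \ge |W|$, so $|W| \le k$. But $(W,F)$ is a graph on at most $k$ vertices carrying at least $\binom{k}{2}$ edges, and since $m \mapsto \binom{m}{2}$ is increasing, $\binom{|W|}{2} \ge |F| \ge \binom{k}{2}$ forces $|W| = k$ and $|F| = \binom{k}{2}$; that is, $F$ contains every edge among $W$, so $W$ is a $k$-clique of $G$.

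I do not expect a genuine obstacle here — the statement is a routine encoding verification — so the ``hard part'' is really just a bookkeeping point: one must notice that $x \ge \psi^e$ contributes the inequalities $x_l \ge 0$ on the coordinates outside $e$, since that is exactly what lets the reverse direction bound $f(x)$ from below by $|W|$ without worrying about negative coordinates. I would also note in passing that we tacitly take $k \ge 2$, so that $\binom{k}{2} \ge 1$ and at least one block constraint is in force; the cases $k \le 1$ are trivial.
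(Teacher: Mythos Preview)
Your proof is correct and follows essentially the same route as the paper. The forward direction is identical (indicator vector of the clique), and in the reverse direction both arguments reduce to the observation that the satisfied edges $F$ cover at most $k$ vertices while $|F| \ge \binom{k}{2}$; you spell out the counting (via $x_l \ge 0$ outside $e$ and $x_i \ge 1$ on covered vertices) more explicitly than the paper, which compresses it into the formula $\min_F \sum_i \max_{e\in F}\psi^e_i$, but the content is the same.
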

\begin{proof}
Suppose the graph has a clique $C(V',E')$ of size $k$. Then, consider $x^*\in \reals^n$ such that $x^*_v=1$ $\forall$
$v\in V'$.  Now, for every edge $e=(i,j)\in E'$, $x^*\geq \psi^e$ is satisfied since, $x^*_i=\psi^e_i=1$ and
$x^*_j=\psi^e_j=1$ and $x^*_k\geq 0$ for $k\in V$, $k\neq i,j$. Since $C$ is a clique, $|E'|=\binom{k}{2}$ and
therefore, $\binom{k}{2}$ block constraints will be satisfied which implies that $x^*\in S$. It is straightforward to
check that $f(x^*)=k$.\\ Suppose there exists $\bar{x}\in S$ such that $f(\bar{x})\leq k$. The objective function $f(x)$
can be rewritten as $min_{F\subseteq E:|F|\geq \binom{k}{2}}\{\sum_{i=1}^n\max_{e\in F}\{\psi^e_i\}\}$. Hence, there
exists $\bar{F}\subseteq E$, $|\bar{F}|\geq\binom{k}{2}$, such that the edges in $\bar{F}$ cover at most $k$ vertices.
Clearly, this is possible only when $\bar{F}$ defines a clique of size $k$.
\end{proof}
Suppose there exists an algorithm $A$ to optimize over a star-shaped body $P$ given as an oracle such that $\eta(P)\geq
c$. Now, given an instance of CLIQUE($k$), we formulate the linear programming problem as above. Following claim
\ref{tweakEta} we can find an appropriate value of $a$ and add constraints such that $\eta(S)\geq c$. Further, it is
easy to make $S$ contain a unit ball based on the value of $a$. Finally, the oracle queries can be answered by checking
the number of block constraints satisfied by the point $x$. Hence, we may use $A$ to minimize $f(x)$. Let $z$ be the
objective value obtained by optimizing using $A$. Using claim \ref{cliqueEquivalence}, it is clear that if $z\leq k$,
CLIQUE($k$) is a ``Yes'' instance, otherwise CLIQUE($k$) is a ``No'' instance.
\end{proof}

\end{document}